\newif\iffullversion
\newif\ifdraft
\newif\ifanonymous
\fullversiontrue

\documentclass{article}

\usepackage[ascii]{inputenc}
\usepackage[T1]{fontenc}
\usepackage{microtype}
\usepackage{paralist}
\usepackage{amsmath,amssymb}
\usepackage[framemethod=tikz]{mdframed}
\usepackage{stmaryrd}
\usepackage{mathpartir}
\usepackage{tocloft}
\usepackage{relsize}
\usepackage{declmath}
\usepackage{makeidx}\makeindex
\usepackage[a4paper]{geometry}
\usepackage{version}
\usepackage{xcolor}
\usepackage{xr}
\usepackage[pdfborderstyle={/S/U/W 0.3}]{hyperref}
\usepackage{cleveref}
\usepackage[backend=bibtex,maxbibnames=100]{biblatex}

\externaldocument[qrhl=>]{qrhl}

\sloppy

\newif\iftexht
\ifx\HCode\undefined\else
\texhttrue
\fi

\input{macros}

\iffullversion
\newcommand\fullshort[2]{#1}
\includeversion{fullversion}
\excludeversion{shortversion}
\else
\newcommand\fullshort[2]{#2}
\excludeversion{fullversion}
\includeversion{shortversion}
\fi

\newcommand\fullonly[1]{\fullshort{#1}{}}
\newcommand\shortonly[1]{\fullshort{}{#1}}

\newtheorem{definition}{Definition}

\newtheorem{lemma}{Lemma}

\crefname{lemma}{Lemma}{Lemmas}
\crefname{definition}{Definition}{Definitions}

\bibliography{ghost}

\typeout{XXX: \meaning\seealso}

\newcommand{\mathcolorbox}[2]{\mathchoice%
  {\colorbox{#1}{$\displaystyle#2$}}%
  {\colorbox{#1}{$\textstyle#2$}}%
  {\colorbox{#1}{$\scriptstyle#2$}}%
  {\colorbox{#1}{$\scriptscriptstyle#2$}}}%

\def\symbolindexmarkhighlight#1{%
  \setlength{\fboxsep}{2pt}%
  {\TextOrMath{\colorbox{gray!20}{#1}}{\mathcolorbox{gray!20}{#1}}}}

\AtBeginDocument{
  \ifx\PreviewMacro\undefined\else
  \PreviewMacro*\footnote
  \PreviewMacro*\fullonly
  \fi
}

\begin{document}

\renewcommand\sectionautorefname{Section}
\renewcommand\subsectionautorefname{Section}
\renewcommand\subsubsectionautorefname{Section}
\newcommand\lemmaautorefname{Lemma}
\newcommand\conjectureautorefname{Conjecture}
\newcommand\definitionautorefname{Definition}
\newcommand\corollaryautorefname{Corollary}
\newcommand\claimautorefname{Claim}

\title{Quantum Hoare Logic with Ghost Variables}
\author{Dominique Unruh\\\small University of Tartu}

\maketitle

\ifdraft
\begin{center}
  \bfseries\Huge\fboxsep=10pt
  \framebox{THIS IS A DRAFT}
\end{center}
\fi

\begin{abstract}
  \noindent
  Quantum Hoare logic allows us to reason about quantum programs.  We
  present an extension of quantum Hoare logic that introduces ``ghost
  variables'' to extend the expressive power of pre-/postconditions.
  Ghost variables are variables that do not actually occur in the
  program and are allowed to have arbitrary quantum states (in a
  sense, they are existentially quantified), and be entangled with
  program variables.  Ghost variables allow us to express properties
  such as the distribution of a program variable or the fact that a
  variable has classical content.  And as a case study, we show how quantum Hoare logic
  with ghost variables can be used to prove the security of the
  quantum one-time pad.
\end{abstract}

\tableofcontents

\section{Introduction}

Designing algorithms is an inherently error-prone process.  This is
especially true for quantum algorithms.  Quantum algorithms can solve
certain computational problems much faster than classical computers
(e.g., \cite{shor,grover,harrow09quantum}), and most
likely will be of great impact once quantum computers are available.
And already now, analyzing quantum algorithms is of practical
relevance when proving the security of cryptosystems against future
quantum attackers (post-quantum cryptography).  But quantum algorithms
are difficult to get right: Quantum mechanics has many properties that
go against human intuition, and quantum programs are difficult to test
and debug since we cannot directly observe their state (except in
small-scale simulations).  A solution to this problem is formal
verification where we prove the behavior of the algorithm.  In the
classical realm, Hoare logic \cite{hoare} (and its close relative, the predicate
transformers \cite{dijkstra75guarded}, which we treat as the same for the sake of
this introduction) has proven to be an invaluable tool for the
analysis of imperative programs.  In Hoare logics, we analyze the
behavior of a program by investigating what ``postcondition'' the
final state of a program satisfies if the initial state satisfies a
certain ``precondition''.  This allows us to formally prove the
behavior of a complex program by first deriving the behavior of
individual lines of code and then modularly plugging these together to
get a description of the behavior of the whole program.  Hoare logics
have been developed also for probabilistic programs
\cite{kozen83probabilistic,McIverMorgan2005} and quantum programs
\cite{dhondt06weakest,ying12floyd,chadha06reasoning,feng07proof,kakutani09logic}.

However, existing Hoare logics still have limitations at to what can be
expressed within a pre-/postcondition.  For example, we cannot express
that the value of a certain variable is uniformly distributed.\footnote{%
  Classical/quantum calculi that use ``expectations'' (predicates that
  do not just hold/not hold, but hold to a certain degree,
  \cite{kozen83probabilistic,McIverMorgan2005,dhondt06weakest,ying12floyd})
  allow us to reason about probabilistic behavior.  But they only
  allow us to reason about the probability of a certain event (or the
  expectation value of a quantity), but not about the distribution of
  value.  (I.e., we can express ``$\xx$ has value $0$ with probability
  at least $\alpha$'' but not ``$\xx$ is uniform''.)}  (E.g., to
state that inside a while-loop, we have the invariant that $\xx$
is a uniformly random bit.)  Or, specific to the quantum case, that a
quantum variable has a certain distribution (e.g., is in the
``completely mixed state'', the quantum analogue to a uniform
distribution). Or
that a quantum variable is not entangled with other variables. Or that
a quantum variable contains classical data at a certain point in the
program.  (Some logics distinguish
classical and quantum variables, e.g., 
\cite{qrhl},
at the costs of more complex semantics.  But this does not allow us to
reason about dynamic properties, e.g., that a variable becomes
classical after a measurement.)

In this article, we present an extension of quantum Hoare logic that
removes these limitations.  We introduce ``ghost variables'' and show
that using ghost variables, we can encode properties such as ``$\xx$
has distribution $D$''
or ``$\xx$
is separable (unentangled)'' or ``$\xx$
is classical''.  (That is, all of these are emerging properties, not
hardcoded into our logic.)  A ghost variable is a variable that does
not actually occur in the program but is introduced merely in a
predicate (pre-/postcondition).  The ghost variable is then allowed to
take any value that makes the predicate true (effectively
existentially quantified).  E.g., the classical predicate
$\xx=\gggg^2$
(where $\gggg$
is a ghost variable) would express that $\xx$
is a square.  Classical ghost variables, however, do not yield any new
expressive power since existential quantifiers are already allowed in
most Hoare logics (so we could state the predicate as
$\exists g.\,\xx=g^2$).
In the quantum setting, however, a ghost variable can have a quantum
state, and possibly be entangled with other variables!  For example,
if $\psi$
is a maximally entangled state between two variables, then looking at
only one of those variables, we would see a uniformly distributed
variable.  And a uniformly distributed variable can always be seen as
part of a system with two variables in state $\psi$.
Thus the predicate ``$\xx\ee$
together are in state $\psi$''
(where $\ee$
is a ghost variable) models the fact that $\xx$
is uniformly distributed.  Similarly we can encode classicality and
separability. Thus, using ghost variables, we can continue reasoning
about quantum programs using Hoare logic, but additionally have
program invariants that state that variables are distributed in
certain ways, are classical, are separable, and more. (We stress that
this even is an advance over the state of the art for classical
programs.  Our logic could be used in the analysis of classical
programs to express the fact that certain variables have certain
distributions.  This is quite unexpected since we would be using a
quantum phenomenon to analyze purely classical programs.)

Additionally the introduction of ghost variables makes the foundations
of the investigated programming language simpler.  Many operations that
one thinks of as elementary (such as random sampling, measurements)
can actually be built from more elementary operations (such as
applying a unitary operation, initializing a quantum register).  Using
ghost variables we can then derive the properties of the derived
operations from the properties of the elementary one.  (E.g., show
that after random sampling the assigned variable has a certain
distribution and is classical.)  This means that the language is
simpler (and thus arguably more foundationally elegant), and the core
set of rules of our logic is quite small (eleven rules).

Finally, we demonstrate that our logic can be applied to problems that
seem out of reach of existing Hoare logics: We analyze quantum
one-time pad encryption and show that it is secure, i.e., that an encrypted
quantum message indeed ``looks random''.

\paragraph{Related work.} Hoare logic was first introduced by Hoare
\cite{hoare}.  A different view was provided by Dijkstra
\cite{dijkstra75guarded} using predicate transformers.  Hoare
logics/predicate transformers were generalized by Kozen
\cite{kozen83probabilistic} (and \cite{McIverMorgan2005} for the case
of combined probabilism/nondeterminism).
Quantum Hoare logics and predicate transformer calculi for quantum programs
have been presented by D'Hondt and Panangaden \cite{dhondt06weakest},
Chadha, Mateus and Sernadas \cite{chadha06reasoning}, Feng, Duan, Ji,
and Ying \cite{feng07proof}, Ying \cite{ying12floyd},
and Kakutani \cite{kakutani09logic}. Unruh \cite{qrhl} gives a quantum
Hoare logic for analyzing pairs of programs (based on the classical
pRHL \cite{certicrypt}).  A different approach is taken by
pictorial calculi where quantum processes can be formalized and
rewritten as diagrams, starting with Abramsky and Coecke
\cite{abramsky-coecke}.  \cite[\fullshort{Example }{Ex.~}4.91]{coecke17picturing}
applies this approach to the classical one-time pad, but only to its
correctness, not its security.  The quantum one-time pad was
discovered by \cite{boykin03qotp,mosca00qotp}.

\paragraph{Organisation.}
\autoref{sec:var.mem.pred} introduces some quantum basics as well as
important notation and auxiliary concepts.  \autoref{sec:qprogs}
introduces \fullonly{syntax and semantics of }the simple imperative quantum
language we use for our calculus.  (And explains how random sampling,
measurements, etc.~are encoded using more basic language features.)
\autoref{sec:hoare.ghosts} introduces our Hoare logic with
ghosts. (The concept of ghost variables and the semantics of Hoare judgments.) 
\autoref{sec:pred.ghosts} shows how important properties such as
distributions of variables, classicality, separability can be encoded
in pre-/postconditions using ghost variables.
\autoref{sec:core.rules} presents and explains the eleven core rules
of the logic from which all other rules can be derived.
\autoref{sec:deriv.rules} \fullshort{derives a number of additional rules}{presents additional
  rules that can be derived} from
the core rules. (For reasoning about derived language features, and
for convenient reasoning about programs with classical variables.)
\autoref{sec:qotp} analyses the quantum one-time pad.
\shortonly{The appendix contains material such as full proofs that will be included
  in the arXiv version of this paper.}

\section{Preliminaries: Variables, Memories, and Predicates}
\label{sec:var.mem.pred}

In this section, we introduce some fundamental concepts and notations
needed for this paper, and recap some of the needed quantum background
as we go along.
When introducing some notation $X$, the place of definition is marked like this: \symbolindexmarkhighlight{$X$}. All symbols are listed in the symbol index.\shortonly{\interfootnotelinepenalty=0 \footnote{Symbol index:
    \fboxsep=0pt
    \renewcommand\symbolindexpage[2]{\hyperref[#2]{p.#1}}%
    \renewcommand\symbolindexentry[4]{%
      \ifx\@nil#4\@nil\else
      \mbox{\colorbox{gray!20}{$#2$}\,#4}%
      \space
      \fi
    }%
  \renewenvironment{thesymbolindex}{}{}%
  \printsymbolindex
}}

\paragraph{Variables.}
Before we introduce the syntax and semantics of programs, we first
need to introduce some basic concepts. A
\emph{variable}\index{variable} is described by a variable name $\xx$
that identifies the variable, and a type $T$.  The \emph{type}%
\index{type!(of a variable)} of $\xx$
is simply the set of all (classical) values the variable can
take. E.g., a variable might have type $\bit$,
or $\setN$.\fullonly{\footnote{We
  stress that we do not assume that the type is a finite or even a
  countable set. Consequently, the Hilbert spaces considered in this
  paper are not necessarily finite dimensional or even
  separable. However, all results can be informally understood by
  thinking of all sets as finite and hence of all Hilbert spaces as
  $\setC^N$
  for suitable $N\in\setN$.}}
We will assume that there is always some distinguished value in $T$
that we denote \symbolindexmark0{$0$}.
  
We distinguish between three kinds of variables: program variables
\symbolindexmark\xx{$\xx,\yy,\zz$}
(that can occur in programs), entangled ghost variables
\symbolindexmark\ee{$\ee$},
and unentangled ghost variables \symbolindexmark\uu{$\uu$}. We write
\symbolindexmark\gggg{$\gggg$} for variables that are entangled ghosts or unentangled ghosts.
(The meaning of these kinds will become clear later, for now they
simply form a partition of the set of all variables.)  We use
\symbolindexmark\vv{$\vv,\ww$} when we do not wish to specify the kind of variable.

Lists\pagelabel{page:variable.conventions} or sets of variables will be denoted \symbolindexmark\VV{$\VV,\WW$} or
\symbolindexmark\XX{$\XX,\YY,\ZZ$} or
\symbolindexmark\EE{$\EE$} or
\symbolindexmark\UU{$\UU$} or
\symbolindexmark\GG{$\GG$}
(depending on the kind of variable they contain). 
Given a list
$\VV=\vv_1\dots\vv_n$
of variables, we say its \emph{type} \index{type!(of a list of
  variables)} is $T_{1}\times\dots\times T_{n}$
if $T_{i}$ is the type of $\vv_i$.
We write \symbolindexmark\progvars{$\progvars\VV$} for the program
variables in $\VV$.

\paragraph{Memories and quantum states.}
An \emph{assignment}\index{assignment} assigns to each variable a
classical value. Formally, for a set $\VV$,
the \emph{assignments over $\VV$}
are all functions \symbolindexmark\mm{$\mm$}
with domain $\VV$
such that: for all $\xx\in\VV$
with type $T_\xx$,
$\mm(\xx)\in T_\xx$.
That is, assignments can represent the content of classical memories.

To model quantum memories, we simply consider superpositions of
assignments: A \emph{(pure) quantum memory}%
\index{pure quantum memory}%
\index{quantum memory!(pure)}%
\index{memory!(pure) quantum} is a superposition of
assignments. Formally, \symbolindexmark\elltwov{$\elltwov\VV$},
the set of all quantum memories over $\VV$,
is the Hilbert space with basis\fullonly{\footnote{When we say ``basis'', we
  always mean orthonormal basis.}} $\{\ket\mm\}_\mm$
where $\mm$
ranges over all assignments over~$\VV$. Here \symbolindexmark\ket{$\ket\mm$} simply denotes the
basis vector labeled $\mm$, we often write $\ket\mm_{\VV}$ to stress
which space we are talking about.
Intuitively, a quantum memory $\psi$
over $\VV$
with $\norm\psi=1$
represents a state a quantum computer with variables $\VV$
could be in.  \fullonly{(We do not require $\norm\psi=1$
for a quantum memories unless this is explicitly mentioned.)}

We also consider quantum states over
arbitrary sets $X$ (as opposed to sets of assignments).
Namely, \symbolindexmark\elltwo{$\elltwo X$}
denotes the Hilbert space with orthonormal basis $\{\ket x\}_{x\in X}$.
(In that notation, $\elltwov\VV$
is simply $\elltwo A$
where $A$
is the set of all assignments on $\VV$.)
Elements $\psi\in\elltwo X$
with $\norm\psi=1$
represent quantum states.

We often treat elements of $\elltwo T$ and $\elltwov\VV$ interchangeably if $T$ is
the type of $\VV$ since there is a natural isorphism between those spaces.

The tensor product $\otimes$ combines two quantum states
$\psi\in\elltwo X,\phi\in\elltwo Y$ into a joint system
$\psi\otimes\phi\in\elltwo{X\times Y}$. In the case of quantum
memories $\psi,\phi$ over $\VV,\WW$, respectively,
$\psi\otimes\phi\in\elltwov{\VV\WW}$.
(And $\psi\otimes\phi=\phi\otimes\psi$ since we are composing
``named'' systems.)

For a vector (or operator) $a$, we write \symbolindexmark\adj{$\adj
  a$} for its adjoint.  (In the finite dimensional case, the adjoint
is simply the conjugate transpose of a vector/matrix. The literature
also knows the notation $a^\dagger$.)  The adjoint of
$\ket x$ is written $\bra x$.
We abbreviate \symbolindexmarkonly\proj$\symbolindexmarkhighlight{\proj{\psi}}:=\psi\adj\psi$.
This is the projector onto $\psi$ when $\norm\psi=1$.

\paragraph{Mixed quantum memories.}
In many situations, we need to model probabilistic quantum states
(e.g., a quantum state that is $\ket0$
with probability $\frac12$
and $\ket1$
with probability $\frac12$). This is modeled using \emph{mixed
  states}%
\index{mixed state}%
\index{state!mixed} (a.k.a.~\emph{density operators}%
\index{density operator}%
\index{operator!density}). Having state $\psi_i$
with probability $p_i$
is represented by the operator $\rho:=\sum_i p_i\proj{\psi_i}$.\footnote{Mathematically,
  these are the set of all positive Hermitian trace-class operators on
  $\elltwo X$.
  The requirement ``trace-class'' ensures that the trace exists and
  can be ignored in the finite-dimensional case.}\fullonly{\,\footnote{Sums without
    index set are always assumed to have an arbitrary (not necessarily finite or even
    countable) index set.  In the case of sums of vectors in a Hilbert space,
    convergence is with respect to the Hilbert space norm, and in the case of sums
    of positive operators, the convergence is with respect to the Loewner order.}}
Then $\rho$
encodes all observable information about the distribution of the
quantum state (that is, two distributions of quantum states have the
same~$\rho$
iff they cannot be distinguished by any physical process).  And
$\tr\rho$
is the total probability $\sum_ip_i$.
(That is, $\tr\rho=1$
unless we wish to represent the outcome of a non-terminating program.)
We will often need
to consider mixed states of quantum memories (i.e., mixed states with
underlying Hilbert space $\elltwov\VV$). We call them
\emph{mixed (quantum) memories} over $\VV$.%
\index{mixed (quantum) memory}%
\index{quantum memory!mixed}%
\index{memory!mixed (quantum)}

For a mixed memory $\rho$
over $\VV\supseteq\WW$ the \emph{partial trace}%
\index{trace!partial}%
\index{partial trace}
\symbolindexmark\partr{$\partr{\WW}\rho$}\label{page:partr}
is the result of throwing away variables $\WW$
(i.e., it is a mixed memory over $\VV\setminus\WW$).
Formally, $\partr{\WW}$ is defined as the continuous linear function satisfying
$\partr{\WW}(\sigma\otimes\tau):=\sigma\cdot\tr\tau$ where $\tau$ is
an operator over $\WW$.

A mixed memory $\rho$
is \emph{$(\VV,\WW)$-separable}\index{separable}
(i.e., not entangled between $\VV$ and $\WW$) iff it can be written as
$\rho=\sum_i\rho_i\otimes\rho_i'$ for mixed memories
$\rho_i,\rho_i'$ over $\VV,\WW$, respectively.

\paragraph{Operations on quantum states.}
An operation on a quantum state is modeled by an isometry $U$
on $\elltwo X$.\footnote{That
  is, a norm-preserving linear operation. Often, one models quantum
  operations as unitaries instead because in the finite-dimensional
  case an isometry is automatically unitary. However, in the
  infinite-dimensional case, unitaries are unnecessarily
  restrictive. Consider, e.g., the isometry $\ket i\mapsto\ket{i+1}$
  with $i\in\setN$
  which is a perfectly valid quantum operation but not a unitary.} If
we apply such an operation on a mixed state $\rho$,
the result is $U\rho \adj U$.

Most often,
isometries will occur in the context of operations that are performed
on a single variable or list of variables, i.e., an isometry $U$
on $\elltwov\VV$.
Then $U$
can also be applied to $\elltwov\WW$
with $\WW\supseteq\VV$:
we identify $U$
with $U\otimes\id_{\WW\setminus\VV}$.
Furthermore, if $\VV$
has type $T$,
then an isometry $U$
on $\elltwo T$
can be seen as an isometry on $\elltwov\VV$
since we identify $\elltwo T$ and $\elltwov\VV$. If we want to make
$\WW$ explicit, we write \symbolindexmark\opon{$\opon
  U\WW$}\label{page:opon} for the isometry $U$ on $\elltwov\VV$.
For example, if $U$ is a $2\times 2$-matrix and $\xx$ has type bit,
then $\opon U\xx$ can be applied to quantum memories over $\xx\yy$,
acting on $\xx$ only.
This notation is not limited to isometries, of course, but applies
to other operators, too.
(By ``operator'' we always mean a bounded linear operator in this paper.)

An important operation is \symbolindexmark\CNOT{$\CNOT$}
on $\VV\WW$
(where $\VV,\WW$
both have type $\bits n$),
defined by
$\CNOT(\ket x_{\VV}\otimes\ket y_{\WW}):=\ket
x_{\VV}\otimes\ket{x\oplus y}_{\WW}$. (That is, we allow $\CNOT$
not only on single bits but bitstrings.)

\paragraph{Predicates.} In Hoare judgments, we need to express
properties of the state of a quantum memory. In this paper, we only
consider properties that are closed under superpositions of quantum
states.  That is, a \emph{predicate}\index{predicate} \symbolindexmark\PA{$\PA,\PB,\PC$} on $\VV$
is a subspace\fullonly{\footnote{By subspace, we always mean closed
  subspaces. In the finite-dimensional case, all subspaces are closed
  anyway.}} of $\elltwov\VV$.
The syntax of predicates will not be fixed to a specific language,
i.e., any mathematically expressible subspace is a valid
predicate. But we fix some syntactic sugar for expressing
predicates succinctly:
\begin{itemize}
\item \symbolindexmark\top{$\top$},
  \symbolindexmark\zero{$\zero$}:
  The predicate that is always satisfied is denoted
  $\top:=\elltwov\VV$. The predicate that is never satisfied is $\bot:=\{0\}$.
\item \symbolindexmark\wedge{$\PA\wedge\PB$},\
  \ \symbolindexmarkhighlight{$\PA,\PB$},\
  \ \symbolindexmark\vee{$\PA\vee \PB$}:\pagelabel{page:vee}
  To model that both $\PA$
  and $\PB$
  hold (conjunction), we simply use the intersection of $\PA$
  and $\PB$
  (as sets). That is, we write $\PA\land \PB$
  to denote $\PA\cap \PB$.
  We will often also write this as ``$\PA,\PB$''
  instead of ``$\PA\land \PB$''
  where ``$,$''
  is understood to bind less closely than ``$\vee$''.
  To model that $\PA$
  or $\PB$
  holds (disjunction), we use the sum $\PA+\PB$
  (the space of all linear combinations from $\PA$
  and $\PB$).
  That is, we write $\PA\vee \PB$
  to denote $\PA+\PB$.
  \fullonly{This choice of connectives corresponds to Birkhoff-von Neumann
  quantum logic.  But we stress that we are not restricted to using
  only these connectives, we may use any well-defined operations on
  subspaces. These are just the ones that will turn out useful in the
  remainder of the paper.}
\item \symbolindexmark\oppred{$\oppred M\PA$}:
  For an operator $M$
  (typically an isometry or a projector) and a predicate $\PA$,
  we write $\oppred M\PA$
  for the subspace $\{M\psi:\psi\in\PA\}$.
  Thus, $\oppred M\PA$
  is satisfied if we apply $M$ to a quantum memory in $\PA$.
\item \symbolindexmark\quantin{$\XX\quantin S$}:
  For some variable list $\XX$
  of type $T$,
  we may wish to express the fact that the value of $\XX$
  lies in a certain subspace $S\subseteq\elltwo T$.
  Notice that $S$
  can be naturally seen as a subspace of $\elltwov\VV$.
  Then $\VV$
  being in a state in $S$
  means that the state of the whole quantum memory is in
  $S\otimes\elltwov{\VV\setminus\XX}$.
  We introduce the syntactic sugar ``$\XX\quantin S$''
  to denote $S\otimes\elltwov{\VV\setminus\XX}$.
  Note that even though it looks like a Boolean expression, it actually is a subspace
  of $\elltwov\VV$ and thus a predicate in our sense.
\item \symbolindexmark\quanteq{$\WW\quanteq\psi$}.\pagelabel{page:quanteq}
  Often, we will also want to express that the variables $\WW$
  are in a specific state $\psi$.
  This means the variables lie in $\SPAN\{\psi\}$,
  using the previous syntactic sugar we can write this as
  $\WW\quantin\SPAN\{\psi\}$.
  We introduce the abbreviation $\WW\quanteq\psi$
  for this common case.
\item \symbolindexmark\psubst{$\psubst\PA\WW{\WW'}$}\pagelabel{page:def:psubst}.
  Renaming variables $\WW$
  to $\WW'$
  in predicate $\PA$.
  This assumes $\PA$
  is a predicate over $\VV\supseteq\WW$,
  that $\WW\cap(\VV\setminus\WW)=\varnothing$,
  and that $\WW$
  and $\WW'$
  have the same type. Then $\psubst\PA\WW{\WW'}$
  is the predicate over $\paren{\VV\setminus\WW}\dotcup\WW'$
  defined by
  $\oppred{\Urename\WW{\WW'}}\PA$
  where \symbolindexmark\Urename{$\Urename\WW{\WW'}$}\pagelabel{page:def:Urename}
  is the natural isomorphism between $\elltwov{\WW}$
  and $\elltwov{\WW'}$,
  i.e., $\Urename\WW{\WW'}\ket i_{\WW}:=\ket i_{\WW'}$
  for all $i$.
  \fullonly{(${\Urename\WW{\WW'}}$
  renames $\WW$ to $\WW'$ when applied to a quantum memory.)}
\end{itemize}
An example $\PA_{\mathrm{example}}$ of a predicate would be:
\[
  \pb\paren{\xx\yy\quanteq\fsq\ket{00}+\fsq\ket{11}} \vee
  \pb\paren{\xx\quanteq\ket0},\
  \zz\quanteq\ket1
\]
This means, intuitively, that $\xx\yy$
are maximally entangled (in state $\fsq\ket{00}+\fsq\ket{11}$
up to a global phase factor) or $\xx$
has state $\ket0$, and in addition $\zz$ has state $\ket1$.

Note that our predicates seems to be lacking in expressiveness
compared with the predicates, e.g., from \cite{qrhl}: It is not
possible to parameterize the predicate using the values of classical
variables. E.g., we cannot write $\yy\quanteq\ket{\xx}$
where $\xx$
is a classical variable. This is because our semantics does not
hardcode the distinction between classical and quantum variables (all
variables are quantum by default). But, in \autoref{sec:deriv:classical}, we will see
how to express classical variables as a derived feature, and introduce
additional syntactic sugar that allows us to recover the full
expressiveness of the predicates from \cite{qrhl}.

Given a predicate $\PA$,
we will often wish to indicate which variables it talks about, i.e.,
what are its \emph{free variables}%
\index{free variables}%
\index{variables!free}. Since our definition of predicates is semantic
(i.e., we are not limited to predicates expressed using the syntax
above) we cannot simply speak about the variables occurring in the
expression describing $\PA$.
Instead, we say $\PA$
contains only variables from $\WW$
(written: \symbolindexmarkonly\fv$\symbolindexmarkhighlight{\fv(\PA)}\subseteq\WW$)
iff there exists a subspace $S$
such that $\PA=(\WW\quantin S)$.
(That is, if $\PA$
can be described solely in terms of the content of the variables
$\WW$.)
Note that there is a certain abuse of notation here: We formally
defined ``$\fv(\PA)\subseteq\WW$'',
but we do not define $\fv(\PA)$;
$\fv(\PA)\subseteq\WW$
should formally just be seen as an abbreviation for
$\exists S.\,\PA=(\WW\quantin S)$.\fullonly{\footnote{In
  fact, defining $\fv(A)$
  is possible only if there is a smallest set $\WW$
  such that $\exists S.\ \PA=(\WW\quantin S)$.
  This is not necessarily the case. For example, assume that $\VV$
  is infinite, let $\PA$
  be the space spanned by all $\ket \mm$
  where $\mm(\vv)\neq0$
  for only finitely many $\vv\in\VV$.  Then $\PA=(\WW\quantin S)$
  for any cofinite $\WW$
  (by defining $S$
  as the span of all $\ket\mm$
  with only finitely many $\mm(\vv)\neq0$),
  but $\PA\neq(\WW\quantin S)$
  whenever $\WW$
  is not cofinite.  Since there is no smallest cofinite $\WW$,
  $\fv(A)$
  cannot be defined, but we can still meaningfully use the notation
  $\fv(A)\subseteq\WW$.
}} For example, $\fv(\PA_{\mathrm{example}})\subseteq\{\xx,\yy,\zz\}$.
Similarly, we treat $\fv(\PA)\cap\WW=\varnothing$.

If\pagelabel{page:pred.identify} two predicates $\PA$
and $\PA'$
on variables $\VV$
and $\WW$,
respectively, satisfy
$\PA\otimes\elltwov{\WW\setminus\VV}=\PA'\otimes\elltwov{\VV\setminus\WW}$,
then $\PA$
and $\PA'$
intuitively describe the same property on the shared variables
$\VV\cap\WW$
(and say nothing about the remaining variables). Therefore we will
\emph{identify}%
\index{identify predicates}%
\index{predicate!(identifying)}
such $\PA$
and $\PA'$
throughout this paper.  In particular, any predicate $\PA$
can be seen as a predicate $\PA'$ on $\fv(\PA)$.

\section{Quantum programs}
\label{sec:qprogs}

\paragraph{Syntax.}
We will now define a small imperative quantum language. The set of all
programs is described by the following syntax:\symbolindexmarkonly\bc
\[
  \symbolindexmarkhighlight{\bc,\bd} ::=
  \apply U\XX
  \ |\
  \init \xx
  \ |\
  \ifte\yy\bc\bd
  \ |\
  \while\yy\bc
  \ |\
  \bc;\bd
  \ |\
  \SKIP
\]
Here $\XX$
is a list of program variables, $\xx$
a program variable, $\yy$
a program variable of type $\bit$,
and $U$
an isometry on $\elltwov\XX$
(there is no fixed set of allowed isometries, any isometry that we can
describe can be used here).\footnote{We will assume throughout the
  paper that all programs satisfy those well-typedness constraints. In
  particular, rules may implicitly impose type constraints on the
  variables and constants occurring in them by this assumption.}

Intuitively, \symbolindexmark\apply{$\apply U\XX$}
means that the operation $U$
is applied to the quantum variables~$\XX$.
E.g., $\apply H\xx$
would apply the Hadamard gate to the variable $\xx$
(we assume that $H$
denote the Hadamard matrix). It is important that we can apply $U$
to several variables $\XX$ simultaneously,
otherwise no entanglement between variables can ever be produced.

The program \symbolindexmark\init{$\init\xx$}
initializes $\xx$
with the quantum state $\ket0$.
(Remember that we assumed that every variable type contains a
distinguished element $0$.) 

The program \symbolindexmark\ifte{$\ifte\yy\bc\bd$}
will measure the qubit $\yy$,
and, if the outcome is $1$, execute~$\bc$, otherwise execute~$\bd$.

The program \symbolindexmark\while{$\while\yy\bc$}
measures $\yy$,
and if the outcome is $1$,
it executes $\bc$. This is repeated until the outcome is $0$.

Finally, \symbolindexmark;{$\bc;\bd$}
executes $\bc$
and then $\bd$.
And \symbolindexmark\SKIP{$\SKIP$}
does nothing. We will always implicitly treat ``$;$''\pagelabel{page:seq.assoc.skip.neutral}
as associative and $\SKIP$ as its neutral element.

\fullonly{\paragraph{On the minimalism of the language.}
This language is intentionally minimalistic. It seems to lack a number
of features that are present, e.g., in \cite{qrhl}: Initializing
variables with states other than $\ket0$.
Performing measurements. Probabilism (i.e., random
sampling). Parameterizing operations/states using classical variables
(e.g., $\apply{R_\xx}\yy$
where $\xx$
is a variable of type $\setR$,
and $R_\theta$
a rotation by angle $\theta$).
All these features are very important if we want to model anything but
the simplest programs. Yet, as we will see, these features are not
actually lacking. Using syntactic sugar (introduced in this section
and in \autoref{sec:deriv:classical}), we can recover all those
features. Keeping the language minimal and encoding all advanced
features allows us to get a much simpler core logic. Rules for working
with the advanced features can then be derived from the core features.}

\begin{shortversion}
    \paragraph{Semantics.}
The \emph{denotational semantics}%
\index{denotational semantics}%
\index{semantics!denotational} of our programs $\bc$
are represented as functions
\symbolindexmark\denot{$\denot\bc$} on the mixed memories over $\XXall$.
That is, if the program $\bc$ is in state $\rho$ initially,
then it is in state $\denot\bc(\rho)$ after execution.
The actual definition of the semantics $\denot\bc$ is straightforward and unsurprising
and deferred to \autoref{app:semantics}.
\end{shortversion}

\paragraph{Semantics.}
The \emph{denotational semantics}%
\index{denotational semantics}%
\index{semantics!denotational} of our programs $\bc$
are represented as functions
\symbolindexmark\denot{$\denot\bc$} on the mixed memories over
$\XXall$, defined
by recursion on the structure of the programs.  Here \symbolindexmark\XXall{$\XXall$}\pagelabel{page:XXall}
is a fixed set of program variables, and we will assume that
$\fv(\bc)\subseteq\XXall$
for all programs in this paper.\fullonly{\footnote{We fix some set $\XXall$
  in order to avoid a more cumbersome notation $\denot\bc^{\XX}$
  where we explicitly indicate the set $\XX$ of program
  variables with respect to which the semantics is defined.}}
The obvious cases are $\denot\SKIP:=\id$
and $\denot{\bc;\bd}:=\denot\bd\circ\denot\bc$.
And application of an isometry $U$
is also fairly straightforward given the syntactic sugar introduced
above:
$\denot{\apply U\XX}(\rho):=\paren{\opon U\XX}\rho\adj{\paren{\opon
    U\XX}}$.

Initialization of a quantum variable is slightly more complicated:
$\init\xx$
initializes the variable $\xx$
with $\ket0$,
which is the same as removing $\xx$,
and then creating a new variable $\xx$
with content $\ket0$.
Removing $\xx$
is done by the operation $\partr\xx$
(partial trace, see \autopageref{page:partr}). And creating a new variable $\xx$
is done by the operation $\otimes\proj{\ket0_{\xx}}$.
Thus we define $\denot{\init\xx}(\rho):=\partr\xx\rho\otimes\proj{\ket0_{\xx}}$.

The if-command first performs a measurement and then branches. A
measurement is described by one projector for each outcome. In our
case, $\opon{\proj{\ket i}}\yy$
corresponds to outcome $i=0,1$.
We then have that the state after measurement (without
renormalization) is
$\paren{\opon{\proj{\ket i}}\yy}\rho\adj{\paren{\opon{\proj{\ket i}}\yy}}$. Then $\bc$
or $\bd$
is applied to that state and the resulting states are added together
to get the final mixed state. Altogether:\symbolindexmarkonly\restrict
\begin{multline*}
  \denot{\ifte\yy\bc\bd}(\rho)
  :=
  \denot\bc\pb\paren{\restrict1(\rho)}
  +
  \denot\bd\pb\paren{\restrict0(\rho)}
  \\
  \text{where}\qquad
  \symbolindexmarkhighlight{\restrict i(\rho)}:=
  \paren{\opon{\proj{\ket i}}\yy}\rho\adj{\paren{\opon{\proj{\ket i}}\yy}}
\end{multline*}
While-commands are modeled similar: In an execution of a while
statement, we have $n\geq 0$
iterations of ``measure with outcome $1$
and run $\bc$''
(which applies $\denot\bc\circ\restrict1$
to the state), followed by ``measure with outcome $0$''
(which applies $\restrict0$
to the state). Adding all those branches up, we get the
definition:
\begin{equation*}
  \denot{\while\yy\bc}(\rho) :=
  \sum_{n=0}^\infty \restrict0 \pb\paren{ (\denot\bc\circ\restrict1)^n(\rho) }
\end{equation*}

\paragraph{Syntactic sugar.} \fullshort{To work productively with the minimal
language from above, we introduce some syntactic sugar:}{Our language
is intentionally minimalistic because we can derive more complex
statements using syntactic sugar:}
\begin{itemize}
\item \symbolindexmark\inits{$\inits\XX\psi$}
  (initialization with quantum state): To assign a quantum state
  $\psi\in\elltwo T$
  to variables $\XX$ of type $T$,
  we have to do the following: We fix an isometry $U_\psi$
  with $U\ket{0,\dots,0}=\psi$.
  And then we initialize all $\XX$
  with $\ket0$
  and apply $U$.
  That is, $\inits\XX\psi$
  abbreviates ``$\init{\xx_1};\dots;\init{\xx_n};\apply{U_\psi}\XX$''
  with $\xx_1\dots\xx_n:=\XX$
  and some arbitrary isometry $U_\psi\ket{0,\dots,0}:=\psi$.
  (Such $U_\psi$ is not unique but always exists.)
\item \symbolindexmark\initc{$\initc\XX z$}
  (classical initialization / assign-statement): This is short for
  $\inits\XX{\ket z}$. (We assume that $z$ is in the type of $\XX$.)
\item \symbolindexmark\measure{$\measure\YY\XX$}\pagelabel{page:measure}
  (measurement).  We wish to simulate a measurement in the
  computational basis using the commands from our minimal language.
  It is a well-known (and easy to check) fact that measuring $\XX$
  and assigning $\ket z$
  to $\YY$
  (where $z$
  is the outcome) is equivalent to performing a CNOT from $\XX$ to a
  $\ket0$-initialized $\YY$ and to a $\ket0$-initialized auxiliary
  register and discarding the auxiliary register. This can be expressed
  using our language: We define $\measure\YY\XX$
  to denote
  ``$\inits\YY{\ket0};\inits{\zz}{\ket0};\apply\CNOT{\XX\YY};\apply{\CNOT}{\XX\zz};\inits{\zz}{\ket0}$''
  where $\zz$
  is a fresh variable of the same type as $\YY$
  and $\XX$.
  (I.e., $\zz$ is a variable that is used nowhere else.)

  Similarly, we can also define a measurement of $\XX$
  that does not remember the outcome. (That is, its effect is merely
  to change the measured variables.) We write \symbolindexmark\measuref{$\measuref\XX$}\pagelabel{page:measuref}
  to denote
  ``$\inits{\zz}{\ket0};\apply{\CNOT}{\XX\zz};\inits{\zz}{\ket0}$''
  where $\zz$
  is a fresh variable of the same type as $\XX$.

  \fullonly{(Of course, it is also possible to model measurements other than
  computational basis measurements. To implement a projective
  measurement described by projectors $\{P_i\}_i$,
  we simply replace $\CNOT$
  by the unitary $\sum_i P_i\otimes U_{\oplus i}$
  where $U_{\oplus i}:\ket z\mapsto\ket{z\oplus i}$.
  We do not fix a specific syntax for this construction.)}
\item \symbolindexmark\sample{$\sample\XX D$}
  (random sampling). Here $D$
  is a discrete probability distribution over $T$,
  the type of $\XX$.
  Sampling for $D$
  is easily done by initializing $\XX$
  in the state \symbolindexmarkonly\psiD$\symbolindexmarkhighlight{\psiD D}:=\sum_{i\in T}\sqrt{D(i)}\ket{i}$
  and then measuring that state in the computational basis (leaving
  $\XX$
  in state $\ket i$
  with probability $D(i)$).
  That is, $\sample\XX D$
  is shorthand for ``$\inits\XX{\psiD D};\measuref\XX$.''
\end{itemize}

\section{Hoare Logic with Ghosts}
\label{sec:hoare.ghosts}

\paragraph{Recap Hoare logic.}\pagelabel{page:recap.hoare}
Before we introduce our Hoare logic with ghost variables, we quickly
recap regular quantum Hoare logic.\fullonly{\footnote{Strictly speaking,
  ``recap'' is not the right word since as far as we know this variant
  of quantum Hoare logic has not explicitly been spelled out in the
  literature. However, we still consider it folklore because it is a
  relatively simple generalization of \cite{brunet03dynamic} (allowing for more
  general programs, density operator based semantics, and changing the
  presentation from weakest precondition transformers to Hoare
  triples), it is a simplification of \cite{qrhl} (which
  considers pairs of programs instead of single programs), and it is a
  special case of \cite{ying12floyd} (by considering only ``strict''
  predicates there, i.e., predicates that are projectors, we get a
  logic that is roughly the same).}}  Intuitively, a Hoare triple
$\hl \PA\bc\PB$
means: If the initial state $\rho$
of the program $\bc$
satisfies~$\PA$,
and we run the program $\PB$,
then the final state $\denot\bc(\rho)$
satisfies~$\PB$.
Since the states of programs in our semantics are mixed 
memories $\rho$
(i.e., density operators), we need to first define what it means for a
mixed memory to satisfy a predicate. For this, the notion of
support of a density operator comes in handy: A density operator can
always be represented as $\rho=\sum_i\proj{\psi_i}$,
and intuitively this means that $\rho$
is a mixture of states $\psi_i$.
(But note that this decomposition is not unique!) Then
\symbolindexmarkonly\suppo$\symbolindexmarkhighlight{\suppo\rho}:=\SPAN\{\psi_i\}_i$
is simply the subspace spanned by all the vectors that constitute
$\rho$.\fullonly{\footnote{The
  usual formal definition of $\suppo\rho$
  is $\suppo\rho:=\im P$
  where $P$
  is the smallest projector such that $P\rho\adj P=\rho$.
  (This definition has the advantage of not requiring a specific
  choice of decomposition of $\rho$.)
  But it is easy to verify that this definition coincides with
  $\SPAN\{\psi_i\}_i$.}}
(Fortunately, this definition turns out to be independent of the
choice of $\psi_i$.)
Now, if $\PA$
is a predicate over the program variables $\XXall$
(formally: a subspace of $\elltwov\XXall$),
and $\rho$
is a mixed memory over $\XXall$,
then $\rho$
satisfies $\PA$
(written $\sats\rho\PA$)
iff $\suppo\rho\subseteq\PA$.
(I.e., iff $\rho$
is a mixture of quantum memories in $A$.)
With this notation (that will be changed somewhat later to accommodate
ghost variables), we can formally define $\hl\PA\bc\PB$
as: for all $\sats\rho\PA$
we have $\sats{\denot\bc(\rho)}\PB$. (We call $\PA$ the \emph{precondition}%
\index{precondition}
and $\PB$ the \emph{postcondition}%
\index{postcondition}.)
For this logic, we can then prove a number of rules that allow us to
derive the behavior of a complex quantum program from the behavior of
its elementary building blocks. For example, the (very easy to
prove) \rulerefx{Seq} rule shows that $\hl\PA\bc\PB$
and $\hl\PB\bd\PC$
implies $\hl\PA{\bc;\bd}\PC$.
This allows us to break down the analysis of a sequence of commands
into an analysis of the individual commands. (And similar rules exist
for quantum operations, while-loops, etc.)

However, this Hoare logic is somewhat limited in its expressivity. For
example, we cannot express the fact that the variable $\xx$
is uniformly randomly distributed. Say $\bc$
samples $\sample\xx D$
where $D$
is the uniform distribution. Then the final state is
$\rho=\sum_{i\in T}\frac1{\abs T}\proj{\ket i}$
where $T$
is the type of~$\xx$,
and $\suppo\rho=\SPAN\{\ket i\}_{i\in T}=\elltwov{\xx}=\top$.
So the only postcondition for this $\bc$
is $\top$,
the trivial postcondition. So the above quantum Hoare logic forgets
about the distribution of $\xx$
and remembers only what values have non-zero probability. (I.e.,
probabilism is treated as possibilistic nondeterminism.)  For similar
reasons, we cannot express, say, that $\xx$
is classical (e.g., after the program $\measuref\xx$).

Extensions of this basic quantum Hoare logic can make some statements
about probabilities: Quantum Hoare logic whose pre-/postconditions are
expectations \cite{dhondt06weakest,ying12floyd} can, e.g., express
that a certain predicate will hold with a certain probability. And
\cite{qrhl} can express that the outputs of two programs are
identical, even taking into account their distributions. But both
still lack the possibility of stating, as part of a
pre-/postcondition, e.g., that a variable has a particular
distribution.
\fullonly{See also discussion in \autoref{sec:qotp.tricky} for
further discussion on the limitations of those~logics.}

\paragraph{Ghost variables.}\index{ghost variable}%
\index{variable!ghost}%
\index{ghost|shortfor{ghost variable}}
Our solution to this problem is the introduction of ``ghost
variables'' (which will will often simply call ``ghosts'' for
brevity). In our context, a ghost variable is a variable that cannot
occur in the program (nor in the memory of the program) but only in
predicates. The intuitive meaning of a ghost variable is that it can
take any value that makes a predicate true. To illustrate the idea,
let us forget about quantum programs for a moment and consider the
classical case: For example, the classical postcondition $\xx=\gggg^2$
would mean that after the execution of
the program, the variable $\xx$
contains the square of $\gggg$,
if $\xx$
and $\gggg$
are both program variables of type $\setN$. But if $\gggg$
is a ghost, then $\xx=\gggg^2$
is true whenever there is some way to assign an integer to $\gggg$
that makes $\xx=\gggg^2$
true. In other words, the postcondition $\xx=\gggg^2$
is equivalent to just saying that $\xx$
is a square.  Now, in the classical case this is not very impressive:
$\xx=\gggg^2$
is just equivalent to $\exists z.\, \xx=z^2$.
And any other predicate involving ghosts can also be
rewritten into a regular predicate by using existential
quantifiers. So, at least if we allow existential quantifiers in
predicates (and there is no reason why we should not), \emph{ghost
  variables are useless for classical Hoare logic}.\fullonly{\footnote{Which is,
  most likely, why they have not been considered before.}} However,
this argument does not apply in the quantum case. A quantum ghost
variable cannot just be simulated using an existential quantifier
(e.g., because ghost variables might be entangled with quantum
variables).

So, how can we formalize ghost variables in the quantum setting?  A
classical memory $\mm$
(containing only program variables) satisfies a predicate $\PA$
involving ghosts iff there exists a larger memory
$\mm^\circ$ containing both program and ghost variables such that
$\mm^\circ$
satisfies $\PA$,
and $\mm$
is the result of removing all ghosts from $\mm^\circ$.
The quantum analogue of removing variables is the partial trace. That
is, if we have a mixed memory $\rho$
on $\XX\GG$,
then $\rho_{\XX}:=\partr\GG\rho$
is the result of removing all ghosts $\GG$
from~$\rho$.
Thus, we are ready for our first tentative definition: $\sats\rho\PA$
iff there exists a density operator $\rho^\circ$ on $\XX\GG$ such that $\partr\GG\rho^\circ=\rho$ and
$\suppo\rho^\circ\subseteq\PA$.

Note that in the previous definition, the program variables $\XX$
and the ghost variables $\GG$
can be entangled in arbitrary ways (since we put no restriction on
$\rho^\circ$).
However, there is a different possibility of defining ghost variables:
We could additionally require that $\rho^\circ$
is $(\XX,\GG)$-separable.
That would mean that ghost and program variables may not be
entangled. This will lead to a very different behavior of ghost
variables. It will turn out that both variants have their uses, so in
our logic we will simply consider both variants: We consider two kinds
of ghost variables, entangled ghost variables $\EE$
and unentangled ghost variables $\UU$.
That is, $\PA$
may contain both entangled and unentangled ghosts, and
$\rho^\circ$
is required to be $(\XX\EE,\UU)$-separable.
This means that the variables $\UU$
cannot be entangled with the program variables $\XX$,
but the variable $\EE$ can be!

\paragraph{Formal definitions.}
We can now mold all these ideas into a formal definition:
\begin{definition}[Satisfying a predicate with ghosts]\label{def:satisfy}
  Let $\rho$
  be a mixed memory over $\XX$.
  Let $\PA$
  be a predicate over $\XX\EE\UU$.
  Then a density operator $\rho$ over $\XX$
  \emph{satisfies}%
  \index{satisfy!a predicate} $\PA$
  (written \symbolindexmark\sats{$\sats\rho\PA$})
  iff there exists a $(\XX\EE,\UU)$-separable mixed memory $\rho^\circ$ over $\XX\EE\UU$ such that
  $\suppo\rho^\circ\subseteq\PA$ and $\partr{\EE\UU}\rho^\circ=\rho$.
\end{definition}

Recall that we use different letters for different kinds of
variables (cf.~\autopageref{page:variable.conventions}),
so the above definition implicitly assumes that
$\XX$
are program variables, $\EE$
are entangled ghost variables, and $\UU$
are unentangled ghost variables. In the remainder of this work, we
assume that these conventions are understood.
Note that if $\EE=\UU=\varnothing$,
then \autoref{def:satisfy} specializes to the definition given in the recap above,
namely $\sats\rho\PA\iff\suppo\rho\subseteq\PA$.

Given the definition of satisfying a predicate, it is straightforward
to define our Hoare logic:
\begin{definition}[Hoare logic with ghosts]\label{def:hoare}
  Let $\PA$
  be a predicate over $\XXall\EE\UU$,
  and $\PB$
  a predicate over $\XXall\EE'\UU'$,
  and $\bc$ a program.

  Then \symbolindexmark\hl{$\hl\PA\bc\PB$}
  iff for all mixed memories $\rho$
  over $\XXall$
  with $\sats\rho\PA$, we have that $\sats{\denot\bc(\rho)}\PB$.
\end{definition}
Note that $\PA$
and $\PB$
do not need to use the same ghosts. Ghosts are local
to the interpretation of a given predicate. In particular, if ghost
variables are chosen in a particular way when showing $\sats\rho\PA$,
this does not mean that they have to be chosen in a related way in
$\sats\rho\PB$!

\paragraph{Example.}\pagelabel{page:pred.example} Consider the following situation. We have two
variables $\xx,\yy$
of type $\bit$.
Initially, they are entangled in the state
$\psi:=\fsq\ket{00}+\fsq\ket{11}$.
Now we initialize $\yy$
with $\ket0$.
What do we know about $\xx$?
The initial state of $\xx,\yy$
is represented by the predicate $\xx\yy\quanteq\psi$
in our notation (see \autopageref{page:quanteq}). So, we are asking for a predicate
$\PB$
involving $\xx$
such that $\hl{\xx\yy\quanteq\psi}{\init\yy}\PB$
holds. Since $\xx$
is initially entangled with $\yy$,
and $\yy$
is ``overwritten'' (thus effectively deleted), $\xx$
is afterwards entangled with a ghost (in a sense, the ghost
of the deleted $\yy$).
That is, $\PB=(\xx\ee\quanteq\psi)$.
(The \rulerefx{Init} rule below will allow us to make this reasoning
rigorous.) And, as we will see later, $\xx\ee\quanteq\psi$
means that $\xx$
is a uniformly random bit. So we have derived that after deleting half
of an EPR pair $\psi$,
we get a uniformly random bit. (This matches what we know about EPR
pairs.)

\paragraph{About the sets of ghost variables.} The careful reader may
have noticed that there is an ambiguity in our notation.  On
\autopageref{page:pred.identify}, we said that we identify predicates
over different sets of variables as long as they coincide on their
free variables. But that means that for a state $\rho$
over $\XX$,
when interpreting $\sats\rho\PA$
according to \autoref{def:satisfy}, we do not know what the sets $\EE,\UU$
are in that definition (we only know that
$\EE\UU\supseteq\fv(\PA)\setminus\XX$).
Fortunately, the following lemma shows that the choice of $\EE\UU$
is irrelevant, so our notational choice is justified. (Note that there
is no ambiguity concerning the set $\XX$
of program variables since that set is determined by the type of the
mixed memory $\rho$.)

\begin{lemma}[Irrelevance of sets of ghosts]\label{lemma:ghost.sets}
  Let $\rho$
  be a mixed memory over $\XX$.
  Let $\sats{\rho}{^{\EE\UU}\PA}$
  denote $\sats\rho\PA$
  (as in \autoref{def:satisfy}) where $\PA$
  is interpreted as a predicate over $\XX\EE\UU$.
  
  Assume that $\EE_1\UU_1, \EE_2\UU_2 \supseteq\fv(\PA)\setminus\XX$.
  Then $\sats\rho{^{\EE_1\UU_1}\PA}$ iff $\sats\rho{^{\EE_2\UU_2}\PA}$.
\end{lemma}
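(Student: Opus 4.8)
The plan is to show that satisfaction is insensitive to padding the ghost variables with fresh variables lying outside $\fv(\PA)$. First I would reduce the statement to a monotone version. Since $(\EE_1\cup\EE_2)(\UU_1\cup\UU_2)$ contains both $\EE_1\UU_1$ and $\EE_2\UU_2$ and still contains $\fv(\PA)\setminus\XX$, it suffices by transitivity (going through this common superset) to prove the equivalence whenever one ghost set contains the other, say $\EE'\supseteq\EE$ and $\UU'\supseteq\UU$ with both covering $\fv(\PA)\setminus\XX$. Write $\EE_+:=\EE'\setminus\EE$ and $\UU_+:=\UU'\setminus\UU$ for the extra ghosts. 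Applied to the two interpretations of $\PA$ over the nested sets $\XX\EE\UU\subseteq\XX\EE'\UU'$, the identification convention introduced on \autopageref{page:pred.identify} simplifies (the reverse set difference being empty) to the single equation ${}^{\EE'\UU'}\PA={}^{\EE\UU}\PA\otimes\elltwov{\EE_+\UU_+}$, which I will use throughout; note this never requires $\fv(\PA)$ itself to be well defined.

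For the direction that adds ghosts, suppose $\sats{\rho}{^{\EE\UU}\PA}$ with $(\XX\EE,\UU)$-separable witness $\rho^\circ$ over $\XX\EE\UU$ as in \autoref{def:satisfy}. I would take $\sigma^\circ:=\rho^\circ\otimes\proj{\ket0_{\EE_+\UU_+}}$ over $\XX\EE'\UU'$. Separability is preserved since tensoring each summand of $\rho^\circ$ with the product state $\proj{\ket0_{\EE_+}}\otimes\proj{\ket0_{\UU_+}}$ keeps the split between $\XX\EE'$ and $\UU'$; the support condition follows from $\suppo\sigma^\circ=\suppo\rho^\circ\otimes\SPAN\{\ket0_{\EE_+\UU_+}\}\subseteq{}^{\EE\UU}\PA\otimes\elltwov{\EE_+\UU_+}={}^{\EE'\UU'}\PA$; and $\partr{\EE'\UU'}\sigma^\circ=\partr{\EE\UU}\pb\paren{\partr{\EE_+\UU_+}\sigma^\circ}=\partr{\EE\UU}\rho^\circ=\rho$, using that $\ket0$ is normalised. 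These checks are routine.

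For the direction that removes ghosts, suppose $\sats{\rho}{^{\EE'\UU'}\PA}$ with witness $\sigma^\circ$ over $\XX\EE'\UU'$, and set $\rho^\circ:=\partr{\EE_+\UU_+}\sigma^\circ$. The trace condition $\partr{\EE\UU}\rho^\circ=\partr{\EE'\UU'}\sigma^\circ=\rho$ and the preservation of $(\XX\EE,\UU)$-separability both follow from the fact that the partial trace factors through a tensor product: writing $\sigma^\circ=\sum_i\alpha_i\otimes\beta_i$ with $\alpha_i$ over $\XX\EE'$ and $\beta_i$ over $\UU'$, tracing $\EE_+$ out of the first factor and $\UU_+$ out of the second gives $\rho^\circ=\sum_i(\partr{\EE_+}\alpha_i)\otimes(\partr{\UU_+}\beta_i)$, still separable between $\XX\EE$ and $\UU$. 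The one genuinely nontrivial point is the support condition: I must deduce $\suppo\rho^\circ\subseteq{}^{\EE\UU}\PA$ from $\suppo\sigma^\circ\subseteq{}^{\EE\UU}\PA\otimes\elltwov{\EE_+\UU_+}$.

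This is the main obstacle, and I would isolate it as the elementary fact that if $\suppo\sigma\subseteq S\otimes\elltwov\WW$ then $\suppo\pb\paren{\partr\WW\sigma}\subseteq S$ for any closed subspace $S$. To prove it, write $\sigma=\sum_i\proj{\psi_i}$ with each $\psi_i\in S\otimes\elltwov\WW$; expanding $\psi_i=\sum_j\phi_{ij}\otimes e_j$ along an orthonormal basis $\{e_j\}$ of $\elltwov\WW$ forces $\phi_{ij}\in S$, and orthonormality yields $\partr\WW\proj{\psi_i}=\sum_j\proj{\phi_{ij}}$, whose support lies in $S$. Since the support of a (possibly infinite) sum of positive operators is contained in the closed span of the individual supports, and $S$ is closed, $\suppo\pb\paren{\partr\WW\sigma}\subseteq S$ follows. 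Applying this with $S={}^{\EE\UU}\PA$ and $\WW=\EE_+\UU_+$ finishes the removing direction; combining both directions with the monotone reduction proves the lemma.
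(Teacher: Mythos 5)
Your proof is correct and follows essentially the same route as the paper's: both directions reduce to tracing out the surplus ghosts and tensoring in a fresh normalized state on the missing ones, with the only substantive point being that $\suppo(\partr{\WW}\sigma)\subseteq S$ whenever $\suppo\sigma\subseteq S\otimes\elltwov{\WW}$. The paper merely organizes this differently --- by symmetry plus a single composite step through the intersection $\XX(\EE_1\cap\EE_2)(\UU_1\cap\UU_2)$ rather than your monotone reduction through the union --- and it asserts the support-under-partial-trace fact without proof at this point (it is established later as \autoref{lemma:suppo.partr}), whereas your explicit basis-expansion argument for it is sound.
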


\begin{proof}
  For clarity, we write $\PA_{\VV}$ when we interpret $\PA$ as a  predicate over $\VV$.

  Due to symmetry, we only need to prove
  $\sats\rho{^{\EE_1\UU_1}\PA}\implies\sats\rho{^{\EE_2\UU_2}\PA}$.
  Assume $\sats\rho{^{\EE_1\UU_1}\PA}$.
  Then there exists a $(\XX\EE_1,\UU_1)$-separable
  mixed memory $\rho^\circ$
  over $\XX\EE_1\UU_1$
  with $\suppo\rho^\circ\subseteq\PA_{\XX\EE_1\UU_1}$
  and $\partr{\EE_1\UU_1}\rho^\circ=\rho$.
  
  Let
  $\tilde\rho^\circ:=\partr{\EE_1\UU_1\setminus\EE_2\UU_2}\rho^\circ$.
  Then $\tilde\rho^\circ$
  is a mixed memory over
  $\XX(\EE_1\cap\EE_2)(\UU_1\cap\UU_2)$.
  And since $\rho^\circ$
  was $(\XX\EE_1,\UU_1)$-separable,
  $\tilde\rho^\circ$
  is $(\XX(\EE_1\cap\EE_2),\UU_1\cap\UU_2)$-separable.
  And since $\suppo\rho^\circ\subseteq\PA_{\XX\EE_1\UU_1}$,
  we have
  $\suppo\tilde\rho^\circ\subseteq\PA_{\XX(\EE_1\cap\EE_2)(\UU_1\cap\UU_2)}$. (The last step uses that
  $\XX(\EE_1\cap\EE_2)(\UU_1\cap\UU_2)\supseteq\fv(\PA)$.)

  Let
  $\hat\rho^\circ:=\tilde\rho^\circ \otimes
  \sigma_{\EE_2\setminus\EE_1} \otimes \sigma_{\UU_2\setminus\UU_1}$
  where $\sigma_{\EE_2\setminus\EE_1}$
  and $ \sigma_{\UU_2\setminus\UU_1}$
  are arbitrary mixed memories of trace $1$
  over $\EE_2\setminus\EE_1$
  and $\UU_2\setminus\UU_1$,
  respectively.  Then $\hat\rho^\circ$
  is is
  $(\XX\EE_2,\UU_2)=(\XX(\EE_1\cap\EE_2)(\EE_2\setminus\EE_1),(\UU_1\cap\UU_2)(\UU_2\setminus\UU_1))$-separable
  since $\tilde\rho^\circ$
  is $(\XX(\EE_1\cap\EE_2),\UU_1\cap\UU_2)$-separable.
  And
  $\suppo\hat\rho^\circ
  \subseteq\suppo\tilde\rho^\circ\otimes\elltwov{\EE_2\UU_2\setminus\EE_1\UU_1}
  \subseteq\PA_{\XX(\EE_1\cap\EE_2)(\UU_1\cap\UU_2)}\otimes\elltwov{\EE_2\UU_2\setminus\EE_1\UU_1}
  =\PA_{\XX\EE_2\UU_2} $.
  Furthermore
  \[
    \partr{\EE_2\UU_2} \hat\rho^\circ
    =
    \partr{(\EE_1\cap\EE_2)(\UU_1\cap\UU_2)} \tilde\rho^\circ
    =
    \partr{(\EE_1\cap\EE_2)(\UU_1\cap\UU_2)}
    \partr{\EE_1\UU_1\setminus\EE_2\UU_2} \rho^\circ
    =
    \partr{\EE_1\UU_1} \rho^\circ = \rho.
  \]
  Thus $\sats\rho{^{\EE_2\UU_2}\PA}$.
\end{proof}

\section{Predicates with Ghosts}
\label{sec:pred.ghosts}

In this section, we describe three important kinds of predicates that
can be expressed using ghosts. 

\subsection{Variables with a certain distribution}

First, we show that entangled ghosts can be used to express
that a variable has a certain distribution. Given a distribution $D$
on $T$,
we define
\symbolindexmarkonly\psiDD$\symbolindexmarkhighlight{\psiDD D}:=\sum_i
\sqrt{D(i)}\, \ket{i}\otimes\ket i$, a state on two variables of type $T$.
This state has the property that, if we erase (or measure) the second
part, we get a $D$-distributed
classical value $\ket i$
in the first part. So, if $\xx\ee$
is in state $\psiDD D$,
then, since $\ee$
is a ghost, $\ee$
is, in effect, erased. Thus the predicate $\xx\ee=\psiDD D$
effectively means that $\xx$
is $D$-distributed. Thus we introduce syntactic sugar for predicates:
\begin{itemize}
\item \symbolindexmark\distr{$\distr\XX D$}
  ($\XX$
  is $D$-distributed).
  $\distr\XX D$
  is short for $\XX\ee \quanteq \psiDD D$
  where $\ee$
  is a fresh entangled ghost (i.e., one that does not occur
  elsewhere in the predicate we are formulating) of the same type as $\XX$.
\item \symbolindexmark\uniform{$\uniform\XX$}
  ($\XX$
  is uniformly distributed).  This is short for $\distr\XX D$
  where $D$ is the uniform distribution on the type of $\XX$.
\end{itemize}

(As a special case, if $D$
is the uniform distribution on a single bit, then $\psiDD D$
is the state $\psi$
from the example in the previous section.  So the
postcondition $\PB$
in that example can indeed be written as $\uniform\xx$
as was already hinted there.)

So far, we gave only a relatively hand-waving explanation why
$\distr\XX D$
means that $\XX$
is $D$-distributed. But the following lemma makes this formal:
\begin{lemma}[Distribution predicates]\label{lemma:distrib}
  Let $D$
  be a distribution over $T$.
  Let $\rho$
  be a mixed memory over $\YY$.
  Let $\XX\subseteq\YY$
  have type $T$.
  Let $\rho_D:=\sum_i D(i)\proj{\ket i}$
  be a mixed memory over $\XX$.
  (I.e., $\rho_D$
  contains a $D$-distributed
  classical value $i$.) Then the following are equivalent:
  \begin{compactitem}
    \item $\sats\rho{\distr\XX D}$.
    \item There exists a mixed memory $\rho'$
      over $\YY\setminus\XX$ such that $\rho=\rho'\otimes\rho_D$.
  \end{compactitem}
\end{lemma}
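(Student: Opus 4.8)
The plan is to unfold both definitions and reduce the equivalence to a single partial-trace computation, the one delicate step being a support-factorization argument. Write $\ee$ for the fresh entangled ghost, so that $\distr\XX D$ is by definition the predicate $\XX\ee\quanteq\psiDD D$, which, interpreted over $\YY\ee$, equals $\SPAN\{\psiDD D\}_{\XX\ee}\otimes\elltwov{\YY\setminus\XX}$ (identifying predicates across variable sets as on \autopageref{page:pred.identify}, which \autoref{lemma:ghost.sets} guarantees is unambiguous). Since there are no unentangled ghosts, the $(\YY\ee,\varnothing)$-separability required by \autoref{def:satisfy} is automatic, so $\sats\rho{\distr\XX D}$ says exactly: there is a mixed memory $\rho^\circ$ over $\YY\ee$ with $\partr\ee\rho^\circ=\rho$ and $\suppo\rho^\circ\subseteq\SPAN\{\psiDD D\}\otimes\elltwov{\YY\setminus\XX}$.

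For the first bullet implying the second, the key step is that $\SPAN\{\psiDD D\}$ is one-dimensional and $\psiDD D$ is a unit vector (as $\sum_i D(i)=1$), so every vector in $\SPAN\{\psiDD D\}\otimes\elltwov{\YY\setminus\XX}$ has the form $\psiDD D\otimes\phi$. Taking the spectral decomposition $\rho^\circ=\sum_k\lambda_k\proj{e_k}$ with orthonormal $e_k\in\suppo\rho^\circ$, each $e_k$ factors as $\psiDD D\otimes\phi_k$ with the $\phi_k$ orthonormal, whence $\rho^\circ=\proj{\psiDD D}_{\XX\ee}\otimes\rho'$ for $\rho':=\sum_k\lambda_k\proj{\phi_k}$, a mixed memory over $\YY\setminus\XX$. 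A direct computation gives $\partr\ee\proj{\psiDD D}=\sum_i D(i)\proj{\ket i}=\rho_D$ (the cross terms vanish since $\tr(\ket i_\ee\bra j_\ee)=\delta_{ij}$), so $\rho=\partr\ee\rho^\circ=\rho_D\otimes\rho'=\rho'\otimes\rho_D$, which is the second bullet.

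The converse just runs this backwards: given $\rho=\rho'\otimes\rho_D$, set $\rho^\circ:=\proj{\psiDD D}_{\XX\ee}\otimes\rho'$. Its support is $\SPAN\{\psiDD D\}\otimes\suppo\rho'\subseteq\SPAN\{\psiDD D\}\otimes\elltwov{\YY\setminus\XX}$, the separability is again automatic, and the same partial-trace computation yields $\partr\ee\rho^\circ=\rho_D\otimes\rho'=\rho$, so $\rho^\circ$ witnesses $\sats\rho{\distr\XX D}$. I expect the only genuine obstacle to be the factorization of $\rho^\circ$ through the one-dimensional factor $\SPAN\{\psiDD D\}$; stated cleanly via the spectral decomposition it holds verbatim in the infinite-dimensional setting, after which everything reduces to the single elementary identity $\partr\ee\proj{\psiDD D}=\rho_D$.
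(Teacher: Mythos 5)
Your proof is correct and follows essentially the same route as the paper's: factor the support of the witness $\rho^\circ$ through the one-dimensional space $\SPAN\{\psiDD D\}$, and reduce everything to the computation $\partr\ee\proj{\psiDD D}=\rho_D$. The only (harmless) difference is that you invoke \autoref{lemma:ghost.sets} to fix $\EE=\{\ee\}$, $\UU=\varnothing$ up front, which makes the separability requirement vacuous and spares you the padding factors $\sigma_{\EE\setminus\ee}\otimes\sigma_{\UU}$ that the paper carries through both directions.
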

In other words, $\distr\XX D$
means that $\XX$
is $D$-distributed \emph{and independent of other variables}.

\begin{proof}
  First, we check that $\rho_D$
  is the result of removing variable $\ee$
  from the quantum memory $\psiDD D$ 
  (we interpret $\psiDD D$ as a quantum memory over $\XX\ee$):
  \begin{equation}
    \label{eq:psiDD.rhoD}
    \partr{\ee}\proj{\psiDD D} =
    \partr{\ee}\pB\proj{\sum\nolimits_i\sqrt{D(i)}\ket i_{\XX}\otimes\ket i_{\ee}}
    = \sum\nolimits_i D(i) \proj{\ket i_{\XX}} = \rho_D.
  \end{equation}
  $\distr\XX D$
  is syntactic sugar for $\XX\ee\quanteq\psiDD D$
  and thus is a predicate over some variables $\YY\EE\UU$
  with $\XX\subseteq\YY$ and $\ee\in\EE$.

  \medskip\noindent\textbf{``$\pmb\Longrightarrow$'':}
  First we show that if $\sats\rho{\distr\XX D}$,
  then $\rho=\rho'\otimes\rho_D$
  for some $\rho'$.
  Since $\sats\rho{\distr\XX D}$,
  there exists a $\rho^\circ$
  over $\YY\EE\UU$
  with $\suppo\rho^\circ\subseteq\distr\XX D$ and $\partr{\EE\UU}\rho^\circ=\rho$. Thus $\rho^\circ=\sum_i\proj{\psi_i}$ for some $\psi_i$ with
  \[
    \psi_i \in \distr\XX D
    =
    (\XX\ee\quanteq\psiDD D)
    =
    \SPAN\braces{\psiDD D}\otimes\elltwov{\XX\ee^\complement}.
  \]
  Hence $\psi_i=\psiDD D\otimes\psi_i'$ for some $\psi_i'$ over $\YY\EE\UU\setminus\XX\ee$. Thus
  \[
    \rho
    =
    \partr{\EE\UU}\rho^\circ
    =
    \partr{\EE\UU}\sum_i\proj{\psiDD D}\otimes\proj{\psi_i'}
    =
    \partr{\ee}\proj{\psiDD D} \otimes
    \underbrace{
      \partr{\EE\UU\setminus\ee}\sum\nolimits_i\proj{\psi_i'}
    }_{=:\rho'}
    \eqrefrel{eq:psiDD.rhoD}=
    \rho_D\otimes\rho'.
  \]
  This shows the $\Longrightarrow$-direction.
  
  \medskip\noindent\textbf{``$\pmb\Longleftarrow$'':}
  We next show that if $\rho=\rho'\otimes\rho_D$,
  then $\sats\rho{\distr\XX D}$.
  Let
  $\rho^\circ := \rho' \otimes \proj{\psiDD D} \otimes
  \sigma_{\EE\setminus\ee} \otimes\sigma_{\UU}$ for some arbitrary
  mixed memories $\sigma_{\EE\setminus\ee},\sigma_{\UU}$
  of trace $1$
  on $\EE\setminus\ee$
  and $\UU$,
  respectively. The $\rho^\circ$
  is $(\YY\EE,\UU)$-separable
  and
  $\suppo\rho^\circ\subseteq \SPAN\braces{\psiDD
    D}\otimes\pb\elltwov{\paren{\XX\ee}^\complement} = \distr\XX D$. Furthermore,
  $\partr{\EE\UU}\rho^\circ=\rho'\otimes\partr\ee\proj{\psiDD
    D}\eqrefrel{eq:psiDD.rhoD}=\rho'\otimes\rho_D$.
  This shows the $\Longleftarrow$-direction.
\end{proof}

We will see examples of this predicate in the \ruleref{Sample} for
sampling statements ($\sample\yy D$), and in our analysis of the quantum one-time-pad in
\autoref{sec:sec.qotp}.

\subsection{Separable variables}

A concept specific to the quantum setting is for a variable to be
separable, i.e., not entangled with any other variables. (But a
separable variable may be probabilistically correlated!)

Expressing that a variable $\xx$
is separable seems, at the first glance, impossible to do using
predicates (that are modeled as subspaces): Such a predicate would
have to contain, e.g., the states $\ket{00}_{\xx\yy}$
and $\ket{11}_{\xx\yy}$
(since in both cases, $\xx$
and $\yy$
are equal but not entangled) but not the state
$\fsq\ket{00}_{\xx\yy}+\fsq\ket{11}_{\xx\yy}$.
But that would mean that the predicate is not closed under linear
combinations, hence not a subspace.

Yet, by introducing ghosts, we can model separable variables.
To understand how, we first need to recall a concept from
\cite{qrhl}, namely the quantum equality (between two variables):

\begin{definition}[Quantum equality {\cite[Defs.~\ref*{qrhl=>def:quanteq}, \ref*{qrhl=>def:quanteq.simple}]{qrhl}}]%
  \index{quantum equality}%
  \index{equality!quantum}
  Let $\WW,\WW'\subseteq\VV$
  be disjoint lists of quantum variables. ($\WW$ and $\WW'$ have the
  same type.)  Let \symbolindexmark\SWAP{$\SWAP$}
  be the unitary that swaps the content of $\WW$
  and $\WW'$.
  That is,
  $\SWAP\pb\paren{\ket i_{\WW}\otimes\ket j_{\WW'}\otimes \psi''} :=
  \ket j_{\WW}\otimes\ket i_{\WW'}\otimes \psi'' $ for all $i,j\in T$
  and all quantum memories $\psi''$ over $\VV\setminus\WW\WW'$.

  Then \symbolindexmark\QUANTEQ{$\WW\QUANTEQ\WW'$}
  is the set of all quantum memories $\psi$
  on $\VV$
  such that $\SWAP\psi=\psi$.\fullonly{\footnote{The
    original definition of $\QUANTEQ$
    is more general because we can write something like
    $U\xx\QUANTEQ U'\yy$
    meaning that $\xx$
    and $\yy$
    are equal up to operations $U,V$.
    Since we will no explicitly make use of this in this paper, we
    only gave the definition of the special case here. But the more
    general definition is, of course, also admissible in predicates as
    defined here.}}
\end{definition}

\fullonly{(\cite{qrhl} also presents a number of useful lemmas for rewriting and
  simplifying predicates involving $\QUANTEQ$.)}

In other words, we consider $\WW$
and $\WW'$
to have equal content ($\WW\QUANTEQ\WW'$)
iff a state is invariant under swapping $\WW$
and $\WW'$.
Now, it turns out that if $\WW\QUANTEQ\WW'$,
but $\WW$
and $\WW'$
are not entangled with each other, then $\WW$
and $\WW'$
also cannot be entangled with any other variables:
\begin{lemma}[Quantum equality \& separable states {\cite[Coro.~\ref*{qrhl=>coro:quanteq}]{qrhl}}]\label{lemma:quanteq.sep}
  Fix quantum memories $\psi$
  over $\VV\supseteq\WW$
  and $\psi'$
  over $\VV'\supseteq\WW'$.
  Then $\psi\otimes\psi'\in(\WW\QUANTEQ\WW')$
  iff $\psi$
  and $\psi'$
  are of the form $\psi=\psi_{\WW}\otimes\psi_{\VV\setminus\WW}$
  and $\psi'=\psi'_{\WW'}\otimes\psi'_{\VV'\setminus\WW'}$
  and $\psi_{\WW}=\psi'_{\WW'}$\,\footnote{Up to renaming of variables, formally
    $\psi_{\WW}=\Urename{\WW'}{\WW}\,\psi'_{\WW'}$}
  for some quantum memories
  $\psi_{\WW},\psi_{\VV\setminus\WW},\psi_{\WW'},\psi_{\VV'\setminus\WW'}$
  over $\WW,\VV\setminus\WW,\WW',\VV'\setminus\WW'$, respectively.
\end{lemma}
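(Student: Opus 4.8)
The plan is to handle the two directions separately, with ``$\Leftarrow$'' being a short computation and ``$\Rightarrow$'' carrying the real content. Throughout I abbreviate $A:=\VV\setminus\WW$ and $B:=\VV'\setminus\WW'$, and (as in the footnote to the statement) I silently use the renaming isomorphism $\Urename{\WW'}\WW$ to identify $\elltwov\WW$ with $\elltwov{\WW'}$, so that I may speak of a common space in which $\psi_\WW$ and $\psi'_{\WW'}$ are compared.

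For ``$\Leftarrow$'' I would substitute the assumed factorizations into $\psi\otimes\psi'=\psi_\WW\otimes\psi_A\otimes\psi'_{\WW'}\otimes\psi'_B$, reorder the named subsystems so that the $\WW$- and $\WW'$-parts are adjacent, and apply $\SWAP$. Since $\SWAP$ only exchanges the contents of $\WW$ and $\WW'$ and fixes everything on $AB$, and since the hypothesis $\psi_\WW=\Urename{\WW'}\WW\psi'_{\WW'}$ says precisely that these two contents agree up to the renaming $\SWAP$ performs, the swap leaves $\psi\otimes\psi'$ unchanged, i.e.\ $\psi\otimes\psi'\in(\WW\QUANTEQ\WW')$.

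For ``$\Rightarrow$'' I would argue via Schmidt decomposition across the two cuts. Write $\psi=\sum_k\lambda_k\,e_k\otimes f_k$ with $e_k\in\elltwov\WW$, $f_k\in\elltwov A$, and $\psi'=\sum_l\mu_l\,g_l\otimes h_l$ with $g_l\in\elltwov{\WW'}$, $h_l\in\elltwov B$, where $\lambda_k,\mu_l>0$ and $\{e_k\},\{f_k\},\{g_l\},\{h_l\}$ are orthonormal families. Tensoring and applying $\SWAP$ gives
\[
  \SWAP\Bigl(\sum_{k,l}\lambda_k\mu_l\, e_k\otimes g_l\otimes f_k\otimes h_l\Bigr)
  =\sum_{k,l}\lambda_k\mu_l\, g_l\otimes e_k\otimes f_k\otimes h_l .
\]
The family $\{f_k\otimes h_l\}_{k,l}$ is orthonormal in $\elltwov{AB}$, so the invariance $\SWAP(\psi\otimes\psi')=\psi\otimes\psi'$ can be read off coefficient by coefficient on $AB$; cancelling $\lambda_k\mu_l>0$ yields, for all $k,l$, the identity $e_k\otimes g_l=g_l\otimes e_k$ in the common tensor square of $\elltwov\WW$. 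Pairing the second factor against an arbitrary functional shows that a product vector with $a\otimes b=b\otimes a$ must have $a$ parallel to $b$; hence each $e_k$ is parallel to each $g_l$. Fixing $l$ and letting $k$ range, all the (nonzero, mutually orthogonal) $e_k$ would be parallel to the single nonzero vector $g_l$, which is impossible unless there is only one index $k$; symmetrically only one $l$ survives. Thus both Schmidt sums collapse to a single term, so $\psi=\lambda_1 e_1\otimes f_1$ and $\psi'=\mu_1 g_1\otimes h_1$ are product across their cuts with $e_1\parallel g_1$. Finally I would absorb the scalars $\lambda_1,\mu_1$ and the proportionality phase into the $A$- and $B$-factors, renormalising the $\WW$- and $\WW'$-parts to be literally equal under $\Urename{\WW'}\WW$, which delivers the claimed form.

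The main obstacle is exactly this collapse-to-rank-one step: passing from the family of coefficient identities to the conclusion that neither $\psi$ nor $\psi'$ is entangled across its cut. The combination of the product-vector parallelism fact with the orthonormality of the Schmidt families is what forces the collapse, and it is essential that the Schmidt coefficients are strictly positive so that the cancellation of $\lambda_k\mu_l$ is legitimate. A minor point to record is that Schmidt decomposition is available even when the ambient Hilbert spaces are non-separable, since any single vector lies in a separable subspace of the tensor product, so the (at most countable) decomposition exists as usual; and the degenerate case $\psi=0$ or $\psi'=0$ is trivial and can be dispatched at the outset.
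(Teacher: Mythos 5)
Your proof is correct. Note first that this paper does not prove \autoref{lemma:quanteq.sep} at all: it is imported verbatim from the qRHL paper \cite{qrhl} (cited as a corollary there), so there is no in-paper argument to compare against. Your Schmidt-decomposition route is the natural self-contained one, and every step checks out: reading the invariance off coefficient-by-coefficient against the orthonormal family $\{f_k\otimes h_l\}$ on $\elltwov{\VV\setminus\WW}\otimes\elltwov{\VV'\setminus\WW'}$ is legitimate (also for countably infinite Schmidt sums, by continuity of the pairing), the cancellation of $\lambda_k\mu_l>0$ is valid, and the ``$a\otimes b=b\otimes a$ implies $a\parallel b$'' step combined with orthonormality does force both decompositions to collapse to rank one. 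One subtlety you handle correctly but should keep explicit in a write-up: in this paper's named-system formalism the tensor product is commutative ($\phi\otimes\chi=\chi\otimes\phi$ for differently-named factors), so the identity $e_k\otimes g_l=g_l\otimes e_k$ is only meaningful after fixing an \emph{ordered} identification of $\elltwov\WW$ and $\elltwov{\WW'}$ with a common space via $\Urename{\WW'}{\WW}$ --- which is exactly what your opening convention does; without that convention the equation would be a tautology and carry no information. The degenerate cases $\psi=0$ or $\psi'=0$ are dispatched as you say.
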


But this means that a program variable $\xx$
is separable iff $\xx\QUANTEQ\uu$
for some unentangled ghost! (Remember from
\autoref{def:satisfy} that an unentangled ghost $\uu$
will, by definition, not be entangled with $\xx$)
Thus we can introduce the following syntactic sugar for predicates:
\begin{itemize}
\item \symbolindexmark\separable{$\separable\XX$} ($\XX$ is separable).
    $\separable\XX$
  is short for $\XX\QUANTEQ\uu$
  where $\uu$
  is a fresh unentangled ghost (i.e., one that does not occur
  elsewhere in the predicate we are formulating) of the same type as $\XX$.
\end{itemize}

The following lemma formalizes our informal reasoning above,
$\separable\XX$ indeed characterizes separability:
\begin{lemma}[Separability predicates]\label{lemma:separable}
  Let $\rho$
  be a mixed memory on $\YY$.
  Let $\XX\subseteq\YY$. Then the following are equivalent:
  \begin{compactitem}
    \item $\sats\rho{\separable\XX}$.
    \item $\rho$ is $(\XX,\YY\setminus\XX)$-separable.
  \end{compactitem}
\end{lemma}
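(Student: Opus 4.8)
The plan is to prove both implications by unfolding \cref{def:satisfy} for the predicate $\separable\XX=(\XX\QUANTEQ\uu)$, whose only ghost is the unentangled ghost $\uu$ (so $\EE=\varnothing$ and $\UU=\{\uu\}$). Concretely, $\sats\rho{\separable\XX}$ holds iff there is a $(\YY,\uu)$-separable mixed memory $\rho^\circ$ over $\YY\uu$ with $\suppo\rho^\circ\subseteq(\XX\QUANTEQ\uu)$ and $\partr\uu\rho^\circ=\rho$. The bridge between the two sides will be \cref{lemma:quanteq.sep}, applied with $\WW:=\XX$ (sitting inside $\VV:=\YY$) and $\WW':=\uu=\VV'$: it characterizes exactly which \emph{product} pure states lie in $\XX\QUANTEQ\uu$, namely those whose $\YY$-factor splits as a tensor across the cut $\XX\mid\YY\setminus\XX$ and whose $\uu$-factor is the renamed copy of the $\XX$-part.

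For the direction $\sats\rho{\separable\XX}\Rightarrow(\XX,\YY\setminus\XX)$-separability, I would take a witness $\rho^\circ$ and use its $(\YY,\uu)$-separability to write it as a sum of projectors onto product states $\proj{a_i\otimes b_i}$ with $a_i$ over $\YY$ and $b_i$ over $\uu$ (obtained by decomposing each tensor factor of a separable decomposition into pure states). Since $\suppo\rho^\circ=\SPAN\{a_i\otimes b_i\}\subseteq(\XX\QUANTEQ\uu)$, each $a_i\otimes b_i$ lies in $\XX\QUANTEQ\uu$, so \cref{lemma:quanteq.sep} forces $a_i=(a_i)_{\XX}\otimes(a_i)_{\YY\setminus\XX}$. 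Tracing out $\uu$ then yields $\rho=\partr\uu\rho^\circ=\sum_i\norm{b_i}^2\,\proj{(a_i)_{\XX}}\otimes\proj{(a_i)_{\YY\setminus\XX}}$, which is manifestly $(\XX,\YY\setminus\XX)$-separable.

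For the converse, I would start from a product-pure-state decomposition of the separable $\rho$ into \emph{unit} vectors, $\rho=\sum_i p_i\,\proj{\mu_i}\otimes\proj{\nu_i}$ with $\mu_i$ over $\XX$ and $\nu_i$ over $\YY\setminus\XX$, and define the witness $\rho^\circ:=\sum_i p_i\,\proj{\mu_i\otimes\nu_i\otimes\Urename\XX\uu\mu_i}$, i.e.\ copying the $\XX$-content into the ghost $\uu$. Each summand is a product across $\YY$ and $\uu$, so $\rho^\circ$ is $(\YY,\uu)$-separable; by \cref{lemma:quanteq.sep} each generating vector lies in $\XX\QUANTEQ\uu$, so $\suppo\rho^\circ\subseteq(\XX\QUANTEQ\uu)$; and since $\Urename\XX\uu\mu_i$ is a unit vector, $\partr\uu\rho^\circ=\sum_i p_i\,\proj{\mu_i\otimes\nu_i}=\rho$.

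The step I expect to require the most care is the normalization in the converse: quantum equality $\XX\QUANTEQ\uu$ demands that the ghost copy equal the $\XX$-part \emph{exactly} (hence with the same norm), while the requirement $\partr\uu\rho^\circ=\rho$ must simultaneously reproduce the original weights. Working with a decomposition into genuinely normalized $\mu_i,\nu_i$, with the probabilities carried by separate scalars $p_i$, reconciles the two; an unnormalized copy would insert spurious factors $\norm{\mu_i}^2$ into the partial trace and break the reconstruction of $\rho$. A secondary, routine point is justifying that every separable mixed memory admits a decomposition into product pure states, which follows by decomposing each tensor factor of a separable decomposition spectrally.
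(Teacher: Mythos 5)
Your proof is correct and follows essentially the same route as the paper's: both directions hinge on \cref{lemma:quanteq.sep} applied across the cut $\XX\mid\uu$, and the converse witness is built exactly as in the paper, by copying the normalized $\XX$-content into the ghost $\uu$. The only difference is that you fix $\EE=\varnothing$, $\UU=\{\uu\}$ at the outset (legitimate by \cref{lemma:ghost.sets}), whereas the paper carries arbitrary $\EE,\UU$ and pads with, respectively traces out, dummy states --- a purely cosmetic distinction.
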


\begin{proof}
  $\separable\XX$
  is syntactic sugar for $\XX\QUANTEQ\uu$
  and thus is a predicate over some variables $\YY\EE\UU$
  with $\XX\subseteq\YY$ and $\uu\in\UU$.
  And $\XX$ and $\uu$ have the same type.

  \medskip\noindent\textbf{``$\pmb\Longrightarrow$'':}
  First we show that if $\sats\rho{\separable\XX}$,
  then $\rho$
  is $(\XX,\YY\setminus\XX)$-separable.
  Since $\sats\rho{\separable\XX}$,
  there exists a $(\YY\EE,\UU)$-separable
  $\rho^\circ$
  with $\suppo\rho^\circ\subseteq{\separable\XX}$
  and $\partr{\EE\UU}\rho^\circ=\rho$.
  Thus $\rho^\circ=\sum_i\proj{\psi_{\YY\EE,i}\otimes\psi_{\UU,i}}$
  for some $\psi_{\YY\EE,i},\psi_{\UU,i}$
  over $\YY\EE$
  and $\UU$, respectively. Then
  \[
    \psi_{\YY\EE,i}\otimes\psi_{\UU,i}
    \in\suppo\rho^\circ \subseteq \separable\XX = \paren{\XX\QUANTEQ\uu}.
  \]
  By \autoref{lemma:quanteq.sep}, this implies that
  $\psi_{\YY\EE,i}=\psi_{\XX,i}\otimes\psi_{\YY\EE\setminus\XX,i}$
  for some $\psi_{\XX,i},\psi_{\YY\EE\setminus\XX,i}$
  over $\XX$
  and $\YY\EE\setminus\XX$,
  respectively. Thus $\proj{\psi_{\YY\EE,i}}$
  is $(\XX,\YY\EE\setminus\XX)$-separable.
  Thus $\partr{\EE}\proj{\psi_{\YY\EE,i}}$ is  $(\XX,\YY\setminus\XX)$-separable.
  Hence
  \[
    \rho = \partr{\EE\UU} \rho^\circ = \sum_i \partr{\EE}\proj{\psi_{\YY\EE,i}}
  \]
  is $(\XX,\YY\setminus\XX)$-separable
  as well.  This shows the $\Longrightarrow$-direction.
  
  \medskip\noindent\textbf{``$\pmb\Longleftarrow$'':}
  We next show that if $\rho$
  is $(\XX,\YY\setminus\XX)$-separable,
  then $\sats\rho{\separable\XX}$.
  Since $\rho$
  is $(\XX,\YY\setminus\XX)$-separable,
  $\rho=\sum_i\proj{\psi_{\XX,i}\otimes\psi_{\YY\setminus\XX,i}}$
  for some quantum memories $\psi_{\XX,i},\psi_{\YY\setminus\XX,i}$
  over $\XX$ and $\YY\setminus\XX$, respectively. Without loss of generality, $\norm{\psi_{\XX,i}}=1$.
  Let
  \[
    \psi^\circ_i := \psi_{\XX,i} \otimes \psi_{\YY\setminus\XX,i}
    \otimes
    \psi_{\EE,i}
    \otimes
    \psi_{\uu,i}
    \otimes
    \psi_{\UU\setminus\uu,i}
  \]
  where $\psi_{\EE,i},\psi_{\UU\setminus\uu,i}$
  are arbitrary quantum memories of norm $1$
  on $\EE$
  and $\UU\setminus\uu$,
  respectively, and $\psi_{\uu,i}:=\psi_{\XX,i}$
  except that $\psi_{\uu,i}$
  is a quantum memory over $\uu$
  and not over $\XX$.
  By \autoref{lemma:quanteq.sep},
  $\psi_i^\circ\in\paren{\XX\QUANTEQ\uu}$.
  Let $\rho^\circ:=\sum_i\proj{\psi_i^\circ}$.
  Then $\rho^\circ$
  is $(\YY\EE,\UU)$-separable
  and
  $\suppo\rho^\circ=\SPAN\{\psi_i^\circ\}_i\subseteq\paren{\XX\QUANTEQ\uu}=\separable\XX$. And
  \[
    \partr{\EE\UU}\rho^\circ = \sum_i \proj{\psi_{\XX,i}\otimes\psi_{\YY\setminus\XX,i}} = \rho.
  \]
  (Using that $\psi_{\EE,i},\psi_{\uu,i},\psi_{\UU\setminus\uu,i}$ all have norm $1$.)
  Hence $\sats\rho{\separable\XX}$.
  This shows the $\Longleftarrow$-direction.
\end{proof}

As an example for the relationship between different predicates,
notice that $\distr\xx D$
implies $\separable\xx$
since $\distr\xx D$
implies that $\xx$
is distributed independently from all other variables
(\autoref{lemma:distrib}). \fullonly{This also follows within our logic by an
application of the \ruleref{Transmute} below,
see the example after \ruleref{Transmute}.}

\subsection{Classical variables}

A \fullshort{third}{second} application of ghost variables is to formulate predicates
that imply that a variable has a classical state. We say a mixed
memory $\rho$
over $\YY$ \emph{is classical in $\XX\subseteq\YY$}%
\index{classical!in $\XX$}
iff it is of the form $\rho=\sum_i\proj{\ket i_{\XX}}\otimes\rho_i$
for some mixed memories $\rho_i$
over $\YY\setminus\XX$.
(This is often called a \emph{cq-state}\index{cq-state}.)

Expressing that a variable $\xx$
is classical seems, at the first glance, impossible to do using
predicates (that are modeled as subspaces): Such a predicate would
have to contain, e.g., the states $\ket 0_{\xx}$
and $\ket 1_{\xx}$
(since those are classical) but not the state
$\fsq\ket 0_{\xx}+\fsq\ket 1_{\xx}$.
But that would mean that the predicate is not closed under linear
combinations, hence not a subspace.

Yet, by introducing ghosts, we can model classicality. In
order to see how, we introduce \fullshort{a different}{an} equality notion between
quantum variables, $\CLASSEQ$.
Intuitively, two variables are classically equal iff measuring both in
the \emph{computational} basis will always give the same outcome. (So,
$\ket0_{\xx}$
and $\ket0_{\yy}$
would be classically equal, but $\fsq\ket 0_{\xx}+\fsq\ket 1_{\xx}$
and $\fsq\ket 0_{\yy}+\fsq\ket 1_{\yy}$
would not be.\fullonly{\footnote{\label{foot:classeq.entangled}Somewhat counterintuitively, $\xx$
and $\yy$
are also classically equal if they are in the entangled state
$\fsq\ket{00}_{\xx\yy}+\fsq\ket{11}_{\xx\yy}$.
But this will not matter in our setting since we will apply $\CLASSEQ$ only
to unentangled variables.}})
Formally:
\begin{definition}[Classical equality]%
  \index{classical equality}%
  \index{equality!classical}
  Let $\WW,\WW'\subseteq\VV$
  be disjoint lists of quantum variables. ($\WW$ and $\WW'$ have the
  same type $T$.)  
  
  Then \symbolindexmark\CLASSEQ{$\WW\CLASSEQ\WW'$}
  is the span of all quantum memories of the form
  $\ket i_{\WW}\otimes\ket i_{\WW'}\otimes\psi$
  with $i\in T$
  and $\psi$ a quantum memory on $\VV\setminus\WW\WW'$.
\end{definition}
If we think of two variables $\xx,\uu$
both having the same state $\psi$,
then $\xx\CLASSEQ\uu$
holds if $\psi=\ket i$
for some $i$
(i.e., if $\psi$
is a classical state). But if $\psi$
is a superposition of different $\ket i$,
then measuring both $\xx$
and $\uu$
in the computational basis gives different results with non-zero
probability. Hence $\xx\not\CLASSEQ\uu$
in that case. This suggests that $\xx$
is classical iff it is classically equal to some unentangled ghost
variable $\uu$.\fullonly{\footnote{Classical
  equality to some entangled ghost would not be sufficient due to the
  situation described in \autoref{foot:classeq.entangled}.}} That is,
we introduce the following syntactic sugar for predicates:
\begin{itemize}
\item \symbolindexmark\class{$\class\XX$} ($\XX$ is classical).
    $\class\XX$
  is short for $\XX\CLASSEQ\uu$
  where $\uu$
  is a fresh unentangled ghost (i.e., one that does not occur
  elsewhere in the predicate we are formulating) of the same type as $\XX$.
\end{itemize}

The following lemma formalizes our informal reasoning above,
$\class\XX$ indeed characterizes classicality:
\begin{lemma}[Classicality predicates]\label{lemma:classical}
  Let $\rho$
  be a mixed memory over $\YY$.
  Let $\XX\subseteq\YY$. Then the following are equivalent:
  \begin{compactitem}
    \item $\sats\rho{\class\XX}$.
    \item $\rho$ is classical in $\XX$. (As defined at the beginning of this section.)
  \end{compactitem}
\end{lemma}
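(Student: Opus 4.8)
The plan is to follow the same two-part template as the proofs of \autoref{lemma:distrib} and \autoref{lemma:separable}. First I would unfold the syntactic sugar $\class\XX=(\XX\CLASSEQ\uu)$, where $\uu$ is a fresh unentangled ghost of the same type $T$ as $\XX$, so that $\class\XX$ becomes a predicate over some $\YY\EE\UU$ with $\XX\subseteq\YY$ and $\uu\in\UU$. Both implications then come down to relating the product structure forced by $(\YY\EE,\UU)$-separability to the ``diagonal'' structure baked into the definition of $\CLASSEQ$.

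For the $\Longrightarrow$ direction I would start from a $(\YY\EE,\UU)$-separable witness $\rho^\circ=\sum_i\proj{\psi_{\YY\EE,i}\otimes\psi_{\UU,i}}$ with $\suppo\rho^\circ\subseteq\XX\CLASSEQ\uu$ and $\partr{\EE\UU}\rho^\circ=\rho$, discarding the zero summands. The crucial step is to expand each factor in the relevant computational basis, $\psi_{\YY\EE,i}=\sum_j\ket j_\XX\otimes a_{ij}$ and $\psi_{\UU,i}=\sum_k\ket k_\uu\otimes b_{ik}$, and read off what membership in $\XX\CLASSEQ\uu$ forces: since that subspace is spanned by vectors in which $\XX$ and $\uu$ carry the same classical label, every off-diagonal component must vanish, i.e.\ $a_{ij}\otimes b_{ik}=0$ whenever $j\neq k$. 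This is exactly where separability does the work. Letting $J_i=\{j:a_{ij}\neq0\}$ and $K_i=\{k:b_{ik}\neq0\}$ (both nonempty since $w_i\neq0$), the vanishing condition forces $j=k$ for all $j\in J_i$ and $k\in K_i$, so each surviving summand has $J_i=K_i=\{j_i\}$ a singleton. Hence $\psi_{\YY\EE,i}=\ket{j_i}_\XX\otimes a_{i,j_i}$ places $\XX$ in a definite classical state, and tracing out $\EE\UU$ turns each $\proj{\psi_{\YY\EE,i}\otimes\psi_{\UU,i}}$ into $\proj{\ket{j_i}_\XX}$ tensored with a mixed memory over $\YY\setminus\XX$. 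Summing and grouping by the value $j_i$ exhibits $\rho$ as a cq-state $\sum_j\proj{\ket j_\XX}\otimes\rho_j$.

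For the $\Longleftarrow$ direction I would mirror the construction in the $\Longleftarrow$ part of \autoref{lemma:separable}. Starting from $\rho=\sum_j\proj{\ket j_\XX}\otimes\rho_j$, I decompose each $\rho_j=\sum_k\proj{\phi_{jk}}$ over $\YY\setminus\XX$ and set $\psi^\circ_{jk}:=\ket j_\XX\otimes\phi_{jk}\otimes\psi_\EE\otimes\ket j_\uu\otimes\psi_{\UU\setminus\uu}$, where $\psi_\EE,\psi_{\UU\setminus\uu}$ are arbitrary norm-$1$ memories. The key design choice is to let the ghost $\uu$ copy the classical label $j$ of $\XX$, so that $\psi^\circ_{jk}\in\XX\CLASSEQ\uu$ directly by the definition of $\CLASSEQ$. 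Taking $\rho^\circ:=\sum_{j,k}\proj{\psi^\circ_{jk}}$, it is $(\YY\EE,\UU)$-separable by construction, its support lies in $\XX\CLASSEQ\uu=\class\XX$, and $\partr{\EE\UU}\rho^\circ=\sum_{j,k}\proj{\ket j_\XX\otimes\phi_{jk}}=\rho$ because the ghost factors all have norm $1$. This gives $\sats\rho{\class\XX}$.

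I expect the main obstacle to be the structural argument in $\Longrightarrow$: justifying that the diagonal constraint from $\CLASSEQ$ together with the $(\YY\EE,\UU)$-product form pins each surviving term to a single classical value of $\XX$ (the singleton $J_i=K_i$ argument), and then tracking the partial trace carefully so that the cross terms with $j\neq j'$ genuinely vanish. Unlike the separability case, where \autoref{lemma:quanteq.sep} could be invoked as a black box, here I would argue directly from the definition of $\CLASSEQ$, so the care lies in that elementary but slightly fiddly index bookkeeping.
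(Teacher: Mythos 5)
Your proposal is correct and matches the paper's proof in all essentials: the forward direction argues exactly as the paper does, expanding both separable factors in the computational basis and using membership in $\XX\CLASSEQ\uu$ to kill the off-diagonal components so that $\XX$ is pinned to a single classical label per summand, and the backward direction uses the same witness in which the ghost $\uu$ copies the classical label of $\XX$. The only cosmetic difference is that the paper takes the auxiliary factors on $\EE$ and $\UU\setminus\uu$ to be arbitrary trace-$1$ mixed memories rather than pure states, which changes nothing.
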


\begin{proof}
  $\class\XX$
  is syntactic sugar for $\XX\CLASSEQ\uu$
  and thus is a predicate over some variables $\YY\EE\UU$
  with $\XX\subseteq\YY$ and $\uu\in\UU$.

  \medskip\noindent\textbf{``$\pmb\Longrightarrow$'':}
  First we show that if $\sats\rho{\class\XX}$,
  then $\rho$
  is classical in $\XX$.
  Since $\sats\rho{\class\XX}$,
  there exists a $(\YY\EE,\UU)$-separable
  $\rho^\circ$
  with $\suppo\rho^\circ\subseteq{\class\XX}$ and $\partr{\EE\UU}\rho^\circ=\rho$.
  Thus $\rho^\circ=\sum_i\proj{\psi_{\YY\EE,i}\otimes\psi_{\UU,i}}$
  for some $\psi_{\YY\EE,i},\psi_{\UU,i}$
  over $\YY\EE$
  and $\UU$,
  respectively. Without loss of generality, $\norm{\psi_{\UU,i}}\neq0$ for all $i$.
  And
  $\psi_{\YY\EE,i}\otimes\psi_{\UU,i}\in\suppo\rho^\circ\subseteq\class\XX$.
  Fix some $i$.
  (We will omit $i$
  from the subscripts for now.)  We can write $\psi_{\YY\EE}$
  and $\psi_{\UU}$
  as
  $\psi_{\YY\EE}=\sum_{j}\lambda_{j}\ket
  j_{\XX}\otimes\psi_{\YY\EE\setminus\XX,j}$ and
  $\psi_{\UU}=\sum_j\lambda_{j}'\ket
  j_{\uu}\otimes\psi_{\UU\setminus\uu,j}$.  Since
  $\psi_{\UU}\neq0$,
  we have $\lambda_{j^*}'\neq0$
  for some $j^*$.
  If $\lambda_j\neq0$
  for some $j\neq j^*$,
  then $\psi_{\YY\EE}\otimes\psi_{\UU}$
  is not in the span of states
  $\ket\nu_{\XX}\otimes\ket\nu_{\uu}\otimes\dots$
  and thus not in $\class\XX=(\XX\CLASSEQ\uu)$.
  Hence $\lambda_j\neq0$
  for all $j\neq j^*$.
  Thus
  $\psi_{\YY\EE}=\ket{j^*}_{\XX}\otimes\lambda_{j^*}\psi_{\YY\EE\setminus\XX,j^*}$.
  Hence $\proj{\psi_{\YY\EE}}$
  is classical in $\XX$. Hence $\partr\EE\proj{\psi_{\YY\EE}}$ is classical in $\XX$.
  Now we ``unfix'' $i$.
  Thus all $\partr\EE\proj{\psi_{\YY\EE,i}}$ are classical in $\XX$. Hence
  \[
    \rho = \partr{\EE\UU}\rho^\circ = \sum_i\partr{\EE}\proj{\psi_{\YY\EE,i}}
  \]
  is classical in $\XX$.  This shows the $\Longrightarrow$-direction.

  \medskip\noindent\textbf{``$\pmb\Longleftarrow$'':}
  We next show that if $\rho$
  is classical in $\XX$
  then $\sats\rho{\class\XX}$.
  Since $\rho$
  is classical in $\XX$,
  $\rho=\sum_i\proj{\ket i_{\XX}}\otimes\proj{\psi_{\YY\setminus\XX,i}}$ 
  for some quantum memories $\psi_{\YY\setminus\XX,i}$
  over $\YY\setminus\XX$.
  Let
  \[
    \rho^\circ := \sum_i \proj{\ket i_{\XX}}\otimes\proj{\psi_{\YY\setminus\XX,i}}
    \otimes \sigma_{\EE} \otimes\proj{\ket i_{\uu}}
    \otimes \sigma_{\UU\setminus\uu}
  \]
  for arbitrary mixed memories
  $\sigma_{\EE},\sigma_{\UU\setminus\uu}$
  of trace $1$
  over $\EE$
  and $\UU\setminus\uu$,
  respectively.  Each summand has support in
  $\paren{\XX\CLASSEQ\uu}=\class\XX$, hence $\suppo\rho^\circ\subseteq\class\XX$. And $\rho^\circ$ is $(\YY\EE,\UU)$-separable. Finally,
  $\partr{\EE\UU}\rho^\circ=\sum_i\proj{\ket i}_{\XX}\otimes\proj{\psi_{\YY\setminus\XX,i}}=\rho$.
  Thus $\sats\rho{\class\XX}$.
  This shows the $\Longleftarrow$-direction.
\end{proof}

The predicate $\class\XX$ occurs for example in the rules
\rulerefx{Measure*}, \rulerefx{MeasureForget*}, and \rulerefx{Sample*}
for measurements and random sampling.
We discuss the predicate $\class\XX$
and its uses in greater depth in
\autoref{sec:deriv:classical}.

\section{Core Rules}
\label{sec:core.rules}

In this section, we present the core reasoning rules for our
logic. Since we have defined the logic semantically
(\autoref{def:hoare}), the set of rules is not fixed a priori (since
we can always prove additional rules sound). Nevertheless, we identify
a set of important rules (one per language primitive, plus some useful
structural rules) that form the basis of the rest of this paper. In
particular, all ``derived rules'' in \autoref{sec:deriv.rules} are a consequence
of these core rules. That is, after this section we can ``forget''
\autoref{def:hoare} and build only on the rules from this
section. (Convenient additional rules will be derived in later
sections as corollaries.)
\shortonly{All proofs for this section are deferred to \autoref{app:proofs-core-rules}.}

\subsection{Rules for individual statements}

For each command of our language (sequence, skip, initialization,
application, if, while), we introduce one rule that derives a Hoare
judgment for that command from judgments about its subterms. The
rules for sequence and skip are quite obvious and follow directly from
the definition:
\begin{ruleblock}
  \RULE{Seq}{
    \hl\PA\bc\PB \\
    \hl\PB\bd\PC
  }{
    \hl\PA{\bc;\bd}\PC
  }
  \RULE{Skip}{
    \PA \subseteq \PB
  }{
    \hl\PA\SKIP\PB
  }
\end{ruleblock}
More interesting are the rules for operations on quantum states (isometries, initialization):
\begin{ruleblock}
  \RULE{Apply}{}{
    \pb\hl\PA{\apply U\XX}{\oppred{(\opon U\XX)}\PA}
  }
  \RULE{Init}{}{
    \pb\hl\PA{\init\xx}{
      \psubst\PA\ee\xx,\
      {\xx\quanteq\ket0}}
  }
\end{ruleblock}
\rulerefx{Apply} says that applying an isometry $U$
to variables $\XX$
has the effect of multiplying the predicate $\PA$
with $U$
(after suitably lifting $U$
to operate on quantum memories, see \autopageref{page:opon}
for the definition of $\opon U\XX$). \rulerefx{Init} is more interesting
because it is the first rule that introduces ghosts. Since
initialization ``overwrites'' the original value of $\xx$,
$\xx$
becomes an engangled ghost, thus the precondition $\PA$
is replaced by $\psubst\PA\ee\xx$,
i.e., $\xx$
is replaced by a fresh ghost $\ee$.
($\ee$ is fresh, i.e., $\notin\fv(\PA)$, because otherwise $\AA\{\ee/\xx\}$ would
not be welltyped.)
Additionally, $\xx$
will afterwards be in the state~$\ket0$,
so the postcondition additionally contains $\xx\quanteq\ket0$.
\fullonly{The rules \rulerefx{Apply} and \rulerefx{Init} are shown in
\cref{lemma:Apply,lemma:Init} in \autoref{sec:core.proofs}.}

The rules for if and while do not introduce ghosts
and are the same as in ``regular'' quantum Hoare logic:
\begin{center}
  \smaller
  \noindent
  \RULE{If}{
    \pb\hl{\oppred{\pb\paren{\opon{\proj{\ket1}}\xx}}\PA}\bc\PB
    \\
    \pb\hl{\oppred{\pb\paren{\opon{\proj{\ket0}}\xx}}\PA}\bd\PB
  }{
    \hl\PA{\ifte\xx\bc\bd}\PB
  }
  \qquad
  \RULE{While}{
    \pb\hl{\oppred{\pb\paren{\opon{\proj{\ket1}}\xx}}\PA}\bc\PA
  }{
    \pb\hl\PA{\while\xx\bc}{\oppred{\pb\paren{\opon{\proj{\ket0}}\xx}}\PA}
  }
\end{center}
Since the if-statement measures $\xx$
before executing $\bc$
or $\bd$
(see \autoref{sec:qprogs}), the precondition $\PA$
becomes $\oppred{\pb\paren{\opon{\proj{\ket1}}\xx}}\PA$
when that measurement returns $1$
and $\bc$
is executed (as $\proj{\ket1}$
is the projector corresponding to measurement outcome $1$),
and it becomes $\oppred{\pb\paren{\opon{\proj{\ket0}}\xx}}\PA$
if the measurement returns $0$
and $\bd$
is executed. Thus analyzing $\ifte\xx\PA\PB$
reduces to analyzing $\bc$
and $\bd$ with those two respective preconditions.

Similarly, $\while\xx\bc$
executes $\bc$
after measuring $\xx$
and getting $1$.  Thus, if we use $\PA$
as the loop invariant, the postcondition for the loop body becomes
$\oppred{\paren{\opon{\proj{\ket1}}\xx}}\PA$.
And to end the loop, the measurement of $\xx$
must return $0$,
hence we get the postcondition $\oppred{\paren{\opon{\proj{\ket0}}\xx}}\PA$
for the overall loop.
\fullonly{The rules \rulerefx{If} and~\rulerefx{While} are proven in 
\cref{lemma:If,lemma:While} in \autoref{sec:core.proofs}.}

\subsection{Further core rules}

Besides the per-statement rules from the previous section, we will use
five more rules, related to case-distinctions and to the modification
of ghosts. First, we consider case-distinctions. In classical Hoare
logic, we can easily show the following rule:
$\forall z.\,\hl{\xx=z,\ \PA}\bc{\PB}\implies\hl \PA\bc\PB$.
That is, to show $\hl\PA\bc\PB$,
it is sufficient to consider each possible value $z$
of $\xx$
separately and prove $\hl\PA\bc\PB$
under the additional assumption that $\xx=z$
holds in the precondition. An immediate quantum analogue would be:
$\forall \psi.\,\hl{\xx\quanteq\psi,\ \PA}\bc{\PB}\implies\hl
\PA\bc\PB$. There are two problems with such this rule. First, it does
not hold in this generality: $\xx\quanteq\psi$
implies that $\xx$
is not entangled with any other variables (because it is in the
specific pure state $\psi$),
so proving $\hl{\xx\quanteq\psi,\ \PA}\bc{\PB}$
for all $\psi$
does not guarantee anything about the behavior of $\bc$
in the presence of entanglement.\fullonly{\footnote{Formally, a counterexample
  would be: $\PA:=(\xx\yy\quanteq\ket{00}+\ket{11})$,
  $\PB:=\bot$,
  and $\bc:=\SKIP$.
  Then for all $\psi$,
  $(\xx\quanteq\psi,\ \PA)=\bot$,
  hence $\hl{\xx\quanteq\psi,\ \PA}\SKIP\PB$.
  But $\hl\PA\bc\PB$
  does not hold.}}
And even if we fix this by adding suitable extra
conditions, the rule will force us to always quantify over all
possible $\psi$.
But if, for example, $\xx$
is guaranteed to be classical (e.g., $\PA=\class\xx$)
then we would like to only consider the cases $\xx=\ket z$.
To formulate a rule that solves both problems, we introduce an
additional concept:
\begin{definition}[Disentangling]\label{def:disentangling}
  A predicate $\PA$ on $\XX\UU$ is
  \emph{$M$-disentangling}\index{disentangling}
  (for a \emph{set} $M\subseteq\elltwov\XX$) iff:
  For all sets of variables $\VV$ (disjoint from $\XX\UU$),
  all quantum memories $\psi_{\VV\XX}\neq0$ over $\VV\XX$,
  and all quantum memories $\psi_{\UU}\neq0$ over $\UU$
  with $\psi_{\VV\XX}\otimes\psi_{\UU}\in \PA$,
  we have that $\psi_{\VV\XX}=\psi_{\VV}\otimes\psi_{\XX}$
  for some $\psi_{\VV}\in\elltwov{\VV}$ and some $\psi_{\XX}\in M$.
\end{definition}
What does this definition mean? Roughly speaking, it means that if
variables $\XX$
and $\UU$,
jointly, satisfy $\PA$,
and variables $\UU$
are not entangled with variables $\XX$
or $\VV$,
then we know that variables $\XX$
are also not entangled with variables $\VV$,
and additionally that variables $\XX$ will be in one of the states in $M$.

A trivial example would be $\PA:=(\xx\uu=\phi_1\otimes\phi_2)$
which is $\{\phi_1\}$-disentangling.
That is, if $\xx\uu$
are in state $\phi_1\otimes\phi_2$,
then $\xx$
is in state $\phi_1$
(unsurprisingly).  Similarly, for any non-separable $\phi$,
$S=\SPAN\{\phi\}$
is $\varnothing$-disentangling
(as the variables $\xx\uu$ cannot at the same time be non-entangled and in state $\phi$).
The following lemma gives \fullshort{two more interesting examples}{a more interesting example} of disentangling predicates:
\begin{lemma}\label{lemma:disentangling}
  Let $T$
  be the type of $\XX$.
  \fullonly{Then $\XX\QUANTEQ\UU$ and $\separable\XX$
  are $\elltwov\XX$-disentangling.}
  And $\XX\CLASSEQ\UU$ and $\class\XX$ are $\{\ket i\}_{i\in T}$-disentangling.
\end{lemma}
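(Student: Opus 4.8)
The plan is to reduce to the two equality predicates and then treat them separately. First I note that $\separable\XX$ and $\class\XX$ are literally the special cases $\UU=\{\uu\}$ of $\XX\QUANTEQ\UU$ and $\XX\CLASSEQ\UU$, since $\separable\XX$ abbreviates $\XX\QUANTEQ\uu$ and $\class\XX$ abbreviates $\XX\CLASSEQ\uu$; hence it suffices to establish the two general statements. In both cases I fix, per \autoref{def:disentangling}, a set $\VV$ disjoint from $\XX\UU$ together with nonzero quantum memories $\psi_{\VV\XX}$ over $\VV\XX$ and $\psi_\UU$ over $\UU$ with $\psi_{\VV\XX}\otimes\psi_\UU$ lying in the predicate, and I must produce a factorization $\psi_{\VV\XX}=\psi_\VV\otimes\psi_\XX$ with $\psi_\XX$ in the claimed set $M$.

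For the $\QUANTEQ$/$\separable$ part I would simply invoke \autoref{lemma:quanteq.sep} with $\WW:=\XX$ and $\WW':=\UU$, applied to the product $\psi_{\VV\XX}\otimes\psi_\UU\in(\XX\QUANTEQ\UU)$. That lemma yields directly that $\psi_{\VV\XX}$ factors as $\psi_\XX\otimes\psi_\VV$ for some $\psi_\XX$ over $\XX$ and $\psi_\VV$ over $\VV$ (the $\UU$-register carries no further variables, so the lemma imposes no extra condition on $\psi_\UU$). Since here $M=\elltwov\XX$, the factor $\psi_\XX$ lies in $M$ automatically, so disentangling holds.

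For the $\CLASSEQ$/$\class$ part I would argue directly by expanding in the computational bases of $\XX$ and $\UU$: write $\psi_{\VV\XX}=\sum_{i\in T}\ket i_\XX\otimes\alpha_i$ with coefficient vectors $\alpha_i\in\elltwov\VV$, and $\psi_\UU=\sum_{j\in T}\beta_j\ket j_\UU$ with scalars $\beta_j$, so that $\psi_{\VV\XX}\otimes\psi_\UU=\sum_{i,j}\ket i_\XX\otimes\ket j_\UU\otimes(\beta_j\alpha_i)$. Membership in $\XX\CLASSEQ\UU$ (extended to $\VV\XX\UU$) forces all off-diagonal $\XX\UU$-components to vanish, i.e.\ $\beta_j\alpha_i=0$ whenever $i\neq j$. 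Because $\psi_\UU\neq0$, some $\beta_{j^*}\neq0$, which forces $\alpha_i=0$ for every $i\neq j^*$; hence $\psi_{\VV\XX}=\ket{j^*}_\XX\otimes\alpha_{j^*}$. Taking $\psi_\XX:=\ket{j^*}\in\{\ket i\}_{i\in T}$ and $\psi_\VV:=\alpha_{j^*}$ (nonzero since $\psi_{\VV\XX}\neq0$) gives the required factorization.

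These computations are routine bookkeeping; the one step deserving care is the ``iff'' characterization of membership in $\XX\CLASSEQ\UU$. I would spell out that, lifted to $\VV\XX\UU$, this predicate is the span of the diagonal states $\ket k_\XX\otimes\ket k_\UU\otimes\phi$, whose orthogonal complement is spanned by the off-diagonal states $\ket i_\XX\otimes\ket j_\UU\otimes\phi$ with $i\neq j$; a vector lies in the subspace precisely when it is orthogonal to this complement, i.e.\ exactly when all its off-diagonal components vanish. This is the only place where the quantum structure does genuine work, and it is what forces $M=\{\ket i\}_{i\in T}$ in the classical case rather than all of $\elltwov\XX$ as in the $\QUANTEQ$ case.
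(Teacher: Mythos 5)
Your proposal is correct and follows essentially the same route as the paper's proof: the $\QUANTEQ$ case is the same direct application of \autoref{lemma:quanteq.sep} (with $\WW:=\XX$, $\WW':=\UU$), and the $\CLASSEQ$ case is the same computational-basis expansion in which a nonzero coefficient of $\psi_{\UU}$ forces all off-diagonal components, and hence all but one $\XX$-component of $\psi_{\VV\XX}$, to vanish. The only cosmetic difference is that you justify the vanishing of off-diagonal components via orthogonality to the complement of the diagonal span, where the paper argues the contrapositive directly; the substance is identical.
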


\begin{proof}
  We first show that $\XX\QUANTEQ\UU$
  is $\elltwov\XX$-disentangling.
  (This also implies that $\separable\XX$
  is $\elltwov\XX$-disentangling
  since $\separable\XX$
  is simply syntactic sugar for $\XX\QUANTEQ\uu$.)
  Fix some variables~$\VV$,
  and quantum memories $\psi_{\VV\XX}$
  and $\psi_{\UU}$
  over $\VV\XX$
  and $\UU$,
  respectively, with
  $\psi_{\VV\XX}\otimes\psi_{\UU}\in(\XX\QUANTEQ\UU)$.
  By \autoref{lemma:quanteq.sep} (with $\VV:=\VV\XX$,
  $\WW:=\XX$,
  $\VV':=\WW':=\UU$),
  this implies that $\psi_{\VV\XX}$
  can be written as $\psi_{\VV\XX}=\psi_{\VV}\otimes\psi_{\XX}$
  for some quantum memories $\psi_{\VV},\psi_{\XX}$
  over $\VV,\XX$,
  respectively. And trivially, $\psi_{\XX}\in\elltwov{\XX}$. Thus 
  $\XX\QUANTEQ\UU$
  is $\elltwov\XX$-disentangling by \autoref{def:disentangling}.

  \medskip
  
  Now we show that $\XX\CLASSEQ\UU$
  is $\{\ket i\}_{i\in T}$-disentangling.
  (This also implies that $\class\XX$
  is $\{\ket i\}_{i\in T}$-disentangling
  since $\class\XX$
  is simply syntactic sugar for $\XX\CLASSEQ\uu$.)
  Fix some variables~$\VV$,
  and quantum memories $\psi_{\VV\XX}$
  and $\psi_{\UU}$
  over $\VV\XX$
  and $\UU$,
  respectively, with
  $\psi_{\VV\XX}\otimes\psi_{\UU}\in(\XX\CLASSEQ\UU)$ and $\psi_{\UU}\neq0$.
  By \autoref{def:disentangling}, we need to show that
  $\psi_{\VV\XX}=\psi_{\VV}\otimes\ket i_{\XX}$
  for some $i\in T$
  and $\psi_{\VV}$
  over $\VV$.
  We decompose
  $\psi_{\VV\XX}=\sum\lambda_i\psi_{\VV,i}\otimes\ket i_{\XX}$
  and $\psi_{\UU}=\sum\lambda_i'\ket i_{\UU}$ for some $\psi_{\VV,i}\neq0$ over $\VV$.
  Since $\psi_{\UU}\neq0$,
  there exists a $j$
  such that $\lambda'_j\neq0$.
  If $\lambda_i\neq 0$
  for some $i\neq j$,
  then $\psi_{\VV\XX}\otimes\psi_{\UU}$
  is not in the span of states of the form
  $\ket \nu_{\XX}\otimes\ket \nu_{\UU}\otimes\dots$,
  in contradiction to
  $\psi_{\VV\XX}\otimes\psi_{\UU}\in(\XX\CLASSEQ\UU)$.
  Thus $\lambda_i=0$
  for all $i\neq j$, hence $\psi_{\VV\XX}=\lambda_j\psi_{\VV,j}\otimes\ket j_{\XX}$ as desired. Thus 
  $\XX\CLASSEQ\UU$
  is $\{\ket i\}_{i\in T}$-disentangling.
\end{proof}

Armed with the definition of disentangling predicates, we can formulate the rule for case
distinctions:
\[
  \RULE{Case}{
    \text{$\PC$ is $M$-disentangling predicate on $\XX\UU$}\\
    \PA\subseteq\PC \\
    \forall\psi\in M.\
    \hl{\XX\quanteq\psi,\ \PA}
    \bc\PB
  }{
    \hl{\PA}\bc\PB
  }
\]
As a special case (with $\PC:=\class\xx$
and using \autoref{lemma:disentangling}), we can recover a rule for
case distinction over classical variables:
$\forall z.\hl{\XX\quanteq\ket z,\ \class\XX,\ \PA}\bc\PB \implies
\hl{\class\XX,\ \PA}\bc\PB$.  See the derived \ruleref{CaseClassical} on
\autopageref{rule:CaseClassical} for details.  Notice that we would
not have been able to even state such a case rule without using ghosts!
\fullonly{The \rulerefx{Case} rule is proven in \autoref{lemma:Case} in
\autoref{sec:core.proofs}.}

The \rulerefx{Case} rule has the disadvantage that we need to have a
disentangling predicate in the precondition. As described above, this
is necessary because the variable under consideration might be
entangled with other variables. However, if we make a case distinction
over the state of \emph{all} variables, then this requirement
disappears. In fact, it turns out that it is enough to make a case
distinction over the state of the free variables in program and
pre-/postconditions plus one extra variable $\xx$ (this is not obvious because those variables might
still be entangled with other variables that are not used but
nevertheless present, even variables with uncountable type):
\[
  \RULE{Universe}{
    \XX\EE\UU\supseteq\fv(\PA,\bc)
    \\
    \XX\supseteq\progvars{\fv(\PB)}
    \\
    \xx\notin\XX
    \\
    \text{type of $\xx$ is infinite}
    \\\\
    \forall\psi\in\elltwov{\XX\xx\EE},\psi'\in\elltwov{\UU},\ \psi,\psi'\neq0.\
    \hl{\XX\xx\EE\quanteq\psi,\ \UU\quanteq\psi',\ \PA}\bc\PB
  }{
    \hl\PA\bc\PB
  }
\]
(We call this rule \rulerefx{Universe} since we do a case distinction
over the state of all variables, i.e., of the whole universe.)  \fullshort{We
will see an example where the \rulerefx{Universe} rule is useful in
the analysis of the quantum one-time pad (\autoref{sec:sec.qotp}, general case).}{
The \rulerefx{Universe} rule is used, for example, in the analysis of the quantum one-time pad.
(The general case that is deferred to \autoref{app:qotp-general}.)}
Note that it is important in this rule that we can fix one
concrete set $\XX\xx\EE\UU$
of variables to quantify over. Otherwise, we would have to quantify
over all states \emph{over all possible sets of variables}; depending
on the precise formalization the ``set'' of all possible sets of
variables might not even be a set, and a rigorous formalization of the
rule may not be possible in logical foundations that do not allow us
to quantify over large classes (e.g., higher-order logic as formalized
in Isabelle/HOL \cite{isabelle}).  \fullonly{The rule is proven in
\autoref{lemma:Universe} in \autoref{sec:core.proofs}.}

For stating the next rules more readably, we introduce another
notation: We write \symbolindexmark\impl{$\PA\impl\PB$}
for $\hl\PA\SKIP\PB$
(which in turn is equivalent to
$\forall\rho.\ \sats\rho\PA\implies\sats\rho\PB$).
By rules \rulerefx{Seq} and \rulerefx{Skip} (and the fact that $\SKIP$
is the neutral element of $;$)
we immediately have that $\impl$
is a preorder that refines $\subseteq$.
Also note that rule \rulerefx{Seq} implies that $\PA\impl\PA',\
\hl{\PA'}\bc{\PB'},\
\PB'\impl\PB
\implies\hl\PA\bc\PB$,
so $\impl$ can be used for rewriting Hoare judgments.

The next three rules are specific to ghost variables and allow us to
rewrite predicates. 
\begin{ruleblock}
  \RULE{Rename}{}{A\impl\psubstii\PA{\EE'}\EE{\UU'}\UU}
  \RULE{Transmute}{
    \forall i. \rank M_i \leq 1 \\
    \sum\nolimits_i \adj{M_i}{M_i} = \id \\\\
    \text{$\GG$ either all entangled or all unentangled ghosts}\\
    \text{$\GG'$ either all entangled or all unentangled ghosts}
  }{
    \PA
    \impl
    \bigvee\nolimits_{\!i} \pB\paren{
      \oppred{
      \paren{\opon{M_i}{\GG'}}
    }{
      \psubst\PA{\GG'}{\GG}
    }
    }
  }
  \RULE{ShapeShift}{
    \partr{\EE}\proj{\psi} =
    \partr{\EE'}\proj{\psi'}
    \\
    (\EE\cup\EE')\cap\fv(A)=\varnothing
  }{
    \paren{ \XX\EE\quanteq\psi,\ \PA}
    \impl
    \paren{ \XX\EE'\quanteq\psi',\ \PA }
  }
\end{ruleblock}
\Ruleref{Rename} simply allows us to rename ghosts, this mainly allows us to tidy up
judgments. \Ruleref{Rename} follows directly from \cref{def:satisfy,def:hoare}.
When reading \rulerefx{Transmute}, recall that $\GG,\GG'$
may refer to both entangled and unentangled
ghosts. The purpose of the \rulerefx{Transmute} rule is to change an
entangled ghost into an unentangled ghost or vice versa. (That is, we
will usually have $\GG=\ee$
and $\GG'=\uu$
or vice versa.) Ideally, we would like to have something like
$\PA\impl\psubst\PA{\uu}\ee=:\PA'$
and vice versa, i.e., being able to change the kinds of ghost
variables freely. But of course, that would mean that entangled and
unentangled ghosts are equivalent, and we would not have to had to
distinguish between those different kinds of variables in the first
place. Instead, we get a somewhat more complicate rule where, after
replacing $\uu$
by $\ee$
or vice versa, we also need to replace $\PA'$
by
$\bigvee\nolimits_{\!i} \pb\paren{ \oppred{ \paren{\opon{M_i}{\GG'}}
  }{ \PA' } }$. (Recall that $\vee$
is the disjunction of predicates, i.e., the sum of subspaces,
see \autopageref{page:vee}. Hence $\bigvee_{\!i}$
is a disjunction of a family of predicates
$ \oppred{ \paren{\opon{M_i}{\GG'}} }{ \PA' }$.)
The rule will be most useful if we can chose the $M_i$
in such a way that
$\bigvee\nolimits_{\!i} \pb\paren{ \oppred{ \paren{\opon{M_i}{\GG'}}
  }{ \PA' } }=\PA'$.  We will see later that this is often possible
when classical variables are involved (i.e., when the precondition
contains $\class\XX$).

An example of using \ruleref{Transmute} analyzes the predicate $\distr\XX D$:
\begin{align*}
  \distr\XX D
  &=\paren{\XX\ee\quanteq\psiDD D}
    \impl {\textstyle\bigvee_i \oppred{\oponp{\proj{\ket i}}\uu}{\paren{\XX\uu\quanteq\psiDD D}}}
    \\&= {\textstyle\bigvee_i \pb\paren{\XX\uu\quanteq \sqrt{D(i)}\,\ket i\otimes\ket i}}
  \subseteq
       \begin{cases}
         \paren{\XX\QUANTEQ\uu} = \separable\XX \\
         \paren{\XX\CLASSEQ\uu} = \class\XX.
       \end{cases}
\end{align*}
Here \ruleref{Transmute} is applied with $M_i:=\proj{\ket i}$, $\GG:=\ee$, $\GG':=\uu$.
Thus $\distr\XX D$ implies that $\XX$ is \fullonly{separable and }classical.

The main purpose of \rulerefx{ShapeShift}, in contrast, is to rewrite
the state $\psi$
in predicates of the form $\XX\EE\quanteq\psi$.
The \rulerefx{ShapeShift} rule
has the precondition
$ \partr{\EE}\proj{\psi} = \partr{\EE'}\proj{\psi'} $.
That is, after tracing out (erasing) $\EE,\EE'$,
the two states $\psi,\psi'$
(interpreted as density operators by applying $\proj\cdot$)
should be identical. Or, stated differently, looking only at $\XX$,
$\psi$
and $\psi'$
have to look identical. Thus the rule says, roughly, that in a predicate
$\XX\EE\quanteq\psi$,
we can replace $\psi$
by any state that looks identical from the point of view of $\XX$.
For example, $\xx\ee\quanteq\ket{00}+\ket{11} \ \impl\
\xx\ee\quanteq\ket{01}+\ket{10}$.

Both \rulerefx{Transmute} and \rulerefx{ShapeShift} are be
extensively used in the derivations of derived rules in
\autoref{sec:deriv.lang}.
\fullonly{We refer to those derivations for examples as to how and where
\rulerefx{Transmute} and \rulerefx{ShapeShift} can be used.}
\rulerefx{ShapeShift} is also used as the core step in the security
proof of the quantum one-time pad (\autoref{sec:sec.qotp}).
\fullonly{The rules are proven in \cref{lemma:Transmute,lemma:ShapeShift},
respectively.}

\subsection{Proofs of core rules}
\label{sec:core.proofs}

We begin with some auxiliary lemmas:

\begin{lemma}\label{lemma:suppo.partr}
  Let $\rho$
  be a mixed memory over $\VV\WW$
  and $\PA$
  a predicate over $\VV$.
  Then $\suppo\rho\subseteq{{\PA\otimes\elltwo{\WW}}}$
  iff $\suppo{\partr{\WW}\rho}\subseteq\PA$.
\end{lemma}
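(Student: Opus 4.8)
The plan is to reduce both directions of the equivalence to one and the same scalar equation, exploiting that support containment of a positive operator can be detected by a trace. First I would fix the orthogonal projector $P$ onto $\PA$ inside $\elltwov\VV$; then $P\otimes\id_\WW$ is the orthogonal projector onto the closed subspace $\PA\otimes\elltwov\WW$ inside $\elltwov{\VV\WW}$. The whole argument then rests on one elementary observation: for a positive trace-class operator $\sigma$ and a closed subspace $S$ with orthogonal projector $Q$,
\[
  \suppo\sigma\subseteq S
  \quad\Longleftrightarrow\quad
  \tr\bigl((\id-Q)\,\sigma\bigr)=0 .
\]
Indeed, by cyclicity and $Q^2=Q$ one has $\tr\bigl((\id-Q)\,\sigma\bigr)=\tr\bigl(\adj AA\bigr)$ with $A:=\sqrt\sigma\,(\id-Q)$, so this trace is nonnegative and vanishes iff $A=0$, i.e.\ iff $\sigma(\id-Q)=0$, i.e.\ iff $Q\sigma=\sigma$, i.e.\ iff $\suppo\sigma\subseteq\im Q=S$.

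With this in hand I would apply the observation twice. Taking $\sigma:=\rho$, $S:=\PA\otimes\elltwov\WW$, $Q:=P\otimes\id_\WW$ shows that the left-hand condition $\suppo\rho\subseteq\PA\otimes\elltwov\WW$ is equivalent to $\tr\bigl(((\id-P)\otimes\id_\WW)\,\rho\bigr)=0$. Taking instead $\sigma:=\partr\WW\rho$, $S:=\PA$, $Q:=P$ shows that the right-hand condition $\suppo{\partr\WW\rho}\subseteq\PA$ is equivalent to $\tr\bigl((\id-P)\,\partr\WW\rho\bigr)=0$. These two traces coincide: the defining property of the partial trace (extended by linearity and continuity from $\sigma\otimes\tau\mapsto\sigma\cdot\tr\tau$) yields $\tr\bigl((M\otimes\id_\WW)\,\rho\bigr)=\tr\bigl(M\,\partr\WW\rho\bigr)$ for every bounded operator $M$ on $\elltwov\VV$, and I would instantiate it with $M:=\id-P$. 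Hence both support conditions are equivalent to the same equation, which proves the lemma.

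It is worth noting that the forward direction alone admits an even shorter proof that avoids positivity entirely: from $(P\otimes\id_\WW)\rho=\rho$ one applies $\partr\WW$ and uses $\partr\WW\bigl((M\otimes\id_\WW)\,\rho\bigr)=M\,\partr\WW\rho$ to obtain $P\,\partr\WW\rho=\partr\WW\rho$ directly. The genuine content thus lies in the reverse direction, where the positivity of $\rho$ is essential, since the trace-zero condition only controls $\suppo\sigma$ for $\sigma\geq0$. Accordingly, the main obstacle is not conceptual but is the infinite-dimensional bookkeeping: checking that $P\otimes\id_\WW$ really projects onto the \emph{closed} subspace $\PA\otimes\elltwov\WW$, that both partial-trace identities hold for general trace-class $\rho$ on a possibly non-separable space, and that the manipulations with $\sqrt\sigma$ are legitimate. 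In finite dimensions these are all immediate, and in general they follow from the continuity of $\partr\WW$ together with the closedness of $\PA$.
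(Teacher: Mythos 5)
Your proof is correct and follows essentially the same route as the paper's: both reduce each support containment to a scalar trace condition via the projector onto the subspace (your $\tr((\id-Q)\sigma)=0$ is trivially equivalent to the paper's $\tr(Q\sigma Q)=\tr\sigma$), and both conclude by matching the two traces through the compatibility $\tr((M\otimes\id_\WW)\rho)=\tr(M\,\partr\WW\rho)$ of the partial trace with the trace.
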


\begin{proof}
  For a predicate $\PB$,
  let $P_{\PB}$
  denote the projector onto $\PB$.
  Then for a mixed memory $\sigma=\sum_i\proj{\psi_i}$,
  $\suppo\sigma\subseteq\PB$
  iff $\forall i.\ \psi_i\in\PB$
  iff $\forall i.\ \norm{P_{\PB}\psi_i}=\norm{\psi_i}$
  iff
  $\forall i.\tr P_{\PB}\proj{\psi_i}\adj{P_{\PB}}=\tr\proj{\psi_i}$
  iff $\tr P_{\PB}\sigma\adj{P_{\PB}}=\tr\sigma$. (The last step uses that
  $\tr P_{\PB}\proj{\psi_i}\adj{P_{\PB}}\leq\tr\proj{\psi_i}$ for all $i$.)

  We have
  \begin{multline*}
    \tr P_{\PA\otimes\elltwov\WW}\rho\adj{P_{\PA\otimes\elltwov\WW}}
    =
    \tr \paren{P_{\PA}\otimes\id_{\WW}}\rho\adj{\paren{P_{\PA}\otimes\id_{\WW}}}
    \\=
    \tr \partr{\WW} \paren{P_{\PA}\otimes\id_{\WW}}\rho\adj{\paren{P_{\PA}\otimes\id_{\WW}}}
    =
    \tr  {P_{\PA}}\paren{\partr{\WW}\rho}\adj{{P_{\PA}}}
  \end{multline*}
  Thus $\suppo\rho\subseteq{{\PA\otimes\elltwo{\WW}}}$
  iff $\suppo\partr{\WW}\rho\subseteq\PA$.
\end{proof}

\begin{lemma}\label{lemma:sats.sum}
  If $\PA$ is a predicate,
  and $\rho_i$
  are a family of mixed memories with $\sats{\rho_i}\PA$
  for all $i$, and $\sum_i\rho_i$ exists,
  then $\sats{\sum_i\rho_i}\PA$.
\end{lemma}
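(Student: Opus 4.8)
The plan is to take the witnesses supplied by \autoref{def:satisfy} for each $\rho_i$ and add them up. Concretely, since $\sats{\rho_i}\PA$ for every $i$, \autoref{def:satisfy} gives a $(\XX\EE,\UU)$-separable mixed memory $\rho_i^\circ$ over $\XX\EE\UU$ with $\suppo\rho_i^\circ\subseteq\PA$ and $\partr{\EE\UU}\rho_i^\circ=\rho_i$. (All witnesses live over the same $\XX\EE\UU$, since $\PA$ is a single fixed predicate over that variable set.) I would then set $\rho^\circ:=\sum_i\rho_i^\circ$ and show that it is a valid witness for $\sats{\sum_i\rho_i}\PA$.

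The first thing to check --- and the only genuinely non-trivial point in the infinite-dimensional setting --- is that $\rho^\circ$ actually exists, i.e.\ that $\sum_i\rho_i^\circ$ converges. Here I would use that the partial trace preserves the trace, so $\tr\rho_i^\circ=\tr\rho_i$ for each $i$. Since $\sum_i\rho_i$ exists, every finite partial sum $\sum_{i\in F}\rho_i^\circ$ is a positive operator whose trace $\sum_{i\in F}\tr\rho_i=\tr\sum_{i\in F}\rho_i$ is bounded by $\tr\sum_i\rho_i<\infty$. Thus the partial sums form an increasing (in the Loewner order) net of positive trace-class operators bounded in trace, which therefore converges; this is exactly the notion of convergence for sums of positive operators used in this paper. (In particular only countably many $\rho_i^\circ$ are nonzero.)

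It then remains to verify the three defining conditions for $\rho^\circ$. Linearity and continuity of the partial trace give $\partr{\EE\UU}\rho^\circ=\sum_i\partr{\EE\UU}\rho_i^\circ=\sum_i\rho_i$, as required. For the support condition I would reuse the trace characterization from the proof of \autoref{lemma:suppo.partr}: writing $P_\PA$ for the projector onto $\PA$, the inclusion $\suppo\rho_i^\circ\subseteq\PA$ is equivalent to $\tr P_\PA\rho_i^\circ\adj{P_\PA}=\tr\rho_i^\circ$; summing over $i$ and using continuity of the trace yields $\tr P_\PA\rho^\circ\adj{P_\PA}=\tr\rho^\circ$, which gives $\suppo\rho^\circ\subseteq\PA$. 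Finally, $(\XX\EE,\UU)$-separability is preserved under sums, since each $\rho_i^\circ$ is of the form $\sum_j\sigma_{ij}\otimes\sigma_{ij}'$ and hence so is $\rho^\circ$. Together these three facts show that $\rho^\circ$ witnesses $\sats{\sum_i\rho_i}\PA$, completing the proof.
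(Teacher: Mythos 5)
Your proposal is correct and follows essentially the same route as the paper: sum the witnesses $\rho_i^\circ$, observe that $\sum_i\tr\rho_i^\circ=\sum_i\tr\rho_i<\infty$ guarantees convergence, and check the three conditions of \autoref{def:satisfy} for the sum. The only cosmetic difference is that for the support condition the paper directly uses $\suppo\sum_i\rho_i^\circ=\sum_i\suppo\rho_i^\circ\subseteq\PA$, whereas you route through the trace characterization of support; both are fine.
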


\begin{proof}
  $\PA$ is a predicate over $\XX\EE\UU$, and $\rho_i$ are mixed memories over $\XX$ for some $\XX\EE\UU$.
  Since $\sats{\rho_i}\PA$,  there are $(\XX\EE,\UU)$-separable $\rho^\circ_i$
  with $\suppo\rho_i^\circ\subseteq\PA$
  and $\partr{\EE\UU}\rho^\circ_i=\rho_i$.
  We have
  $\sum\tr\rho_i^\circ = \sum\tr\rho_i < \infty$, since $\sum\rho_i$ exists.
  Thus 
  $\hat\rho^\circ:=\sum\rho_i^\circ$
  exists.

  Then
  $\partr{\EE\UU}\hat\rho^\circ=\sum\partr{\EE\UU}\rho_i^\circ =
  \sum\rho_i$. And since all $\rho^\circ_i$
  are $(\XX\EE,\UU)$-separable,
  so is $\hat\rho^\circ$. Finally, $\suppo\hat\rho^\circ=\suppo\sum\rho_i^\circ=\sum\suppo
  \rho_i^\circ\subseteq\PA$.
  Thus $\sats{\sum\rho_i}\PA$, as desired.
\end{proof}

\begin{lemma}\label{lemma:apply.rho.pred}
  Let $\rho$
  be a mixed memory over $\XX$, let $\PA$ be a predicate, and assume $\sats\rho\PA$.
  \begin{compactenum}[(i)]
  \item\label{item:general} Let $M$
    be an operator from $\elltwov{\XX}$
    to $\elltwov{\XX'}$.
    Then $\sats{M\rho\adj M}{\oppred{\paren{\opon M\XX}}\PA}$.
  \item\label{item:disjoint} Let $\YY\subseteq\XX$ and $\fv(\PA)\cap\YY=\varnothing$
    and $N$
    be an operator from from $\elltwov{\YY}$
    to $\elltwov{\YY'}$.
    Then $\sats{\paren{\opon N\YY}\rho\adj{\paren{\opon N\YY}}}\PA$.
  \end{compactenum}
\end{lemma}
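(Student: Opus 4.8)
The plan is to read a witness for the conclusion directly off \autoref{def:satisfy}, by transporting the witness for the hypothesis through the operator, and then to deduce (ii) from (i). For part (i), since $\sats\rho\PA$ we may fix a $(\XX\EE,\UU)$-separable mixed memory $\rho^\circ$ over $\XX\EE\UU$ with $\suppo\rho^\circ\subseteq\PA$ and $\partr{\EE\UU}\rho^\circ=\rho$. I would take $\hat\rho^\circ:=(\opon M\XX)\,\rho^\circ\,\adj{(\opon M\XX)}$, a mixed memory over $\XX'\EE\UU$, as the witness for $\sats{M\rho\adj M}{\oppred{(\opon M\XX)}\PA}$, and verify the three defining conditions. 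First, since $\opon M\XX$ acts only on $\XX\subseteq\XX\EE$, writing $\rho^\circ=\sum_i\sigma_i\otimes\tau_i$ with $\sigma_i$ over $\XX\EE$ and $\tau_i$ over $\UU$ gives $\hat\rho^\circ=\sum_i\pb\paren{(\opon M\XX)\sigma_i\adj{(\opon M\XX)}}\otimes\tau_i$, so $\hat\rho^\circ$ is $(\XX'\EE,\UU)$-separable. Second, writing $\rho^\circ=\sum_i\proj{\psi_i}$ with each $\psi_i\in\suppo\rho^\circ\subseteq\PA$, we get $\hat\rho^\circ=\sum_i\proj{(\opon M\XX)\psi_i}$, so $\suppo\hat\rho^\circ\subseteq\oppred{(\opon M\XX)}\PA$ since each $(\opon M\XX)\psi_i$ lies in that image subspace. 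Third, because $\opon M\XX$ acts trivially on $\EE\UU$, the partial trace commutes with it, giving $\partr{\EE\UU}\hat\rho^\circ=M\pb\paren{\partr{\EE\UU}\rho^\circ}\adj M=M\rho\adj M$. This establishes (i).

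For part (ii), I would instantiate (i) with $M:=\opon N\YY$ viewed as an operator from $\elltwov\XX$ to $\elltwov{\XX'}$ for $\XX':=(\XX\setminus\YY)\YY'$; then $\opon M\XX=\opon N\YY$, and (i) yields $\sats{(\opon N\YY)\rho\adj{(\opon N\YY)}}{\oppred{(\opon N\YY)}\PA}$. It then remains only to show $\oppred{(\opon N\YY)}\PA\subseteq\PA$, after which the claim follows by monotonicity of $\vDash$ in its predicate argument (equivalently, because $\oppred{(\opon N\YY)}\PA\impl\PA$). Here I would use the hypothesis $\fv(\PA)\cap\YY=\varnothing$, which by the convention on \autopageref{page:pred.identify} means $\PA=\tilde\PA\otimes\elltwov\YY$ for a predicate $\tilde\PA$ not mentioning $\YY$; hence $\oppred{(\opon N\YY)}\PA=\tilde\PA\otimes\im N\subseteq\tilde\PA\otimes\elltwov{\YY'}$, which is precisely $\PA$ reinterpreted over $\XX'\EE\UU$.

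The calculations are largely bookkeeping; the steps that demand genuine care are the two structural facts invoked above: that conjugating by an operator supported on $\XX$ (resp.\ on $\YY\subseteq\XX$) preserves $(\XX\EE,\UU)$-separability, and that $\partr{\EE\UU}$ commutes with such an operator, so that the partial-trace condition survives. In the support step one should recall that $\oppred M\PA=\{M\psi:\psi\in\PA\}$ already contains each $(\opon M\XX)\psi_i$ and hence their span; the only delicate issue is closedness of this image in the infinite-dimensional case, which I would treat exactly as the paper treats subspaces elsewhere (by taking the closed span) and which is vacuous in finite dimension.
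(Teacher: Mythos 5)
Your proposal follows the paper's proof essentially verbatim: for part (i) you transport the witness $\rho^\circ$ through conjugation by $\opon M\XX$ and verify separability, support, and the partial-trace condition exactly as the paper does, and part (ii) is the same reduction to (i) followed by the inclusion $\oppred{\paren{\opon N\YY}}\PA\subseteq\PA$. The only (immaterial) difference is how that final inclusion is checked: you factor $\PA=\tilde\PA\otimes\elltwov\YY$ and compute the image directly as $\tilde\PA\otimes\overline{\im N}$, whereas the paper shows each image vector is fixed by the projector $\id_{\YY'}\otimes P'$ onto $\PA$ --- both rest on the same use of the hypothesis $\fv(\PA)\cap\YY=\varnothing$.
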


\begin{proof}
  We first prove \eqref{item:general}.
  Let $\mathcal E(\sigma):=M\sigma\adj M$
  for all $\sigma$.
  Then $\rho':=\mathcal E(\rho)$
  is a mixed memory over $\XX'$,
  and we need to show $\sats{\rho'}{\oppred{\oponp M\XX}\PA}$.
  And in the judgment $\sats\rho\PA$, $\PA$
  is a predicate over some variables $\XX\EE\UU$ for some variables $\EE\UU$,
  and in the judgment $\sats{\rho'}{\oppred M\PA}$,
  it is interpreted as a predicate over variables $\XX'\EE\UU$
  (see \autopageref{page:pred.identify}).

  Since $\sats\rho\PA$,
  there exists a $(\XX\EE,\UU)$-separable
  $\rho^\circ$
  with $\suppo\rho^\circ\subseteq\PA$
  and $\partr{\EE\UU}\rho^\circ=\rho$.
  Let
  $\tilde\rho^\circ := (\mathcal E\otimes\id_{\EE\UU})(\rho^\circ)$.
  Then $\tilde\rho^\circ$
  is $(\XX'\EE,\UU)$-separable.
  Furthermore,
  \[
    \partr{\EE\UU}\tilde\rho^\circ
    = (\id_{\XX}\otimes\tr)\circ(\mathcal E\otimes\id_{\EE\UU})(\rho^\circ)
    = (\mathcal E\otimes\tr)(\rho^\circ)
    = \mathcal E (\partr{\EE\UU}\rho^\circ)
    = \mathcal E (\rho) = \rho'.
  \]
  (Here $\tr$
  is seen as a superoperator from the trace-class operators over $\EE\UU$
  to the trace-class operators on the $1$-dimensional space $\setC$.)

  And finally, we can write $\rho^\circ=\sum_i\proj{\psi_i}$
  for some $\psi_i\in\elltwov{\XX\EE\UU}$ and thus
  \begin{align*}
    \suppo\tilde\rho^\circ &=
                             \suppo\,(\mathcal E\otimes\id_{\EE\UU})(\rho^\circ) =
                             \suppo\,\paren{\opon M\XX} \rho^\circ  \adj{\paren{\opon M\XX}} \\
                           &=
                             \suppo\sum\nolimits_i\pb\proj{\paren{\opon M\XX}\psi_i}
                             = \SPAN\braces{\paren{\opon M\XX}\psi_i}_i
                             = \oppred{\paren{\opon M\XX}}
                             {\SPAN\braces{\psi_i}_i}
                             \\
                           &= \oppred{\paren{\opon M\XX}}{\suppo\rho^\circ}
                             \subseteq
                             \oppred{\paren{\opon M\XX}}{\PA}
  \end{align*}
  Thus $\sats{\rho'}{\oppred{\paren{\opon M\XX}}{\PA}}$.
  This shows \eqref{item:general}.

  \medskip

  We now show \eqref{item:disjoint} by reduction to
  \eqref{item:general}. Let $M:=\paren{\opon N\YY}$. Then by \eqref{item:general},
  ${\paren{\opon N\YY}\rho\adj{\paren{\opon N\YY}}}
  =
  \sats{M\rho\adj M}{\oppred{\paren{\opon M\XX}}\PA}$. Thus we need to show that
  ${\oppred{\paren{\opon M\XX}}\PA}\subseteq\PA$.

  Fix $\psi\in{\oppred{\paren{\opon M\XX}}\PA}$.
  Then there exists $\psi'\in\PA$
  with $\psi= \oppred{\paren{\opon M\XX}}\psi'$.
  $\PA$ is a predicate on $\XX\EE\UU$ for some $\EE\UU$.
  Let $P$ be the projector onto $\PA$. Since $\fv(\PA)\cap\YY=\varnothing$,
  we can write $P=\id_{\YY}\otimes P'$ for some projector $P'$ on $\XX\EE\UU\setminus\YY$.
  And when $\PA$ is interpreted as a predicate on $\XX\EE\UU\setminus\YY\cup\YY'$
  (as, e.g., in the judgment $\sats{\oponp N\YY\rho\adj{\oponp N\YY}}\PA$),
  then the projector onto $\PA$ is $\id_{\YY'}\otimes P'$.
  We have
  \begin{align*}
    (\id_{\YY'}\otimes P')\psi
    &
            = P (N\otimes\elltwov{\XX\EE\UU\setminus\YY})\psi'
            = \paren{\id_{\YY'}\otimes P'}  (N\otimes\elltwov{\XX\EE\UU\setminus\YY})\psi'
    \\
    &= (N \otimes P') \psi'
      = \oponp M\XX \paren{\id_{\YY}\otimes P'}\psi'
            = {\paren{\opon M\XX}} P\psi'
            \starrel= {\paren{\opon M\XX}} \psi' = \psi.
  \end{align*}
  Here $(*)$
  follows since $\psi'$
  is in $\PA$,
  the image of the projector $P$.
  Thus $\paren{\id_{\YY'}\otimes P'}\psi=\psi$,
  hence $\psi$ is in the image of $\id_{\YY'}\otimes P$, hence $\psi\in \PA$.

  Since this holds for all $\psi\in{\oppred{\paren{\opon M\XX}}\PA}$, this implies ${\oppred{\paren{\opon M\XX}}\PA}\subseteq\PA$.
  With 
  $
  \sats{\paren{\opon N\YY}\rho\adj{\paren{\opon N\YY}}}{\oppred{\paren{\opon M\XX}}\PA}$, we get
  $
  \sats{\paren{\opon N\YY}\rho\adj{\paren{\opon N\YY}}}{\PA}$. This shows \eqref{item:disjoint}.
\end{proof}

\begin{lemma}\label{lemma:Apply}
  \Ruleref{Apply} is sound.
\end{lemma}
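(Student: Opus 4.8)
The plan is to obtain \rulerefx{Apply} directly from \autoref{lemma:apply.rho.pred}\eqref{item:general}, so that almost no new work is required. Unfolding \autoref{def:hoare}, the goal is to show that every mixed memory $\rho$ over $\XXall$ with $\sats\rho\PA$ satisfies $\sats{\denot{\apply U\XX}(\rho)}{\oppred{(\opon U\XX)}\PA}$. By the denotational semantics of the apply-command (\autoref{sec:qprogs}), $\denot{\apply U\XX}(\rho)=(\opon U\XX)\rho\adj{(\opon U\XX)}$, so the postcondition to be verified is about the state obtained by conjugating $\rho$ with the lifted isometry $\opon U\XX$.

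First I would instantiate \autoref{lemma:apply.rho.pred}\eqref{item:general} with its ambient variable set taken to be all of $\XXall$ and its operator $M:=\opon U\XX$, regarded as an operator on $\elltwov\XXall$ (that is, $U$ tensored with the identity on $\XXall\setminus\XX$). With these choices the hypothesis of the lemma is met, $\sats\rho\PA$ holds by assumption, and the conclusion reads $\sats{M\rho\adj M}{\oppred{(\opon M{\XXall})}\PA}$.

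The only point needing attention is purely notational: since $M=\opon U\XX$ already acts on the whole memory $\elltwov\XXall$, lifting it once more via $\opon{\cdot}{\XXall}$ leaves it unchanged, so $\opon M{\XXall}=\opon U\XX$ (both denote ``apply $U$ to the $\XX$-part, identity elsewhere'', and both lift to the ghost variables of $\PA$ in the same way). Hence the lemma yields $\sats{(\opon U\XX)\rho\adj{(\opon U\XX)}}{\oppred{(\opon U\XX)}\PA}$, which by the semantics computed above is exactly $\sats{\denot{\apply U\XX}(\rho)}{\oppred{(\opon U\XX)}\PA}$. Since $\rho$ was an arbitrary memory satisfying $\PA$, this gives $\hl\PA{\apply U\XX}{\oppred{(\opon U\XX)}\PA}$. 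I do not expect any genuine obstacle here: the substantive content --- preservation of $(\XX\EE,\UU)$-separability of the witness $\rho^\circ$ under applying $M$ only to the non-ghost part, together with the identity $\suppo\tilde\rho^\circ=\oppred{(\opon M\XX)}{\suppo\rho^\circ}$ --- has already been discharged in \autoref{lemma:apply.rho.pred}, and what remains is only to recognise the apply-semantics as an instance of the map $\sigma\mapsto M\sigma\adj M$ and to track the identification of $\opon U\XX$ with an operator on the full memory.
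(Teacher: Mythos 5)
Your proposal is correct, and it takes a slightly different route from the paper's own proof of \rulerefx{Apply}. The paper proves this rule directly from \autoref{def:satisfy}: it extracts the separable witness $\rho^\circ=\sum_i\proj{\psi_{\XXall\EE,i}\otimes\psi_{\UU,i}}$, explicitly builds $\hat\rho^\circ$ by applying $\opon U\XX$ to the $\XXall\EE$ factors, and then verifies separability, the support inclusion, and the partial-trace condition by hand. You instead observe that this entire construction is already packaged in \autoref{lemma:apply.rho.pred}\,\eqref{item:general} (which the paper does prove beforehand, but only invokes for \rulerefx{If} and \rulerefx{While}), and that \rulerefx{Apply} is the instance $M:=\opon U\XX$ on $\elltwov\XXall$ combined with the observation that $\denot{\apply U\XX}(\rho)=M\rho\adj M$ and that re-lifting $M$ via $\opon{\cdot}{\XXall}$ changes nothing. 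The instantiation is legitimate --- there is no circularity, since \autoref{lemma:apply.rho.pred} is proved independently of any Hoare rule, and its proof contains exactly the witness construction the paper repeats inline --- so your version buys a shorter, non-redundant derivation at the cost of one notational check (that $\opon{(\opon U\XX)}{\XXall}$ and $\opon U\XX$ agree as operators on $\elltwov{\XXall\EE\UU}$), which you handle correctly.
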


\begin{proof}%
  \newcommand\prog{\apply U\XX}%
  \newcommand\Ux{\paren{\opon U\XX}_{\XXall}}%
  \newcommand\Uxe{\paren{\opon U\XX}_{\XXall\EE}}%
  \newcommand\Uxx{\paren{\opon U\XX}_{\XXall\EE\UU}}%
  The predicate $\PA$
  is a space of quantum memories over some variables $\XXall\EE\UU$
  with $\XX\subseteq\XXall$.
  In this proof, we will encounter both the term $\opon U\XX$
  interpreted as an operator on quantum memories over $\XXall$,
  and over $\XXall\EE\UU$.
  Since the syntax $\opon U\XX$
  does not disambiguate between the two (the space we are operating on
  is left implicit), we write $\Ux$ and $\Uxx$, respectively.
  Note that $\Uxx=\Ux\otimes\id_{\EE\UU}$.

  \medskip We need to show that for any mixed memory $\rho$
  over $\XX$,
  $\sats\rho\PA$
  implies $\sats{\denot\prog(\rho)}{\oppred{\Uxx}\PA}$.
  Since $\sats\rho\PA$, there exists an $(\XXall\EE,\UU)$-separable $\rho^\circ$
  with $\suppo\rho^\circ\subseteq\PA$
  and $\partr{\EE\UU}\rho^\circ=\rho$.
  Since $\rho^\circ$
  is $(\XXall\EE,\UU)$-separable,
  we can write it as
  $\rho^\circ=\sum_i\proj{\psi_{\XXall\EE,i}\otimes\psi_{\UU,i}}$.
  Since $\suppo\rho^\circ\subseteq\PA$, this implies that
  $\psi_{\XXall\EE,i}\otimes\psi_{\UU,i}\in\PA$ for all $i$.

  Let
  \[
    \hat\rho^\circ := \sum_i\pB\proj{
      \pb\paren{\Uxe\ \psi_{\XXall\EE,i}} \otimes \psi_{\UU,i}
    }.
  \]
  Then $\hat\rho^\circ$ is $(\XXall\EE,\UU)$-separable.
  And
  \[
    \pb\paren{\Uxe\ \psi_{\XXall\EE,i}} \otimes \psi_{\UU,i}
    =
    \Uxx\paren{\psi_{\XXall\EE,i} \otimes \psi_{\UU,i}}
    \in \oppred\Uxx\PA.
  \]
  So $\suppo\hat\rho^\circ\in\oppred\Uxx\PA$. Finally,
  \begin{align*}
    \partr{\EE\UU}\hat\rho^\circ
    &=
      \Ux
      \pB\paren{
      \partr{\EE\UU}
      \sum\nolimits_i \proj{\psi_{\XXall\EE,i}\otimes\psi_{\UU,i}}
      }
      \adj\Ux
    \\
    &=
      \Ux
      \paren{
      \partr{\EE\UU}
      \rho^\circ
      }
      \adj\Ux 
    =
      \Ux
      \rho
      \adj\Ux
    = \denot\prog(\rho).
  \end{align*}
  Hence $\sats{\denot\prog(\rho)}{\oppred\Uxx\PA}$.
\end{proof}

\begin{lemma}\label{lemma:Init}
  \Ruleref{Init} is sound.
\end{lemma}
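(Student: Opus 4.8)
The plan is to prove soundness by explicitly transforming a witness for the precondition into a witness for the postcondition. Write $\PB := \psubst\PA\ee\xx,\ \xx\quanteq\ket0$ for the postcondition, where $\ee$ is a fresh entangled ghost. Recall that $\denot{\init\xx}(\rho) = \partr\xx\rho \otimes \proj{\ket0_\xx}$, and that $\psubst\PA\ee\xx = \oppred{\Urename\xx\ee}\PA$ renames the program variable $\xx$ (a free variable of $\PA$) into the fresh ghost $\ee$. First I would record the structural fact that, since $\ee$ and $\xx$ refer to disjoint variables, the postcondition factors as a tensor product, $\PB = (\oppred{\Urename\xx\ee}\PA) \otimes \SPAN\{\ket0_\xx\}$, using the identity $(A \otimes \mathcal H_2) \cap (\mathcal H_1 \otimes B) = A \otimes B$ for cylinder extensions of subspaces (together with the identification convention of \autopageref{page:pred.identify}).

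Given a mixed memory $\rho$ over $\XXall$ with $\sats\rho\PA$, \autoref{def:satisfy} supplies a $(\XXall\EE,\UU)$-separable witness $\rho^\circ$ over $\XXall\EE\UU$ with $\suppo\rho^\circ \subseteq \PA$ and $\partr{\EE\UU}\rho^\circ = \rho$. The construction I would use moves the content of $\xx$ into the fresh ghost $\ee$ and then attaches a fresh $\xx$-register in state $\ket0$: set $\hat\rho^\circ := (\Urename\xx\ee\,\rho^\circ\,\adj{\Urename\xx\ee}) \otimes \proj{\ket0_\xx}$, a mixed memory over $\XXall\,\ee\,\EE\,\UU$. (By \autoref{lemma:ghost.sets} I may take the postcondition's entangled-ghost set to be $\ee\EE$.)

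Then I would verify the three defining conditions for $\sats{\denot{\init\xx}(\rho)}{\PB}$. For the support, since $\Urename\xx\ee$ (extended by the identity) is an isomorphism, $\suppo(\Urename\xx\ee\,\rho^\circ\,\adj{\Urename\xx\ee}) = \oppred{\Urename\xx\ee}\suppo\rho^\circ \subseteq \oppred{\Urename\xx\ee}\PA$, and tensoring with $\proj{\ket0_\xx}$ keeps the support inside $(\oppred{\Urename\xx\ee}\PA) \otimes \SPAN\{\ket0_\xx\} = \PB$. For separability, I would decompose $\rho^\circ = \sum_i \rho_i \otimes \rho_i'$ ($\rho_i$ over $\XXall\EE$, $\rho_i'$ over $\UU$); since $\xx \in \XXall$ the rename acts only on the $\XXall\EE$ side, so $\hat\rho^\circ$ is $(\XXall\,\ee\,\EE,\UU)$-separable. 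For the partial trace, I would use that renaming $\xx \to \ee$ and then tracing out $\ee$ equals tracing out $\xx$, giving $\partr{\ee\EE\UU}\hat\rho^\circ = (\partr{\xx\EE\UU}\rho^\circ) \otimes \proj{\ket0_\xx} = \partr\xx\rho \otimes \proj{\ket0_\xx} = \denot{\init\xx}(\rho)$.

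The step I expect to require the most care is the separability check, since it is exactly here that the choice of an \emph{entangled} ghost $\ee$ (rather than an unentangled one) is essential: the erased content of $\xx$ may have been entangled with the other program variables, so it must be allowed to sit on the $\XXall\EE$ side of the separability decomposition — an unentangled ghost would make the rule unsound. A secondary point to handle cleanly is the tensor factorization of $\PB$ and the bookkeeping of which variable set each predicate lives over; everything else is routine manipulation of partial traces and supports.
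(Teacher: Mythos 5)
Your proposal is correct and follows essentially the same route as the paper's proof: both construct the witness for the postcondition by renaming $\xx$ to the fresh entangled ghost $\ee$ in the given witness $\rho^\circ$ and tensoring with $\proj{\ket0_{\xx}}$, then check support, $(\XXall\EE\ee,\UU)$-separability, and the partial-trace condition in the same way. The only cosmetic difference is that you factor the postcondition as a tensor product up front, whereas the paper verifies the two conjuncts of the support condition separately; both are fine.
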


\begin{proof}
  \newcommand\Asub{\psubst\PA\ee\xx_{\XXall\EE\ee\UU}} $\PA$
  is a predicate on some variables $\XXall\EE\UU$.
  We also have that $\ee\notin\EE$
  and that $\ee$
  and $\xx$
  have the same type because otherwise $\psubst\PA\ee\xx$
  would not be well-typed.  Let
  $\PB:=\paren{ \psubst\PA\ee\xx,\ {\xx\quanteq\ket0}}$ (the postcondition).
  Note that there is an implicit conversion happening:
  By definition,
  $\psubst\PA\ee\xx$
  is a predicate on $\XXall\EE\ee\UU\setminus\xx$,
  so it does not make sense to intersect it ($\wedge$)
  with $\xx\quanteq\ket0$.
  But, as discussed at the end of \autoref{sec:var.mem.pred}
  (\autopageref{page:pred.identify}), we identify any predicate $\PA'$
  with $\PA'\otimes\elltwov\xx$.
  In particular, $\psubst\PA\ee\xx$
  is identified with $\psubst\PA\ee\xx\otimes\elltwov\xx=:\Asub$
  on $\XXall\EE\ee\UU$.
  In that notation, $\PB$
  is actually $\PB= \paren{\Asub,\ {\xx\quanteq\ket0}}$,
  a predicate over $\XXall\EE\ee\UU$.

  \medskip

  To show the rule, we need to show that for any mixed memory
  $\rho$
  on $\XXall$,
  $\sats\rho\PA$
  implies $\sats\rho\PB$.
  $\sats\rho\PA$
  implies that there is a $(\XXall\EE,\UU)$-separable
  $\rho^\circ$
  such that $\suppo\rho^\circ\subseteq\PA$
  and $\partr{\EE\UU}\rho^\circ=\rho$.

  Let $\mathcal E$ be the canonical mapping from mixed memories over $\xx$
  to mixed memories over $\ee$. (Formally,
    $\mathcal E(\sigma)=\Urename{\xx}\ee\adj{\sigma\Urename\xx\ee}$
    where $\Urename\xx\ee$
    was defined on \autopageref{page:def:Urename}.)  Then
  $\mathcal E\otimes\id_{\XXall\EE\UU\setminus\xx}$
  maps mixed memories over $\XXall\EE\UU$
  to mixed memories over $\XXall\EE\ee\UU\setminus\xx$
  by renaming $\xx$ to $\ee$. We define:
  \[
    \hat\rho^\circ := \paren{\mathcal E\otimes\id_{\XXall\EE\UU\setminus\xx}}(\rho^\circ)
    \qquad\text{and}\qquad
    \tilde\rho^\circ := \hat\rho^\circ \otimes \proj{\ket0_{\xx}}.
  \]
  Since $\rho^\circ$
  is $(\XXall\EE,\UU)$-separable,
  and ${\mathcal E\otimes\id_{\XXall\EE\UU\setminus\xx}}$
  is the identity on $\UU$, we have that $\hat\rho^\circ$ is 
  $(\XXall\EE\ee\setminus\xx,\UU)$-separable, and thus $\tilde\rho^\circ$ is 
  $(\XXall\EE\ee,\UU)$-separable.

  We have
  \begin{align*}
    \suppo\hat\rho^\circ &=
    \suppo \paren{\Urename\xx\ee\otimes\id_{\XXall\EE\UU\setminus\xx}}\rho^\circ
    \adj{\paren{\Urename\xx\ee\otimes\id_{\XXall\EE\UU\setminus\xx}}}
    \\ &
         =
         \oppred{\paren{\Urename\xx\ee\otimes\id_{\XXall\EE\UU\setminus\xx}}}{\suppo\rho^\circ}
         \subseteq
         \oppred{\paren{\Urename\xx\ee\otimes\id_{\XXall\EE\UU\setminus\xx}}}{\PA}
         \starrel=
         \psubst\PA\xx\ee.
  \end{align*}
  Here $(*)$ is the definition of $\psubst\PA\xx\ee$ (\autopageref{page:def:psubst}).
  Thus
  \[
    \suppo\tilde\rho^\circ\subseteq\suppo\hat\rho^\circ \otimes \elltwov{\xx}
    \subseteq \psubst\PA\xx\ee \otimes \elltwov{\xx} = \Asub.
  \]
  And $\suppo\tilde\rho^\circ\subseteq(\xx\quanteq\ket0)$ by definition of $\tilde\rho^\circ$.
  Thus $\suppo\tilde\rho^\circ\subseteq \PB$.
  Finally,
  \begin{align*}
    \partr{\EE\ee\UU}\tilde\rho^\circ
    &=
    \paren{\partr{\EE\ee\UU}\hat\rho^\circ}
    \otimes
      \proj{\ket0_{\xx}}
      \starrel=
      \paren{\partr{\EE\xx\UU}\rho^\circ}
      \otimes
      \proj{\ket0_{\xx}}
    \\&
      =
      \paren{\partr\xx{\partr{\EE\UU}\rho^\circ}}
      \otimes
      \proj{\ket0_{\xx}}
      =
      \paren{\partr\xx\rho}
      \otimes
    \proj{\ket0_{\xx}}
    =
    \denot{\init\xx}(\rho).
  \end{align*}
  Here $(*)$
  uses that $\hat\rho^\circ$
  is the result of renaming $\xx$
  to $\ee$
  in $\rho^\circ$,
  so tracing out $\ee$
  in $\hat\rho^\circ$
  is the same as tracing out $\xx$ in $\rho^\circ$.

  Altogether, we have $\sats{\denot{\init\xx}(\rho)}\PB$.
\end{proof}

\begin{lemma}\label{lemma:If}
  \Ruleref{If} is sound.
\end{lemma}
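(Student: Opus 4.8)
The plan is to split the if-command into its two measurement branches, discharge each branch with one of the two premises, and then recombine the branches using closure of satisfaction under sums. I fix a mixed memory $\rho$ over $\XXall$ with $\sats\rho\PA$ and recall from the semantics in \autoref{sec:qprogs} that $\denot{\ifte\xx\bc\bd}(\rho)=\denot\bc\pb\paren{\restrict1(\rho)}+\denot\bd\pb\paren{\restrict0(\rho)}$, where $\restrict i(\rho)=\paren{\opon{\proj{\ket i}}\xx}\rho\adj{\paren{\opon{\proj{\ket i}}\xx}}$. The goal is to establish $\sats{\denot{\ifte\xx\bc\bd}(\rho)}\PB$.

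First I would treat the branch where the measurement yields $1$. Applying \autoref{lemma:apply.rho.pred}\eqref{item:general} with $M:=\opon{\proj{\ket1}}\xx$ (viewed as an operator on $\elltwov\XXall$, so that $\opon M\XXall=M$) turns $\sats\rho\PA$ into $\sats{\restrict1(\rho)}{\oppred{\paren{\opon{\proj{\ket1}}\xx}}\PA}$; here the operator acts as the identity on the ghosts $\EE\UU$, which is precisely why the lifted projector lands on the precondition appearing in the first premise. By that premise and \autoref{def:hoare} I then obtain $\sats{\denot\bc\pb\paren{\restrict1(\rho)}}\PB$. The branch for outcome $0$ is entirely symmetric, using $M:=\opon{\proj{\ket0}}\xx$ together with the second premise, and yields $\sats{\denot\bd\pb\paren{\restrict0(\rho)}}\PB$.

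Finally, the two branches are recombined. Since $\denot{\ifte\xx\bc\bd}(\rho)$ exists by definition of the semantics, the sum of its two summands exists, so \autoref{lemma:sats.sum} applied to the two-element family consisting of $\denot\bc\pb\paren{\restrict1(\rho)}$ and $\denot\bd\pb\paren{\restrict0(\rho)}$ gives $\sats{\denot{\ifte\xx\bc\bd}(\rho)}\PB$, as desired. I expect no serious obstacle here, as the argument is essentially bookkeeping on top of the auxiliary lemmas. The only point requiring care is the application of \autoref{lemma:apply.rho.pred}\eqref{item:general}, namely checking that the projector $\opon{\proj{\ket i}}\xx$ is correctly interpreted as an operator on the full space $\elltwov\XXall$ acting trivially on the ghost variables, so that $\oppred{\paren{\opon{\proj{\ket i}}\xx}}\PA$ matches the preconditions of the premises exactly.
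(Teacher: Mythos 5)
Your proposal is correct and follows essentially the same route as the paper's own proof: both decompose $\denot{\ifte\xx\bc\bd}(\rho)$ into the two measurement branches, apply \autoref{lemma:apply.rho.pred}\,\eqref{item:general} to push the projectors into the precondition, discharge each branch with the corresponding premise, and recombine via \autoref{lemma:sats.sum}. No gaps.
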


\begin{proof}
  The predicate $\PA$
  is a space of quantum memories over some variables $\XXall\EE\UU$.
  The predicate $\PB$ is a space of quantum memories over some variables
  $\XXall\EE'\UU'$.

  \medskip
  
  We need to show that for any mixed memory $\rho$
  on $\XXall$,
  if $\sats\rho\PA$,
  then $\sats{\denot{\ifte\xx\bc\bd}(\rho)}\PB$.
  Recall that
  $\denot{\ifte\xx\bc\bd}(\rho)=
  \denot\bc(\rho_1)
  +
  \denot\bd(\rho_0)$ where
  $\rho_i:={\restrict i(\rho)}=
  \paren{\opon{\proj{\ket i}}\xx}\rho\adj{\paren{\opon{\proj{\ket i}}\xx}}$.

  By \autoref{lemma:apply.rho.pred}, we have
  $\rho_i=\sats{\paren{\opon{\proj{\ket i}}\xx}\rho\adj{\paren{\opon{\proj{\ket i}}\xx}}}
  {\oppred{\oponp{\proj{\ket i}}\xx}\PA}$.  
  Since $\pb\hl{\oppred{\paren{\opon{\proj{\ket1}}\xx}}\PA}\bc\PB$
  and $\pb\hl{\oppred{\paren{\opon{\proj{\ket0}}\xx}}\PA}\bd\PB$
  by assumption of the \rulerefx{If} rule, it follows that
  $\sats{\denot\bc(\rho_1)}{\PB}$
  and $\sats{\denot\bd(\rho_0)}{\PB}$.
  Thus with \autoref{lemma:sats.sum},
  $\denot{\ifte\xx\bc\bd}(\rho)
  =
  \sats{\denot\bc(\rho_1) +
    \denot\bd(\rho_0)}\PB
  $.
  \end{proof}

\begin{lemma}\label{lemma:While}
  \Ruleref{While} is sound.
\end{lemma}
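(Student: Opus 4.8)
The plan is to unfold the denotational semantics of the loop and reduce soundness to the two auxiliary lemmas \autoref{lemma:apply.rho.pred} and \autoref{lemma:sats.sum}. Recall that
\[
  \denot{\while\xx\bc}(\rho) = \sum_{n=0}^\infty \restrict0\paren{(\denot\bc\circ\restrict1)^n(\rho)},
\]
so it suffices to show, for each $n$, that the summand $\restrict0\paren{(\denot\bc\circ\restrict1)^n(\rho)}$ satisfies the postcondition $\oppred{\paren{\opon{\proj{\ket0}}\xx}}\PA$, and then to add up these facts. Writing $\rho_n := (\denot\bc\circ\restrict1)^n(\rho)$, the key invariant I would establish is that $\sats{\rho_n}\PA$ for every $n$.

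I would prove $\sats{\rho_n}\PA$ by induction on $n$. The base case $n=0$ is just the hypothesis $\sats\rho\PA$, since $\rho_0=\rho$. For the inductive step, assume $\sats{\rho_n}\PA$. Since $\restrict1(\rho_n)=\paren{\opon{\proj{\ket1}}\xx}\rho_n\adj{\paren{\opon{\proj{\ket1}}\xx}}$, applying \autoref{lemma:apply.rho.pred}\eqref{item:general} with the operator $M:=\opon{\proj{\ket1}}\xx$ yields $\sats{\restrict1(\rho_n)}{\oppred{\paren{\opon{\proj{\ket1}}\xx}}\PA}$. The single premise of the \rulerefx{While} rule, namely $\hl{\oppred{\paren{\opon{\proj{\ket1}}\xx}}\PA}\bc\PA$, then gives $\sats{\denot\bc(\restrict1(\rho_n))}\PA$, i.e.\ $\sats{\rho_{n+1}}\PA$, closing the induction. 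Note that this is exactly the place where the loop-body judgment is used, and it is the only genuinely new ingredient compared with the proof of the \rulerefx{If} rule (\autoref{lemma:If}).

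With the invariant in hand, a second application of \autoref{lemma:apply.rho.pred}\eqref{item:general}, this time with $M:=\opon{\proj{\ket0}}\xx$, shows that $\sats{\restrict0(\rho_n)}{\oppred{\paren{\opon{\proj{\ket0}}\xx}}\PA}$ for every $n$. Finally I would invoke \autoref{lemma:sats.sum} to conclude that the sum $\sum_n \restrict0(\rho_n)=\denot{\while\xx\bc}(\rho)$ also satisfies $\oppred{\paren{\opon{\proj{\ket0}}\xx}}\PA$. The hypothesis of that lemma (that $\sum_n \restrict0(\rho_n)$ exists) is automatic here, since this sum is by definition the well-defined mixed memory $\denot{\while\xx\bc}(\rho)$. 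I expect no real obstacle beyond bookkeeping; the only point requiring a moment's care is this last step, where one must observe that summability of the summands is supplied for free by the very definition of the loop semantics as a convergent sum.
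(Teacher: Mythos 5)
Your proposal is correct and follows essentially the same route as the paper's own proof: the same induction establishing $\sats{\rho_n}\PA$ via \autoref{lemma:apply.rho.pred}\eqref{item:general} and the loop-body premise, followed by a second application of that lemma to each summand and \autoref{lemma:sats.sum} to conclude. Your closing remark that summability is supplied by the definition of the loop semantics is a correct observation of a point the paper leaves implicit.
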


\begin{proof}
  The predicate $\PA$
  is a subspace of quantum memories over $\XXall\EE\UU$ for some $\EE\UU$.

  \medskip
  
  We need to show that for any mixed memory $\rho$
  on $\XXall$,
  if $\sats\rho\PA$,
  then $\sats{\denot{\while\xx\bc}(\rho)}{\oppred{\paren{\opon{\proj{\ket
            0}}\xx}}\PA}$.
  Recall that
  $\denot{\while\yy\bc}(\rho) = \sum_{n=0}^\infty \restrict0 (\rho_n)$
  with $\rho_n:=\denot\bc\paren{\restrict 1(\rho_{n-1})}$
  and $\rho_0:=\rho$,
  where $\restrict 1,\restrict 0$
  are defined by
  $\restrict i(\sigma):= \paren{\opon{\proj{\ket
        i}}\xx}\sigma\adj{\paren{\opon{\proj{\ket i}}\xx}}$.

  We show $\sats{\rho_n}\PA$
  for $n\geq0$ by induction. The base case follows since $\rho_0=\sats{\rho}\PA$.
  For the induction step, assume that $\sats{\rho_n}\PA$.
  Then $\restrict 1(\rho_n)=
  \sats{\paren{\opon{\proj{\ket1}}\xx}\rho_n\adj{\paren{\opon{\proj{\ket1}}\xx}}}
  {\oppred{\oponp{\proj{\ket1}}\xx}\PA}$
  by \autoref{lemma:apply.rho.pred}\,\eqref{item:general}.
  From the premise of the \rulerefx{While} rule, it then follows that
  $\rho_{n+1}=\sats{\denot\bc\pb\paren{\restrict1(\rho_n)}}\PA$.

  Since $\sats{\rho_n}\PA$,
  we have $\sats{\restrict 0(\rho_n)}{\oppred{\paren{\opon{\proj{\ket
            0}}\xx}}\PA}$
  (again by \autoref{lemma:apply.rho.pred}\,\eqref{item:general}).
  Since this holds for all $n$,
  with \autoref{lemma:sats.sum} we have that 
  $\denot{\while\yy\bc}(\rho) =
  \sats { \sum_{n=0}^\infty \restrict0 (\rho_n) }
  {\oppred{\paren{\opon{\proj{\ket 0}}\xx}}\PA}$.
\end{proof}

\begin{lemma}\label{lemma:Case}
  \Ruleref{Case} is sound.
\end{lemma}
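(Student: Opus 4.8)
The plan is to unfold \autoref{def:hoare} and \autoref{def:satisfy} and reduce the claim to the premises by decomposing the separable witness into product states, on each of which the disentangling hypothesis pins down the state of $\XX$. So fix a mixed memory $\rho$ over $\XXall$ with $\sats\rho\PA$; I must show $\sats{\denot\bc(\rho)}\PB$. Choosing (by \autoref{lemma:ghost.sets}) a common unentangled ghost set $\UU$ for $\PA$ and $\PC$ and an entangled ghost set $\EE$, \autoref{def:satisfy} provides a $(\XXall\EE,\UU)$-separable $\rho^\circ$ over $\XXall\EE\UU$ with $\suppo\rho^\circ\subseteq\PA$ and $\partr{\EE\UU}\rho^\circ=\rho$. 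Separability lets me write $\rho^\circ=\sum_i\proj{\psi_{\XXall\EE,i}\otimes\psi_{\UU,i}}$, and after discarding vanishing summands I may assume $\psi_{\XXall\EE,i}\neq0$ and $\psi_{\UU,i}\neq0$. Since $\PA\subseteq\PC$, each product vector lies in $\PC$.

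Next I would apply the disentangling hypothesis termwise. Setting $\VV:=(\XXall\setminus\XX)\EE$ (which is disjoint from $\XX\UU$), each $\psi_{\XXall\EE,i}$ is a memory over $\VV\XX$, and $\psi_{\XXall\EE,i}\otimes\psi_{\UU,i}\in\PC$ with both factors nonzero. Because $\PC$ is $M$-disentangling on $\XX\UU$, \autoref{def:disentangling} yields a factorization $\psi_{\XXall\EE,i}=\psi_{\VV,i}\otimes\psi_{\XX,i}$ with $\psi_{\XX,i}\in M$ (note $\psi_{\XX,i}\neq0$ since $\psi_{\XXall\EE,i}\neq0$). Now set $\sigma_i:=\proj{\psi_{\XXall\EE,i}\otimes\psi_{\UU,i}}$ and $\rho_i:=\partr{\EE\UU}\sigma_i$, so that $\rho=\sum_i\rho_i$. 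The single-term operator $\sigma_i$ is itself $(\XXall\EE,\UU)$-separable; its support lies in $\PA$ (inherited from $\suppo\rho^\circ$) and, by the factorization, in $\XX\quanteq\psi_{\XX,i}$; and $\partr{\EE\UU}\sigma_i=\rho_i$. Hence $\sigma_i$ witnesses $\sats{\rho_i}{\XX\quanteq\psi_{\XX,i},\ \PA}$, and since $\psi_{\XX,i}\in M$ the third premise of \rulerefx{Case} gives $\sats{\denot\bc(\rho_i)}\PB$.

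Finally I would reassemble the pieces using linearity and continuity of the denotation: $\denot\bc(\rho)=\denot\bc\pb\paren{\sum_i\rho_i}=\sum_i\denot\bc(\rho_i)$, and since every summand satisfies $\PB$, \autoref{lemma:sats.sum} gives $\sats{\denot\bc(\rho)}\PB$, as required.

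I expect the main obstacle to be bookkeeping rather than a deep step: aligning the ghost sets of $\PA$ and $\PC$ (handled by \autoref{lemma:ghost.sets}) so that the single $\UU$ interpreting $\sats\rho\PA$ is exactly the one over which $\PC$ is disentangling, and checking the disjointness of $\VV=(\XXall\setminus\XX)\EE$ from $\XX\UU$ demanded by \autoref{def:disentangling}. The one genuinely delicate point is that disentangling must be invoked on each product term $\psi_{\XXall\EE,i}\otimes\psi_{\UU,i}$ rather than on $\rho$ as a whole; this is precisely why the separability of $\rho^\circ$ (hence of each $\sigma_i$) is indispensable, for without it the state of $\XX$ could fail to be one of the $\psi\in M$.
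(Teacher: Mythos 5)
Your overall strategy is exactly the paper's: decompose the separable witness $\rho^\circ$ into product terms, apply the disentangling property termwise to pin $\XX$ to a state in $M$, invoke the premise on each resulting piece, and reassemble with \autoref{lemma:sats.sum}. The final assembly and the use of each $\proj{\psi_{\mathit{all},i}}$ as its own witness for $\sats{\rho_i}{\XX\quanteq\psi_{\XX,i},\ \PA}$ are correct.

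However, there is a genuine gap in the step you dismiss as bookkeeping. You assume that \autoref{lemma:ghost.sets} lets you arrange for the unentangled ghost set interpreting $\sats\rho\PA$ to be \emph{exactly} the set $\UU$ on which $\PC$ is disentangling. That lemma only permits enlarging the ghost sets: any admissible interpretation must use ghost sets containing $\fv(\PA)\setminus\XXall$, and $\PA$ may have unentangled ghosts beyond those of $\PC$ (e.g.\ $\PA=\PC\wedge\separable\YY$ introduces a fresh $\uu'$). So in general the witness is $(\XXall\EE,\Tilde\UU)$-separable with $\Tilde\UU\supsetneq\UU$, and the second tensor factor $\psi_{\UU\WW,i}$ (with $\WW:=\Tilde\UU\setminus\UU$) may be \emph{entangled between $\UU$ and $\WW$}. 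Then $\psi_{\XXall\EE,i}\otimes\psi_{\UU\WW,i}$ is not of the form $\psi_{\VV\XX}\otimes\psi_{\UU}$ required by \autoref{def:disentangling}, and your termwise application of the hypothesis does not go through. The paper closes this by writing $\psi_{\UU\WW,i}=\sum_j\phi_{\UU,j}\otimes\phi_{\WW,j}$ with orthogonal $\phi_{\WW,j}$, applying $\oponp{\proj{\phi_{\WW,j}}}\WW$ to the whole vector (which stays inside $\PC$ because $\fv(\PC)\cap\WW=\varnothing$), and thereby extracting a nonzero product vector $\psi_{\XX\VV,i}\otimes\phi_{\UU,j^*}\in\PC$ to which \autoref{def:disentangling} applies. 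With that one additional step inserted, the rest of your argument is sound and coincides with the paper's proof.
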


\begin{proof}
  The predicate $\PA$
  is a subspace of quantum memories over 
  $\XXall\EE\Tilde\UU$ for some variables $\EE\Tilde\UU$
  with $\UU\subseteq\Tilde\UU$.
  And $\PB$
  is a subspace of quantum memories over 
  $\XXall\EE'\UU'$ for some variables $\EE'\UU'$.

  \medskip
  
  We need to show that if $\sats\rho\PA$
  then $\sats{\denot\bc(\rho)}\PB$.
  Since $\sats\rho\PA$,
  there is a $(\XXall\EE,\Tilde\UU)$-separable
  $\rho^\circ$
  with $\suppo\rho^\circ\subseteq\PA$
  and $\partr{\EE\Tilde\UU}\rho^\circ=\rho$.
  For making the notation more compact in the remainder of the proof,
  let $\VV:=\XXall\EE\setminus\XX$
  and $\WW:=\Tilde\UU\setminus\UU$.
  In that notation, $\rho^\circ$
  is $(\XX\VV,\UU\WW)$-separable.
  Thus $\rho^\circ$
  can be written as
  $\rho^\circ=\sum_i\proj{\psi_{\mathit{all},i}}$
  where $\psi_{\mathit{all},i}:=\psi_{\XX\VV,i}\otimes\psi_{\UU\WW,i}$
  for some quantum memories $\psi_{\XX\VV,i}\neq0$
  over $\XX\VV$ and $\psi_{\UU\WW,i}\neq0$ over $\UU\WW$.

  Fix some $i$.
  (We will omit $i$
  from the subscripts for now.) We can write $\psi_{\UU\WW}\neq0$
  as a nonempty sum
  $\psi_{\UU\WW} = \sum_j\phi_{\UU,j}\otimes\phi_{\WW,j}$
  with quantum memories $\phi_{\UU,j},\phi_{\WW,j}\neq0$
  over $\UU$
  and $\WW$
  respectively where the $\phi_{\WW,j}$ are orthogonal.
  We have $\psi_\mathit{all}\in\suppo\rho^\circ \subseteq\PA\subseteq\PC$.
  Thus
  \begin{equation*}
    \psi_{\XX\VV} \otimes \phi_{\UU,j} \otimes \phi_{\WW,j}
    =
    \oponp{\proj{\phi_{\WW,j}}}\WW\,\psi_{\mathit{all}}
    \in 
    \oppred{\oponp{\proj{\phi_{\WW,j}}}\WW}\PC
    \starrel\subseteq \PC.
  \end{equation*}
  Here $(*)$ follows from $\fv(\PC)\cap\WW=\varnothing$.
  Hence $\psi_{\XX\VV}\otimes\phi_{\UU,j}\in \PC$ for all $j$.
  (Note here that the notation $\PC$
  is overloaded both as a predicate over $\XX\VV\UU\WW$
  and over $\XX\VV\UU$, see \autopageref{page:pred.identify}.)
  Thus $\psi_{\XX\VV}\otimes\phi_{\UU,j^*}\in\PC$ for some $j^*$.
  (Recall that the sum $\psi_{\UU\WW} = \sum_j\phi_{\UU,j}\otimes\phi_{\WW,j}$ was nonempty.)

  Since $\PC$
  is $M$-disentangling,
  this implies that $\psi_{\XX\VV}=\psi_{\XX}\otimes\psi_{\VV}$
  for some $\psi_{\VV}\in\elltwov\VV$
  and $\psi_{\XX}\in M$.
  Thus
  $\psi_\mathit{all}=\psi_{\XX}\otimes\psi_{\VV}\otimes\psi_{\UU\WW}\in(\XX\quanteq\psi_{\XX})$.
  Since also $\psi_\mathit{all}\in\PA$,
  we have that
  $\suppo\proj{\psi_\mathit{all}}\subseteq\paren{\XX\quanteq\psi_{\XX},\ \PA}$.
  And since $\XX\VV=\XXall\EE$
  and $\UU\WW=\Tilde \UU$,
  $\proj{\psi_\mathit{all}}$ is $(\XXall\EE,\Tilde\UU)$-separable. Hence
  $\rho_\mathit{all}:=\sats{\partr{\EE\Tilde\UU} \proj{\psi_\mathit{all}}}{(\XX\quanteq\psi_{\XX})\land\PA}$.
  By assumption of the rule \rulerefx{Case}, 
  $\hl{\XX\quanteq\psi_{\XX},\ \PA}\bc\PB$ (since $\psi_{\XX}\in M$). Thus $\sats{\denot\bc(\rho_\mathit{all})}\PB$.
  
  We now ``unfix'' $i$. We thus have $\sats{\denot\bc(\rho_{\mathit{all},i})}\PB$ for
  $\rho_{\mathit{all},i}:=\partr{\EE\Tilde\UU} \proj{\psi_{\mathit{all},i}}$. Furthermore,
  \[
    \denot\bc(\rho)=\denot\bc\paren{\partr{\EE\TIlde\UU}\rho^\circ}
    =
    \denot\bc\pB\paren{\partr{\EE\TIlde\UU}\sum\nolimits_i\proj{\psi_{\mathit{all},i}}}
    =
    \denot\bc\pB\paren{\sum\nolimits_i\rho_{\mathit{all},i}}
    =
    \sum\nolimits_i \denot\bc\paren{\rho_{\mathit{all},i}}.
  \]
  By \autoref{lemma:sats.sum},  $\forall i.\, \sats{\denot\bc(\rho_{\mathit{all},i})}\PB$ implies
  $\denot\bc(\rho)=\sats{\sum\nolimits_i\denot\bc\paren{\rho_{\mathit{all},i}}}\PB$.
\end{proof}

The following is an auxiliary lemma needed for the proof of
\ruleref{Universe}. But it is also of independent interest because it
says that the choice of the set of program variables with respect to
which we evaluate a program (denoted $\XXall$
on \autopageref{page:XXall}) does not matter as long as it is large enough.

\begin{lemma}[Changing the set of program variables]\label{lemma:change.prog}
  Let $\hl\PA\bc\PB^{\XX}$
  denote Hoare judgments $\hl\PA\bc\PB$ as in \autoref{def:hoare}, except that the
  set $\XX$ is used instead of $\XXall$.
  (I.e., the semantics of the program $\bc$
  are defined with respect to memories containing variables $\XX$,
  not $\XXall$.)

  Assume that $\fv(\bc),\progvars{\fv(\PA)},\progvars{\fv(\PB)}\subseteq \XX_1,\XX_2$.
  Let $\YY_i:=\XX_i\setminus\fv(\bc)\setminus\fv(\PA)\setminus\fv(\PB)$.
  Let $T_i$
  be the type of $\YY_i$.
  Assume that $\abs{T_1}\geq\abs{T_2}$
  or ${T_1}$
  is infinite. Then
  $\hl\PA\bc\PB^{\XX_1}\implies\hl\PA\bc\PB^{\XX_2}$.
\end{lemma}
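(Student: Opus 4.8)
The plan is to prove only the stated direction $\hl\PA\bc\PB^{\XX_1}\Rightarrow\hl\PA\bc\PB^{\XX_2}$ (there is no symmetry, because of the cardinality hypothesis) by transporting states back and forth along an isometry on the ``padding'' variables. Write $\ZZ:=\fv(\bc)\cup\progvars{\fv(\PA)}\cup\progvars{\fv(\PB)}$, so that $\XX_i=\ZZ\dotcup\YY_i$ and, crucially, $\denot\bc$ acts only on $\fv(\bc)\subseteq\ZZ$ and as the identity on the $\YY_i$-register; thus $\denot\bc^{\XX_1}$ and $\denot\bc^{\XX_2}$ are the very same channel on $\ZZ$ tensored with the identity on $\YY_1$ (resp.\ $\YY_2$). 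The predicates $\PA,\PB$ likewise ignore the padding, since $\YY_i\cap\fv(\PA)=\YY_i\cap\fv(\PB)=\varnothing$ by construction.

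First I would fix a mixed memory $\rho$ over $\XX_2$ with $\sats\rho\PA$ and set $K:=\suppo\partr\ZZ\rho\subseteq\elltwov{\YY_2}$. Choosing a partial isometry $W\colon\elltwov{\YY_2}\to\elltwov{\YY_1}$ that is isometric on $K$ and zero on $K^\perp$ (so $\adj WW=P_K$, the projector onto $K$), I put $\rho_1:=\oponp W{\YY_2}\,\rho\,\adj{\oponp W{\YY_2}}$, a mixed memory over $\XX_1$. By \autoref{lemma:apply.rho.pred}\,\eqref{item:disjoint}, applicable because $\fv(\PA)\cap\YY_2=\varnothing$, we get $\sats{\rho_1}\PA$, so the hypothesis yields $\sats{\denot\bc^{\XX_1}(\rho_1)}\PB$. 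Since $\denot\bc$ commutes with operations on the padding register, $\denot\bc^{\XX_1}(\rho_1)=\oponp W{\YY_2}\,\denot\bc^{\XX_2}(\rho)\,\adj{\oponp W{\YY_2}}$; applying \autoref{lemma:apply.rho.pred}\,\eqref{item:disjoint} once more with the operator $\adj W$ (now using $\fv(\PB)\cap\YY_1=\varnothing$) gives $\sats{\oponp{P_K}{\YY_2}\,\denot\bc^{\XX_2}(\rho)\,\oponp{P_K}{\YY_2}}\PB$, because $\adj WW=P_K$.

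It then remains to see that this last state equals $\denot\bc^{\XX_2}(\rho)$, i.e.\ that projecting the $\YY_2$-register onto $K$ changes nothing. This follows from the standard bipartite support inclusion $\suppo\rho\subseteq\suppo\partr{\YY_2}\rho\otimes\suppo\partr\ZZ\rho$, which gives $\suppo\rho\subseteq\elltwov\ZZ\otimes K$, together with the observation that a channel acting only on $\ZZ$ cannot enlarge the $\YY_2$-support (write $\rho$ as a mixture of vectors in $\elltwov\ZZ\otimes K$ and apply the Kraus operators, which are the identity on $\YY_2$), so that $\suppo\denot\bc^{\XX_2}(\rho)\subseteq\elltwov\ZZ\otimes K$. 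Hence $\oponp{P_K}{\YY_2}$ acts as the identity on $\denot\bc^{\XX_2}(\rho)$, proving $\sats{\denot\bc^{\XX_2}(\rho)}\PB$ and thus $\hl\PA\bc\PB^{\XX_2}$.

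The hard part is the existence of the partial isometry $W$, and this is exactly where the cardinality hypothesis enters. Since $\partr\ZZ\rho$ is trace-class it has only countably many nonzero eigenvalues, so $K$ is separable and $\dim K\leq\aleph_0$; we also always have $\dim K\leq\dim\elltwov{\YY_2}=\abs{T_2}$. If $\abs{T_1}\geq\abs{T_2}$ then $\dim\elltwov{\YY_1}=\abs{T_1}\geq\abs{T_2}\geq\dim K$, and if $T_1$ is infinite then $\dim\elltwov{\YY_1}=\abs{T_1}\geq\aleph_0\geq\dim K$; in either case $\dim K\leq\dim\elltwov{\YY_1}$, so an isometric embedding of $K$ into $\elltwov{\YY_1}$—and hence the desired $W$—exists. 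It is precisely the reduction to the separable subspace $K$ that lets the infinite-type case succeed even when $\abs{T_2}>\abs{T_1}$, which is why the disjunctive hypothesis (rather than a plain cardinality comparison) is the right one.
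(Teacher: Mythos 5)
Your proposal is correct and follows essentially the same route as the paper's proof: both restrict to the support $K=S$ of the reduced state on the padding register (whose dimension is countable because the state is trace-class, which is exactly where the disjunctive cardinality hypothesis is used), embed it isometrically into $\elltwov{\YY_1}$, and transport the state back and forth with \autoref{lemma:apply.rho.pred}\,\eqref{item:disjoint}. The only cosmetic difference is that the paper absorbs the projector $\adj UU=P_S$ at the level of $\rho$ itself (via $\Hat{\mathcal E}^*\circ\Hat{\mathcal E}(\rho)=\rho$) before commuting with $\denot\bc$, whereas you absorb it after applying the channel and therefore add the (correct) observation that a channel acting only on $\ZZ$ preserves the $\YY_2$-support.
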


\begin{proof}
  We fix some $\EE,\UU$
  such that $\EE\UU$
  contains all ghosts from $\fv(\PA),\fv(\PB)$.
  By \autoref{lemma:ghost.sets}, we can interpret $\PA,\PB$
  in judgments $\sats\rho\PA,\sats\rho\PB$
  as predicates over $\XX\EE\UU$
  (where $\rho$
  is over $\XX$),
  i.e., we can without loss of generality use the same $\EE\UU$
  everywhere.
  Note that $\XX_1\setminus\YY_1=\XX_2\setminus\YY_2$ since
  both are equal to $\fv(\bc)\cup\progvars{\fv(\PA)}\cup\progvars{\fv(\PB)}$.

  \medskip
  
  Assume $\hl\PA\bc\PB^{\XX_1}$.
  To show $\hl\PA\bc\PB^{\XX_2}$,
  we fix a mixed memory $\rho$
  over $\XX_2$
  with $\sats\rho\PA$,
  and we need to show $\sats{\denot\bc(\rho)}\PB$.

  Let
  $S:=\suppo\partr{\XX_2\setminus\YY_2}\rho\subseteq\elltwov{\YY_2}$.
  The operator $\partr{\XX_2\setminus\YY_2}\rho$
  can be written as $\sum_{i\in I}p_i\proj{\psi_i}$
  with some orthonormal $\psi_i\in\elltwov{\YY_2}$
  and some $p_i > 0$
  with
  $\sum_{i\in I}
  p_i=\tr\paren{\partr{\XX_2\setminus\YY_2}\rho}<\infty$. Thus $I$
  is countable (otherwise $\sum_{i\in I} p_i$
  cannot converge). Hence $\dim S=\dim\SPAN\{\psi_i\}_{i\in I}=\abs I$
  is countable.  Furthermore
  $\dim S\leq \dim\elltwov{\YY_2}=\abs{T_2}$.
  Since $\abs{T_1}\geq\abs{T_2}$
  or $\abs{T_1}$
  is infinite, it follows that
  $\dim S\leq\abs{T_1}=\dim\elltwov{\YY_1}$.
  Thus there exists an isometry $U$
  from $S$
  to $\elltwov{\YY_1}$.
  We extend $U$
  to an operator from $\elltwov{\YY_2}$
  to $\elltwov{\YY_1}$
  by setting $U=0$
  on the orthogonal complement of $S$.
  Then $\adj UU$
  is the projector $P_S$
  onto $S$.
  Since $S=\suppo\partr{\XX_2\setminus\YY_2}\rho$,
  $S \otimes \elltwov{\XX_2\setminus\YY_2}\supseteq\suppo\rho$.
  And ${\oponp{P_S}{\YY_2}}$
  is the projector onto $S \otimes
  \elltwov{\XX_2\setminus\YY_2}$. Hence $\suppo\rho$
  is fixed by ${\oponp{P_S}{\YY_2}}$.

  Let $\mathcal E(\sigma):=U\sigma \adj U$
  for all $\sigma$
  over $\YY_2$.
  And let
  $\Hat{\mathcal E}(\sigma):=\paren{\opon
    U{\YY_2}}\sigma\adj{\paren{\opon U{\YY_2}}}$ for all $\sigma$
  over $\XX_2$,
  i.e.,
  $\Hat{\mathcal E}=\id_{\XX_2\setminus\YY_2}\otimes{\mathcal E}$.
  Let $\mathcal E^*(\sigma):=\adj U\sigma U$
  for all $\sigma$
  over $\YY_1$.
  And let
  $\Hat{\mathcal E}^*(\sigma):=\paren{\opon{\adj U}\YY_1}\sigma
  \adj{\paren{\opon{\adj U}\YY_1}}$ for all $\sigma$
  over $\XX_1$,
  i.e.,
  $\Hat{\mathcal E}^*=\id_{\XX_1\setminus\YY_1}\otimes\mathcal E^*$.
  We have
  $\Hat{\mathcal E}^*\circ\Hat{\mathcal E}(\rho)=\paren{\opon{\adj
      UU}{\YY_2}} \rho \adj{\paren{\opon{\adj
        UU}{\YY_2}}}=\paren{\opon{P_S}{\YY_2}}\rho
  \adj{\paren{\opon{P_S}{\YY_2}}}=\rho$.
  Here the last equality is
  because $\suppo\rho$ is fixed by $\oponp{P_S}{\YY_2}$.
  
  Since $\fv(\PA) \cap \YY_2 = \varnothing$ and $\sats\rho\PA$,
  by \autoref{lemma:apply.rho.pred}\,\eqref{item:disjoint},
  $\sats{\Hat{\mathcal E}(\rho)}\PA$.
  Note that ${\Hat{\mathcal E}(\rho)}$
  is a mixed memory over $\XX_2\setminus\YY_2\dotcup\YY_1=\XX_1$.
  Since $\hl\PA\bc\PB^{\XX_1}$,
  we have
  $\sats{\denot\bc^{\XX_1}\pb\paren{\Hat{\mathcal E}(\rho)}}\PB$.
  Here $\denot\bc^{\XX_1}$
  denotes the semantics $\denot\bc$ of $\bc$
  defined with respect to the set of variables $\XX_1$
  instead of $\XXall$. Note that
   $\denot\bc^{\XX_1}=\denot\bc^{\XX_1\setminus\YY_1}\otimes\id_{\YY_1}$
  and
   $\denot\bc^{\XX_2}=\denot\bc^{\XX_2\setminus\YY_2}\otimes\id_{\YY_2}$ (since $\fv(\bc)\cap\YY_1=\fv(\bc)\cap\YY_2=\varnothing$). 
Then
  \begin{align}
    \Hat{\mathcal E}^*
    \pB\paren {\denot\bc^{\XX_1}\pb\paren{\Hat{\mathcal E}(\rho)}}
    &=
    \paren{\id_{\XX_1\setminus\YY_1}\otimes\mathcal E^*}
    \circ
    \paren{ \denot\bc^{\XX_1\setminus\YY_1}\otimes\id_{\YY_1} }
    \circ 
    \paren{\id_{\XX_2\setminus\YY_2}\otimes\mathcal E}
      (\rho)
    \notag\\
    &\starrel=
      \pB
      \paren{ \denot\bc^{\XX_2\setminus\YY_2}
      \otimes
      \paren{\mathcal E^*\circ\mathcal E} }
      (\rho)
      = \pb\paren{ \denot\bc^{\XX_2\setminus\YY_2}
      \otimes \id_{\YY_2} }
      \circ
      \pb\paren{ \id_{\XX_2\setminus\YY_2} \otimes  \paren{\mathcal E^*\circ\mathcal E} }
      (\rho)
    \notag\\
    &
      = \pb\paren{ \denot\bc^{\XX_2\setminus\YY_2}
      \otimes \id_{\YY_2} }
      \circ
      \paren{\Hat{\mathcal E}^*\circ\Hat{\mathcal E}}
      (\rho)
      \starstarrel=
      \pB
      \paren{ \denot\bc^{\XX_2\setminus\YY_2}
      \otimes
      \id_{\YY_2} }
      (\rho)
      =  \denot\bc^{\XX_2}(\rho).
      \label{eq:E*X1}
  \end{align}
  In $(*)$ we use that $\XX_1\setminus\YY_1=\XX_2\setminus\YY_2$.
  And $(**)$ follows since $\Hat{\mathcal E}^*\circ\Hat{\mathcal E}(\rho) = \rho$ (shown above).

  Since $\fv(\PB) \cap \YY_1 = \varnothing$
  and $\sats{\denot\bc^{\XX_1}\pb\paren{\Hat{\mathcal E}(\rho)}}\PB$,
  by \autoref{lemma:apply.rho.pred}\,\eqref{item:disjoint},
  $\sats{\Hat{\mathcal E}^*\pb\paren{{\denot\bc^{\XX_1}\paren{\Hat{\mathcal
            E}(\rho)}}}}\PB$.  By \eqref{eq:E*X1}, this implies
  $\sats{\denot\bc^{\XX_2}(\rho)}\PB$.
  Since $\rho$
  was an arbitrary mixed memory over $\XX_2$
  with $\sats\rho\PA$, this implies $\hl\PA\bc\PB^{\XX_2}$.
\end{proof}

\begin{lemma}\label{lemma:Universe}
  \Ruleref{Universe} is sound.
\end{lemma}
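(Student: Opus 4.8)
The plan is to collapse the (possibly huge) ambient set $\XXall$ of program variables down to the controlled set $\XX^* := \XX\xx$ by means of \autoref{lemma:change.prog}, and then to perform the case split by decomposing the separable witness guaranteed by \autoref{def:satisfy} into pure product states. Note first that by the premises $\progvars{\fv(\PA)}, \fv(\bc), \progvars{\fv(\PB)}\subseteq\XX$ together with $\xx\notin\XX$, the variable $\xx$ is a program variable lying in none of $\fv(\PA), \fv(\bc), \fv(\PB)$, while all variables relevant to $\PA, \bc, \PB$ sit inside $\XX\subseteq\XX^*$.

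First I would reduce the goal. To prove $\hl\PA\bc\PB = \hl\PA\bc\PB^{\XXall}$ it suffices, by \autoref{lemma:change.prog} with $\XX_1:=\XX^*$ and $\XX_2:=\XXall$, to prove $\hl\PA\bc\PB^{\XX^*}$: here the remainder $\YY_1=\XX^*\setminus\fv(\bc)\setminus\fv(\PA)\setminus\fv(\PB)$ contains $\xx$, so its type $T_1$ is infinite and the side condition ``$T_1$ infinite'' of \autoref{lemma:change.prog} holds no matter how large $\XXall$ is. This is exactly where the hypotheses that $\xx$ be fresh and of infinite type are used: they furnish the infinite-dimensional slack that absorbs an arbitrarily large $\XXall$. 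Symmetrically, I would transport each premise $\hl{\XX\xx\EE\quanteq\psi,\ \UU\quanteq\psi',\ \PA}\bc\PB$ from $\XXall$ down to $\XX^*$, again by \autoref{lemma:change.prog}, now with $\XX_1:=\XXall$ and $\XX_2:=\XX^*$; the side condition is immediate since $\XXall\supseteq\XX^*$ gives $\YY_1\supseteq\YY_2$, hence $\abs{T_1}\geq\abs{T_2}$.

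It then remains to establish $\hl\PA\bc\PB^{\XX^*}$ using the premises now read over $\XX^*$. I would fix a mixed memory $\rho$ over $\XX^*=\XX\xx$ with $\sats\rho\PA$ and unfold \autoref{def:satisfy}: there is a $(\XX\xx\EE,\UU)$-separable $\rho^\circ$ over $\XX\xx\EE\UU$ with $\suppo\rho^\circ\subseteq\PA$ and $\partr{\EE\UU}\rho^\circ=\rho$. Writing $\rho^\circ=\sum_i\proj{\psi_i\otimes\phi_i}$ with nonzero $\psi_i\in\elltwov{\XX\xx\EE}$ and $\phi_i\in\elltwov\UU$, each $\psi_i\otimes\phi_i\in\PA$, so $\proj{\psi_i\otimes\phi_i}$ is a separable witness for $\sats{\rho_i}{\XX\xx\EE\quanteq\psi_i,\ \UU\quanteq\phi_i,\ \PA}$, where $\rho_i:=\partr{\EE\UU}\proj{\psi_i\otimes\phi_i}$. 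Applying the premise with $\psi:=\psi_i$ and $\psi':=\phi_i$ (both nonzero) gives $\sats{\denot\bc^{\XX^*}(\rho_i)}\PB$. Since $\rho=\sum_i\rho_i$ and $\denot\bc^{\XX^*}$ is linear and continuous, $\denot\bc^{\XX^*}(\rho)=\sum_i\denot\bc^{\XX^*}(\rho_i)$, and \autoref{lemma:sats.sum} yields $\sats{\denot\bc^{\XX^*}(\rho)}\PB$, completing the reduced goal and hence the rule.

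I expect the main obstacle to be the bookkeeping around the two invocations of \autoref{lemma:change.prog} rather than the decomposition step, which is routine once the universe is fixed: one must confirm that the relevant program variables lie inside both universes and verify the two cardinality conditions, the downward one being trivial and the upward one resting essentially on the infinitude of $\xx$'s type. Conceptually, the point the rule must finesse is that $\rho$ over $\XXall$ can be entangled with program variables outside $\XX\xx\EE$ about which the premises assert nothing; shrinking the universe to $\XX^*$ before the case split is precisely what eliminates this entanglement with ``invisible'' variables.
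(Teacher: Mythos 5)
Your proposal is correct and follows essentially the same route as the paper's own proof: reduce to the universe $\XX\xx$ via \autoref{lemma:change.prog} (using the infinitude of $\xx$'s type for the upward direction and $\XX\xx\subseteq\XXall$ for transporting the premises), decompose the separable witness $\rho^\circ$ into pure product terms $\proj{\psi_i\otimes\psi_i'}$, apply the premise to each, and recombine with \autoref{lemma:sats.sum}. No gaps.
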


Most of the work for the proof has already been done in
\autoref{lemma:change.prog}.

\begin{proof}
  By \autoref{lemma:change.prog} (with $\XX_1:=\XX\xx$ and $\XX_2:=\XXall$), we have that  $\hl\PA\bc\PB^{\XX\xx}$
  (in the notation of \autoref{lemma:change.prog}) implies $\hl\PA\bc\PB$.
  (For this, note that $\XX\xx\setminus \fv(\bc)\setminus\fv(\PA)\setminus\fv(\PB)$
  contains $\xx$
  and thus has infinite type.)
  Thus to prove the soundness of \rulerefx{Universe}, it is sufficient to prove
  $\hl\PA\bc\PB^{\XX\xx}$.

  To show $\hl\PA\bc\PB^{\XX\xx}$,
  fix some $\rho$
  over $\XX\xx$
  with $\sats\rho\PA$.
  Then there is a $(\XX\xx\EE,\UU)$-separable
  $\rho^\circ$ over $\XX\xx\EE\UU$
  with $\partr{\EE\UU}\rho^\circ=\rho$
  and $\suppo\rho^\circ\subseteq\PA$.

  Since $\rho^\circ$
  is $(\XX\xx\EE,\UU)$-separable,
  we can write $\rho^\circ=\sum_i\proj{\psi_i\otimes\psi_i'}$
  with $\psi_i\in\elltwov{\XX\xx\EE}$
  and $\psi_i'\in\elltwov{\UU}$ and $\psi_i,\psi_i'\neq0$.
  Since $\suppo\rho^\circ\subseteq\PA$,
  we have $\psi_i\otimes\psi_i'\in\PA$
  for all $i$.
  Furthermore, we have
  $\psi_i\otimes\psi_i'\in\paren{\XX\xx\EE\quanteq\psi_i}$
  and $\psi_i\otimes\psi_i'\in\paren{\UU\quanteq\psi_i'}$.
  Thus
  $\psi_i\otimes\psi_i'\in \PA_i := \paren{\XX\xx\EE\quanteq\psi_i,\
    \UU\quanteq\psi_i',\ \PA}$.  We then have that
  $\proj{\psi_i\otimes\psi_i'}$
  is $(\XX\xx\EE,\UU)$-separable
  and $\suppo \proj{\psi_i\otimes\psi_i'}\subseteq \PA_i$.
  Thus $\rho_i:=\sats{\partr{\EE\UU}\proj{\psi_i\otimes\psi_i'}}{\PA_i}$.
  By assumption of the \ruleref{Universe} we have
  $\hl{\PA_i}\bc\PB$.
  By \autoref{lemma:change.prog} (with $\XX_1:=\XXall$ and $\XX_2:=\XX\xx$),
  $\hl{\PA_i}\bc\PB^{\XX\xx}$.  (For this, note that $\XX\xx\subseteq\XXall$ since otherwise
  $\hl{\XX\xx\EE\quanteq\psi,\ \UU\quanteq\psi',\ \PA}\bc\PB$ in the rule's premise would not be well-defined. Hence
  the cardinality of the type of $\XXall\setminus\fv(\bc,\PA,\PB)$ is at least as big as that of $\XX\xx\setminus\fv(\bc,\PA,\PB)$.)
  Since $\sats{\rho_i}{\PA_i}$, we have $\sats{\denot\bc(\rho_i)}\PB$. This holds for all $i$.

  We have
  $\rho=\partr{\EE\UU}\rho^\circ=\sum_i\partr{\EE\UU}\proj{\psi_i\otimes\psi_i'}=\sum_i\rho_i$.
  Thus $\denot\bc(\rho)=\sum_i\denot\bc(\rho_i)$.
  Since for all $i$,
  $\sats{\denot\bc(\rho_i)}\PB$,
  by \autoref{lemma:sats.sum} we have
  $\denot\bc(\rho)=\sats{\sum_i\denot\bc(\rho_i)}\PB$.
  Since this holds for all $\rho$ over $\XX\xx$
  with $\sats\rho\PA$,
  we have shown $\hl\PA\bc\PB^{\XX\xx}$.
  As mentioned in the beginning of the proof, by
  \autoref{lemma:change.prog} this implies $\hl\PA\bc\PB$.
\end{proof}

\begin{lemma}\label{lemma:Transmute}
  \Ruleref{Transmute} is sound.
\end{lemma}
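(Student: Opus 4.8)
The plan is to unfold the notation $\impl$ into its semantic meaning and then exhibit an explicit witness. Writing $\PB$ for the postcondition $\bigvee_i \oppred{(\opon{M_i}{\GG'})}{\psubst{\PA}{\GG'}{\GG}}$, the rule says $\PA\impl\PB$, which (as noted just before the rule) means exactly that every mixed memory $\rho$ with $\sats\rho\PA$ also satisfies $\sats\rho\PB$. So I would start from a witness $\rho^\circ$ for $\sats\rho\PA$: a $(\XX\EE,\UU)$-separable operator with $\suppo{\rho^\circ}\subseteq\PA$ and $\partr{\EE\UU}\rho^\circ=\rho$, written as $\rho^\circ=\sum_k\proj{\psi_k}$ with each $\psi_k\in\PA$ of product shape $\psi_k=\phi_k\otimes\chi_k$ ($\phi_k$ over $\XX\EE$, $\chi_k$ over $\UU$). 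Using $\rank M_i\le1$ I write $M_i=\ket{a_i}\bra{b_i}$, and I let $\psi_k'$ denote $\psi_k$ with the register $\GG$ relabelled to $\GG'$ via $\Urename{\GG}{\GG'}$ (well-typed since $\GG,\GG'$ share a type), so that $\psi_k'\in\psubst{\PA}{\GG'}{\GG}$.

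The witness I would propose for $\sats\rho\PB$ is $\sigma^\circ:=\sum_k\sum_i\proj{(\opon{M_i}{\GG'})\psi_k'}$. Its support lies in $\PB$ immediately, since each vector $(\opon{M_i}{\GG'})\psi_k'$ lies in $\oppred{(\opon{M_i}{\GG'})}{\psubst{\PA}{\GG'}{\GG}}\subseteq\PB$. For separability I would exploit the rank-$1$ shape: for any $\xi$ one has $(\opon{M_i}{\GG'})\xi=\ket{a_i}_{\GG'}\otimes(\bra{b_i}_{\GG'}\xi)$, so the register $\GG'$ is placed into the pure product state $\ket{a_i}$ and is disentangled from everything else. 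Combined with the product form $\phi_k\otimes\chi_k$ (the relabelling and the operator touch only the single register $\GG/\GG'$), each summand factors across the separability cut required by \autoref{def:satisfy}, so $\sigma^\circ$ has the needed separability.

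The main computation is the partial-trace identity. Expanding $\psi_k'$ in the computational basis of $\GG'$ as $\psi_k'=\sum_m\ket m_{\GG'}\otimes\eta_{k,m}$, I would compute
\[
  \sum_i\partr{\GG'}\proj{(\opon{M_i}{\GG'})\psi_k'}
  =\sum_{m,m'}\bra{m'}\Bigl(\sum\nolimits_i\adj{M_i}M_i\Bigr)\ket m\;\eta_{k,m}\adj{\eta_{k,m'}}
  =\sum_m\eta_{k,m}\adj{\eta_{k,m}}
  =\partr{\GG}\proj{\psi_k},
\]
where the completeness relation $\sum_i\adj{M_i}M_i=\id$ is exactly what collapses the middle sum. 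Tracing out the remaining ghosts $\EE\UU\setminus\GG$ as well then yields $\sum_k\partr{\EE\UU}\proj{\psi_k}=\partr{\EE\UU}\rho^\circ=\rho$, as required, so $\sigma^\circ$ witnesses $\sats\rho\PB$.

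The one point demanding care is that $\GG$ and $\GG'$ may each be entangled or unentangled ghosts, giving four combinations, and I would check the argument is uniform. The support and partial-trace steps are insensitive to the kinds. For separability the delicate cases are precisely those where the relabelled register crosses the entangled/unentangled boundary (e.g.\ $\GG$ unentangled but $\GG'$ entangled, or vice versa): there the original entanglement between $\GG$ and the rest of its block could violate the \emph{new} separability constraint, and it is exactly the rank-$1$ factorisation that severs it. Thus the rank-$1$ hypothesis is the crux of the separability argument while $\sum_i\adj{M_i}M_i=\id$ is the crux of trace preservation; verifying that these two mesh across all four kind-combinations is the main obstacle, and I expect it to be the only genuinely non-routine part.
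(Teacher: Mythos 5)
Your proposal is correct and follows essentially the same route as the paper's proof: rename $\GG$ to $\GG'$, apply the $M_i$ on $\GG'$, use the rank-$\le 1$ hypothesis to force $\GG'$ into a pure product state (which is exactly what rescues separability in the mixed entangled/unentangled cases), and use $\sum_i \adj{M_i}M_i=\id$ to recover $\rho$ under the partial trace. The only cosmetic difference is that you assemble one global witness $\sigma^\circ=\sum_i\rho_i^\circ$ and verify \autoref{def:satisfy} directly, whereas the paper verifies $\sats{\rho_i}\PB$ for each $i$ separately and then invokes \autoref{lemma:sats.sum}.
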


\begin{proof}
  We distinguish four cases: $\GG,\GG'$
  are entangled ghosts (EE-case), $\GG,\GG'$
  are unentangled ghosts (UU-case), $\GG$
  are entangled and $\GG'$
  are unentangled ghosts (EU-case), and $\GG$
  are unentangled and $\GG'$
  are entangled ghosts (UE-case). Most of the proof is the same in all
  four cases, we make case distinctions in individual proof steps as
  necessary.

  $\PA$
  is a predicate over $\XXall\EE\UU$
  for some $\EE,\UU$.
  And $\PA':=\psubst\PA\GG{\GG'}$
  is a predicate over $\XXall\EE'\UU'$
  for some $\EE',\UU'$.
  (What variables are in $\EE'$
  and $\UU'$, respectively, depends on the case. E.g., in the EU-case,
  $\EE'=\EE\setminus\GG$ and $\UU'=\UU\dotcup\GG$.)

  To show
  $ \PA \impl \bigvee\nolimits_{\!i} \pb\paren{ \oppred{
      \paren{\opon{M_i}{\GG'}} }{ \psubst\PA{\GG'}{\GG} }}=:\PB $, we fix
  some mixed memory $\rho$
  over $\XXall$ with $\sats\rho\PA$. We have to show that $\sats\rho\PB$.
    
  Since $\sats\rho\PA$,
  there exists a $(\XXall\EE,\UU)$-separable
  $\rho^\circ$
  over $\XXall\EE\UU$
  with $\suppo\rho^\circ\subseteq\PA$
  and $\partr{\EE\UU}\rho^\circ=\rho$.

  Let
  $\tilde\rho^\circ:=\Urename{\GG}{\GG'}\rho^\circ\adj{\Urename{\GG}{\GG'}}$.
  We have that
  $\suppo\tilde\rho^\circ =
  \oppred{\Urename\GG{\GG'}}{\suppo\rho^\circ}\subseteq
  \oppred{\Urename\GG{\GG'}}\PA = \psubst\PA{\GG'}{\GG}
  = \PA'$.

  In the EE- and EU-case, $\GG\subseteq\EE$.
  That is, $\Urename\GG{\GG'}$
  renames only variables in $\EE$.
  Then since $\rho^\circ$ is $(\XXall\EE,\UU)$-separable,
  $\tilde\rho^\circ$ is $(\XXall\EE\setminus\GG\dotcup\GG',\UU)$-separable
  in that case.  Similarly, in the UE- and UU-case, $\tilde\rho^\circ$
  is then $(\XXall\EE,\UU\setminus\GG\dotcup\GG')$-separable.

  And since $\GG\cap\XXall=\varnothing$, we have $\partr{\EE'\UU'}\tilde\rho^\circ
  = \partr{\EE\UU}\rho^\circ = \rho$.

  Note that this does \emph{not} imply that $\sats\rho\PA'$
  because $\tilde\rho^\circ$
  is not necessarily $(\XXall\EE',\UU')$-separable.
  (The separability is along the wrong variables, at least in the EU-
  and UE-cases.)

  Let
  $\rho^\circ_i:=\oponp{M_i}{\GG'}\tilde\rho^\circ\adj{\oponp{M_i}{\GG'}}$ and $\rho_i:=\partr{\EE'\UU'}\rho^\circ_i$.
  Then
  $\suppo\rho^\circ_i =
  \oppred{\oponp{M_i}{\GG'}}{\suppo\tilde\rho^\circ} \subseteq
  \oppred{\oponp{M_i}{\GG'}}{\PA'} \subseteq \PB$.

  We show that $\rho^\circ_i$
  is $(\XXall\EE',\UU')$-separable
  by distinguishing four cases: \emph{In the EE-case,}
  $\tilde\rho^\circ$
  is $(\XXall\EE\setminus\GG\dotcup\GG',\UU)$-separable.
  And $\oponp{M_i}{\GG'}$
  operates only on $\XXall\EE\setminus\GG\dotcup\GG'$,
  hence $\rho^\circ_i$
  is also $(\XXall\EE\setminus\GG\dotcup\GG',\UU)$-separable.
  And $\EE'=\EE\setminus\GG\dotcup\GG'$,
  $\UU'=\UU$,
  hence $\rho^\circ_i$
  is $(\XXall\EE',\UU')$-separable.
  \emph{In the UU-case,} $\tilde\rho^\circ$
  is $(\XXall\EE,\UU\setminus\GG\dotcup\GG')$-separable.
  And $\oponp{M_i}{\GG'}$
  operates only on $\UU\setminus\GG\dotcup\GG'$,
  hence $\rho^\circ_i$
  is also $(\XXall\EE,\UU\setminus\GG\dotcup\GG')$-separable.
  And $\EE'=\EE$,
  $\UU'=\UU\setminus\GG\dotcup\GG'$,
  hence $\rho^\circ_i$ is $(\XXall\EE',\UU')$-separable.

  \emph{In the EU-case,} $\tilde\rho^\circ$
  is $(\XXall\EE\setminus\GG\dotcup\GG',\UU)$-separable.
  Thus we can write $\tilde\rho^\circ$
  as
  $\tilde\rho^\circ=\sum_j\proj{\psi_{\XXall\EE\setminus\GG\dotcup\GG'}
    \otimes\psi_{\UU}}$ for some
  $\psi_{\XXall\EE\setminus\GG\dotcup\GG'}\in\elltwov{\XXall\EE\setminus\GG\dotcup\GG'}$
  and $\psi_{\UU}\in\elltwov{\UU}$.
  Then
  $\rho^\circ_i =
  \sum_j\proj{\psi'_{\XXall\EE\setminus\GG\dotcup\GG',j}\otimes\psi_{\UU,j}}$
  with
  $\psi'_{\XXall\EE\setminus\GG\dotcup\GG',j}:=\oponp{M_i}{\GG'}\psi_{\XXall\EE\setminus\GG\dotcup\GG',j}$.
  Since $M_i$
  has rank $\leq1$,
  $\im M_i=\SPAN\{\psi'_{\GG'}\}$
  for some $\psi'_{\GG'}\in\elltwov{\GG'}$
  (in the case $\rank M_i=0$,
  we simply use $\psi'_{\GG'}=0$).
  Thus
  $\psi'_{\XXall\EE\setminus\GG\dotcup\GG',j}=\psi'_{\XXall\EE\setminus\GG,j}\otimes\psi'_{\GG'}$
  for some
  $\psi'_{\XXall\EE\setminus\GG,j}\in\elltwov{\XXall\EE\setminus\GG}$.
  Hence
  $\rho^\circ_i = \sum_j\pb\proj{\psi'_{\XXall\EE\setminus\GG,j}\otimes
    (\psi'_{\GG'} \otimes\psi_{\UU,j})}$, thus $\rho^\circ_i$
  is $(\XXall\EE\setminus\GG,\UU\GG')$-separable.
  Since in the EU-case, $\EE'=\EE\setminus\GG$
  and $\UU'=\UU\GG'$,
  we have that $\rho^\circ_i$ is $(\XXall\EE',\UU')$-separable.

  \emph{In the UE-case,} we show that
  $\rho^\circ_i=\sum_j\pb\proj{(\psi_{\XXall\EE,j}\otimes\psi'_{\GG'})\otimes
    \psi'_{\UU\setminus\GG,j}}$ for some
  $\psi_{\XXall\EE,j}\in\elltwov{\XXall\EE}$,
  $\psi'_{\GG'}\in\elltwov{\GG'}$,
  and $\psi'_{\UU\setminus\GG,j}\in\elltwov{\UU\setminus\GG,j}$ (analogous to the EU-case).
  Thus $\rho^\circ_i$
  is $(\XXall\EE\GG',\UU\setminus\GG)$-separable.
  Since in the UE-case, $\EE'=\EE\GG'$
  and $\UU'=\UU\setminus\GG$,
  we have that $\rho^\circ_i$ is $(\XXall\EE',\UU')$-separable.

  Thus in all four cases, $\rho^\circ_i$
  is $(\XXall\EE',\UU')$-separable.
  Furthermore, we showed above that $\suppo\rho^\circ_i\subseteq\PB$,
  and defined $\rho_i:=\partr{\EE'\UU'}\rho^\circ_i$. Thus $\sats{\rho_i}\PB$.

  Let
  $\mathcal E(\sigma):=\sum_i\oponp{M_i}{\GG'}\sigma
  \adj{\oponp{M_i}{\GG'}}$. Since $\sum_i \adj{M_i}M_i=\id$,
  $\mathcal E$
  is trace-preserving.
  $\sum_i\rho_i =
  \sum_i\partr{\EE'\UU'}\rho^\circ_i=\sum_i\partr{\EE'\UU'}\oponp{M_i}{\GG'}\tilde\rho^\circ
  \adj{\oponp{M_i}{\GG'}} = \partr{\EE'\UU'}\mathcal
  E(\tilde\rho^\circ) $.  Since $\mathcal E$
  is trace-preserving and operates only on $\GG'\subseteq\EE'\UU'$,
  we have
  $\partr{\EE'\UU'}\mathcal
  E(\tilde\rho^\circ)=\partr{\EE'\UU'}\tilde\rho^\circ=\rho$. Thus
  $\sum_i\rho_i=\rho$.
  By \autoref{lemma:sats.sum}, $\sats{\rho_i}\PB$
  then implies $\sats{\rho}\PB$.
  
  Since this holds for all $\sats\rho\PA$, we have $\PA\impl\PB$.
\end{proof}

\begin{lemma}\label{lemma:ShapeShift}
  \Ruleref{ShapeShift} is sound.
\end{lemma}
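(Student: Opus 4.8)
Since $\PA\impl\PB$ abbreviates $\hl\PA\SKIP\PB$, which by \autoref{def:hoare} unfolds to $\sats\rho\PA\implies\sats\rho\PB$ for every mixed memory $\rho$ over the program variables, the plan is to fix $\rho$ with $\sats\rho{(\XX\EE\quanteq\psi,\PA)}$ and construct a witness for $\sats\rho{(\XX\EE'\quanteq\psi',\PA)}$. By \autoref{lemma:ghost.sets} I read both predicates over one common variable set; write $\GG$ for its ghosts and $\UU\subseteq\GG$ for the unentangled ones. After renaming via \rulerefx{Rename} I may assume $\EE\cap\EE'=\varnothing$, and since the first premise gives $\tr\partr\EE\proj\psi=\tr\partr{\EE'}\proj{\psi'}$, hence $\norm\psi=\norm{\psi'}$, I may take $\norm\psi=\norm{\psi'}=1$, so that $\proj\psi,\proj{\psi'}$ are rank-one projectors. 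Let $\rho^\circ$ be the $\UU$-separable witness for the left-hand side, so $\suppo\rho^\circ\subseteq(\XX\EE\quanteq\psi)\wedge\PA$ and $\partr\GG\rho^\circ=\rho$.

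The crux is a structural observation. The conjunct $\XX\EE\quanteq\psi$ is $\SPAN\{\psi\}\otimes\elltwov{\VV}$, where $\VV$ is the set of all variables other than $\XX\EE$; as this space equals $\{\psi\otimes\chi\}$, every vector of $\suppo\rho^\circ$ factors through $\psi$, whence $\rho^\circ=\proj\psi\otimes\sigma$ with $\proj\psi$ read on $\XX\EE$ and $\sigma$ a mixed memory over $\VV$. (This is the quantum counterpart of the trivial fact that forcing $\XX\EE$ into one fixed pure state makes it a product factor.) Putting $\rho_\XX:=\partr\EE\proj\psi=\partr{\EE'}\proj{\psi'}$ (equal by the first premise) and tracing out $\GG$ yields $\rho=\rho_\XX\otimes\tau$ with $\tau:=\partr{\GG\setminus\EE}\sigma$. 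Moreover, since $\suppo\rho^\circ\subseteq\SPAN\{\psi\}\otimes\elltwov\VV$, the separable decomposition of $\rho^\circ$ descends to $\sigma$, so $\sigma$ is itself $\UU$-separable.

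For the target witness I swap $\psi$ for $\psi'$, moving the purifying register from $\EE$ to $\EE'$: set $\hat\rho^\circ:=\proj{\psi'}\otimes\sigma'$ with $\proj{\psi'}$ on $\XX\EE'$ and $\sigma':=(\partr{\EE'}\sigma)\otimes\xi_\EE$, where $\xi_\EE$ is an arbitrary trace-one state on $\EE$. Three checks remain. First, $\partr\GG\hat\rho^\circ=\partr{\EE'}\proj{\psi'}\otimes\tau=\rho_\XX\otimes\tau=\rho$, using the first premise. Second, $\hat\rho^\circ$ is $\UU$-separable, inherited from $\sigma$ since $\proj{\psi'}$ and $\xi_\EE$ sit entirely on the entangled side. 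Third, for the support condition I use that $\PA$ ignores $\EE,\EE'$: since $\fv(\PA)\cap(\EE\cup\EE')=\varnothing$, \autoref{lemma:suppo.partr} turns $\suppo\rho^\circ\subseteq\PA$ into $\suppo\partr{\EE\EE'}\rho^\circ\subseteq\PA$, and likewise for $\hat\rho^\circ$; a short computation shows $\partr{\EE\EE'}\hat\rho^\circ=\rho_\XX\otimes\partr{\EE'}\sigma=\partr{\EE\EE'}\rho^\circ$, so $\suppo\hat\rho^\circ\subseteq\PA$ follows. Together with $\suppo\hat\rho^\circ\subseteq(\XX\EE'\quanteq\psi')$ this gives $\sats\rho{(\XX\EE'\quanteq\psi',\PA)}$.

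I expect the third check to be the heart of the argument, and the place where both premises are genuinely used: transferring ``$\suppo\subseteq\PA$'' across the replacement of $\psi$ by $\psi'$ works only because $\PA$ is blind to the ghost registers $\EE,\EE'$ (second premise) and the $\XX$-reduced states of $\psi$ and $\psi'$ agree (first premise), so the state that $\PA$ actually constrains is untouched by the swap. The remaining issues are bookkeeping: the normalization and disjointness reductions, and checking that extracting $\proj\psi$ (resp.\ $\proj{\psi'}$ and $\xi_\EE$) as a pure product factor respects the separability demanded by \autoref{def:satisfy}.
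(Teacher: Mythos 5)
Your proof is correct and follows essentially the same route as the paper's: both factor the witness as $\rho^\circ=\proj\psi\otimes\sigma$ using $\suppo\rho^\circ\subseteq(\XX\EE\quanteq\psi)$, replace $\proj\psi$ by $\proj{\psi'}$, verify that the reduced state on the program variables is unchanged via the premise $\partr\EE\proj\psi=\partr{\EE'}\proj{\psi'}$, and transfer the $\PA$-support condition with \autoref{lemma:suppo.partr} and the freshness premise. The only differences are bookkeeping: the paper swaps $\EE$ for $\EE'$ in the ghost set of the new witness instead of padding $\EE$ with a dummy state $\xi_\EE$, and it traces out $\EE$ and $\EE'$ separately rather than jointly.
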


\begin{proof}
  Since $(\EE\cup\EE')\cap\fv(\PA)=\varnothing$,
  we can interpret $\PA$ as a predicate
  over $\XXall\Tilde\EE\Tilde\UU$
  for some $\Tilde\EE\Tilde\UU$ with $\Tilde\EE\cap(\EE\cup\EE')=\varnothing$. (See the discussion
  on \autopageref{page:pred.identify} about identifying predicates over different sets.)
  Then the pre- and postconditions
  ${\paren{\XX\EE\quanteq\psi,\ \PA}}$ and 
  ${\paren{\XX\EE'\quanteq\psi',\ \PA}}$ can be written more explicitly as predicates
  ${\paren{\XX\EE\quanteq\psi,\ \PA\otimes\elltwov{\EE}}}$ and 
  ${\paren{\XX\EE'\quanteq\psi',\ \PA\otimes\elltwov{\EE'}}}$
  over $\XXall\Tilde\EE\EE\Tilde\UU$ and $\XXall\Tilde\EE\EE'\Tilde\UU$, respectively. (Justified by 
  \autoref{lemma:ghost.sets}.)
  And $\psi,\psi'$ are quantum memories over $\XX\EE$
  and $\XX\EE'$, respectively.

  Fix a mixed memory $\rho$
  over $\XXall$
  with $\sats\rho {\paren{\XX\EE\quanteq\psi,\ \PA\otimes\elltwov{\EE}}}$.
  To show \ruleref{ShapeShift}, we need to show that
  $\sats\rho{\paren{\XX\EE'\quanteq\psi',\ \PA\otimes\elltwov{\EE'}}}$.

  Since
  $\sats\rho{\paren{\XX\EE\quanteq\psi,\ \PA\otimes\elltwov\EE}}$,
  there is an $(\XXall\Tilde\EE\EE,\Tilde\UU)$-separable
  $\rho^\circ$
  with
  $\suppo\rho^\circ\subseteq{\paren{\XX\EE\quanteq\psi,\
      \PA\otimes\elltwov\EE}}$ and $\partr{\Tilde\EE\EE\Tilde\UU}\rho^\circ=\rho$.

  Thus $\suppo\rho^\circ\subseteq {\paren{\XX\EE\quanteq\psi}}$.
  Hence $\rho^\circ = \paren{\partr{\XX\EE}\rho^\circ} \otimes
  \proj{\psi}$.
  Let
  $\tilde\rho^\circ := \paren{\partr{\XX\EE}\rho^\circ} \otimes
  \proj{\psi'}$.
  Then
  \begin{align}
    \partr{\EE}\rho^\circ
    &=
    \partr{\EE} \pb\paren{ \paren{\partr{\XX\EE}\rho^\circ} \otimes \proj{\psi}}
    =
    \paren{\partr{\XX\EE}\rho^\circ} \otimes \partr\EE \proj{\psi} \notag\\
    &\starrel=
    \paren{\partr{\XX\EE}\rho^\circ} \otimes \partr{\EE'} \proj{\psi'} 
    =
    \partr{\EE'} \pb\paren{ \paren{\partr{\XX\EE}\rho^\circ} \otimes \proj{\psi}}
    =
      \partr{\EE'} \tilde\rho^\circ.
      \label{eq:rel-rho-tilderho}
  \end{align}
  Here $(*)$ is by assumption from \ruleref{ShapeShift}.
  Then
  \begin{equation*}
    \partr{\Tilde\EE\EE'\Tilde\UU} \tilde\rho^\circ
    =
    \partr{\Tilde\EE\Tilde\UU} \partr{\EE'} \tilde\rho^\circ
    \eqrefrel{eq:rel-rho-tilderho}=
    \partr{\Tilde\EE\Tilde\UU} \partr{\EE} \rho^\circ
    =
    \partr{\Tilde\EE\EE\Tilde\UU} \rho^\circ
    =
    \rho.
  \end{equation*}

  Since $\rho^\circ$
  is $(\XXall\Tilde\EE\EE,\Tilde\UU)$-separable,
  $\partr{\XX\EE}\rho^\circ$
  is $(\XXall\Tilde\EE\setminus\XX,\Tilde\UU)$-separable,
  and hence
  $\tilde\rho^\circ = \paren{\partr{\XX\EE}\rho^\circ} \otimes
  \proj{\psi'}$ is
  $(\XXall\Tilde\EE\setminus\XX\dotcup\XX\EE',\Tilde\UU)=(\XXall\Tilde\EE\EE',\Tilde\UU)$-separable.

  Since
  $\suppo\rho^\circ\subseteq{\paren{\XX\EE\quanteq\psi,\
      \PA\otimes\elltwov\EE}}$, we have
  $\suppo\rho^\circ\subseteq{{ \PA\otimes\elltwov\EE}}$.
  By \autoref{lemma:suppo.partr}, this implies
  $\suppo\partr\EE{\rho^\circ}\subseteq{{\PA}}$.
  Then by \eqref{eq:rel-rho-tilderho},
  $\suppo\partr{\EE'}{\tilde\rho^\circ}\subseteq{{\PA}}$.
  And by \autoref{lemma:suppo.partr},
  $\suppo{\tilde\rho^\circ}\subseteq{{\PA}\otimes\elltwov{\EE'}}$.
  And since
  $\tilde\rho^\circ = \paren{\partr{\XX\EE}\rho^\circ} \otimes
  \proj{\psi'}$, we also have
  $\suppo\tilde\rho^\circ\subseteq\paren{\XX\EE'=\psi'}$.
  Thus
  $\suppo\tilde\rho^\circ\subseteq {\paren{\XX\EE'\quanteq\psi',\
      \PA\otimes\elltwov{\EE'}}}$.

  Since $\tilde\rho^\circ$
  is $(\XXall\Tilde\EE\EE',\Tilde\UU)$-separable
  and satisfies
  $\suppo\tilde\rho^\circ\subseteq {\paren{\XX\EE'\quanteq\psi',\
      \PA\otimes\elltwov{\EE'}}}$ and
  $\partr{\Tilde\EE\EE'\Tilde\UU}\tilde\rho^\circ=\rho$, it follows that 
  $\sats\rho{\paren{\XX\EE'\quanteq\psi',\ \PA\otimes\elltwov{\EE'}}}$.

  Since $\rho$
  was arbitrary with
  $\sats\rho {\paren{\XX\EE\quanteq\psi,\ \PA\otimes\elltwov{\EE}}}$,
  we have shown
  $ {\paren{\XX\EE\quanteq\psi,\ \PA\otimes\elltwov{\EE}}}
  \impl{\paren{\XX\EE'\quanteq\psi',\ \PA\otimes\elltwov{\EE'}}}$, the
  conclusion of the rule.
\end{proof}

\section{Derived rules}
\label{sec:deriv.rules}

In this section, we show that our eleven core rules are powerful enough
to derive a number of new rules without having to refer to the
semantics of Hoare judgments with ghosts from
\autoref{def:hoare}. (That is, the rules in this section would hold
for any definition of Hoare judgments satisfying the eleven core rules.)
\shortonly{All proofs for this section are deferred to \autoref{app:proofs-derived}.}

This first derived rule is relatively trivial but of high importance:
\[
  \RULE{Conseq}{
    \PA\subseteq\PA'\\
    \hl{\PA'}\bc{\PB'}
    \PB'\subseteq\PB\\
  }{
    \hl\PA\bc\PB
  }
\]
\begin{proof}
  From \rulerefx{Skip} and \rulerefx{Seq}, we get
  $\hl\PA{\SKIP;\bc;\SKIP}\PB$.
  Since $\SKIP$
  is the neutral element for~$;$ (\autopageref{page:seq.assoc.skip.neutral}),
  this implies $\hl\PA{\bc}\PB$.
\end{proof}

\subsection{Derived rules for derived language elements}
\label{sec:deriv.lang}

\paragraph{Initialization.} The next rules deal with the initialization of variables. They are
generalizations of \ruleref{Init}, dealing with the syntactic sugar
$\inits\XX\psi$
(initialization) and $\initc\XX z$
(classical initialization).
\begin{ruleblock}
  \RULE{InitQ}{}{\pb\hl\PA{\inits\XX\psi}{\psubst\PA\ee\XX,\ \XX\quanteq\psi}}
  \RULE{InitC}{}{\pb\hl\PA{\initc\XX z}{\psubst\PA\ee\XX,\ \XX\quanteq\ket z,\ \class\XX}}
\end{ruleblock}
\fullshort{The derivation of \ruleref{InitQ} is not difficult, but the proof of
\ruleref{InitQ} provides a nice first example of reasoning with
sequences of Hoare judgments. The derivation of \ruleref{InitC} is a little more
involved, and it gives an example how to use rules \rulerefx{ShapeShift} and \rulerefx{Transmute} to
show that the content of a variable is classical.}{
The derivation of \ruleref{InitQ} is straightforward, but
the derivation of \ruleref{InitC} requires the use of rules
\rulerefx{ShapeShift} and \rulerefx{Transmute} to
show that the $\XX$ is classical.}
\begin{proof}[of \rulerefx{InitQ}]
  $\inits\XX\psi$
  is syntactic sugar for
  $\init{\xx_1};\dots;\init{\xx_n};\apply{U_\psi}\XX$
  where $U_\psi\ket{0,\dots,0}=\psi$ and  $\xx_1\dots\xx_n:=\XX$.
  For variables $\EE:=\ee_1\dots\ee_n$ and $\ee$ of the same type as $\XX$, we have:
  \begin{align*}
    \pb\hlfrag{\PA}\
    &\init{\xx_1}\ 
      \pb\hlfrag{\psubst\PA{\ee_1}{\xx_1},\ \xx_1\quanteq\ket0} && \text{(\ruleref{Init})} \\
    &\init{\xx_2}\
      \pb\hlfrag{\psubst\PA{\ee_1\ee_2}{\xx_1\xx_2},\ \xx_1\quanteq\ket{0},\ \xx_2\quanteq\ket0}
     && \text{(\ruleref{Init})} \\
    &\dots\ 
      \pb\hlfrag{\psubst\PA{\ee_1\dots\ee_{n-1}}{\xx_1\dots\xx_{n-1}},\
      \xx_1\quanteq\ket0,\dots,\xx_{n-1}\quanteq\ket{0}} \\
    & \init{\xx_n}\ 
      \pb\hlfrag{\psubst\PA{\EE}{\XX},\ \XX\quanteq\ket0}  && \text{(\ruleref{Init})} \\
    & \apply{U_\psi}\XX\
      \pb\hlfrag{\oppred{\oponp{U_\psi}\XX}{\paren{\psubst\PA{\EE}{\XX},\ \XX\quanteq\ket0}}}
     && \text{(\ruleref{Apply})} \\
    & =\
      \pb\hlfrag{{\psubst\PA{\EE}{\XX},\ \XX\quanteq \psi}} \\
    & \impl\ 
      \pb\hlfrag{{\psubst\PA{\ee}{\XX},\ \XX\quanteq \psi}}
     && \text{(\ruleref{Rename})} 
  \end{align*}
  Then by \ruleref{Seq}, we get
  $\hl\PA{\inits\XX\psi}{{\psubst\PA{\ee}{\XX},\ \XX\quanteq \psi}}$.
\end{proof}

\begin{proof}[of \ruleref{InitC}]
  Recall that $\initc\XX z$ is syntactic sugar for $\inits\XX{\ket z}$. Thus we have
  \begin{align*}
    \pb\hlfrag\PA\
    & \initc\XX z\
      \pb\hlfrag{\psubst\PA\ee\XX,\ \XX\quanteq\ket z}
    && \text{(\ruleref{InitQ})}
    \\
    & \impl\
      \pb\hlfrag{\psubst\PA\ee\XX,\ \XX{\ee'}\quanteq\ket z\otimes\ket z}
    && \text{(\ruleref{ShapeShift} with $\EE:=\varnothing$, $\EE':=\ee'$)}
    \\
    & \impl\
      \pb\hlfrag{
      \bigvee\nolimits_i\pb\paren{
      \oppred{\oponp{\proj i}\uu}
      {\paren{\psubst\PA\ee\XX,\ \XX\uu\quanteq\ket z\otimes\ket z}}}}
    \hskip-2in\\
    &&& \hskip-1in\text{(\ruleref{Transmute} with $M_i:=\proj i$, $\GG:=\ee'$, $\GG':=\uu$)}
    \\
    & \starrel=\
      \pb\hlfrag{
      {{\psubst\PA\ee\XX,\ \XX\uu\quanteq\ket z\otimes\ket z}}}
    \ \starstarrel=\
      \pb\hlfrag{
      {{\psubst\PA\ee\XX,\ \XX\uu\quanteq\ket z\otimes\ket z,\ \XX\CLASSEQ\uu}}}
      \hskip-2in
    \\
    & \subseteq\ 
      \pb\hlfrag{
      {{\psubst\PA\ee\XX,\ \XX\quanteq\ket z,\ \XX\CLASSEQ\uu}}}
      \ =\
      \pb\hlfrag{
      {{\psubst\PA\ee\XX,\ \XX\quanteq\ket z,\ \class\XX}}}
      \hskip-2in
  \end{align*}
  Here $(*)$
  follows since
  $\oppred{\oponp{\proj i}\uu} {\pb\paren{\psubst\PA\ee\XX,\
      \XX\uu\quanteq\ket z\otimes\ket z}} = {\pb\paren{\psubst\PA\ee\XX,\
      \XX\uu\quanteq\ket z\otimes\ket z}}$ for $i=z$
  and $=0$
  otherwise.  And $(**)$
  follows since
  $\paren{\XX\uu\quanteq\ket z\otimes\ket z} \subseteq
  \paren{\XX\CLASSEQ\uu}$.

  By rules \rulerefx{Seq} and \rulerefx{Conseq}, it follows that
  $\pb\hl\PA{\initc\XX z}{{\psubst\PA\ee\XX,\ \XX\quanteq\ket z,\
      \class\XX}}$.
\end{proof}

\paragraph{Measurements.} More interesting is the rule for measurements because it actively makes use
of ghosts to record the distribution of outcomes. We first look at the
rule for measurements that forget the outcome ($\measuref\XX$
instead of $\measure\YY\XX$)
because it is a bit simpler, and the underlying ideas a the same:
\[
  \RULE{MeasureForget}{}{
    \pb\hl{\PA}{\measuref \XX}{\oppred{\Ucopy\XX\ee}\PA}}
\]
Here \symbolindexmark\Ucopy{$\Ucopy\VV\WW$}
is the isometry from $\VV$
to $\VV\WW$
defined by $\Ucopy\VV\WW\,\ket i_{\VV}=\ket i_{\VV}\otimes\ket
i_{\WW}$. Thus $\Ucopy\XX\ee$
is the operation that ``classically copies'' (in the computational
basis) the content of $\XX$
to the fresh entangled ghost $\ee$.
(We have $\ee\notin\fv(\PA)$
is fresh because otherwise $\oppred{\Ucopy\XX\ee}\PA$
would not be well-defined since it would contain two $\ee$'s.)

In other words, \ruleref{MeasureForget} says that after measuring
$\XX$,
the result is simply to get $\XX$
entangled with a fresh entangled ghost $\ee$.
Since $\ee$
is a ghost, being entangled with it effectively means that $\XX$
has been measured. (It is a well-known fact in quantum information
that entangling with a subsystem that is not observed any more
effectively measures a state.) Thus, the predicate
$\oppred{\Ucopy\XX\ee}\PA$
encodes the fact that $\XX$
has been measured. At the same time, this predicate does not forget
about the probabilities of the different measurement outcomes. This is
best illustrated by an example: Let $\xx$
be of type integer and $\psi:=\sqrt{2/3}\ket1+\sqrt{1/3}\ket2$.
We have
$\hl{\xx\quanteq\psi}{\measuref\xx}{\oppred{\Ucopy\xx\ee}{\paren{\xx\quanteq\psi}}}=
\hlfrag{\xx\ee\quanteq\Ucopy\xx\ee\psi}= \hlfrag{\xx\ee\quanteq\psi'}$
with $\psi':=\sqrt{2/3}\ket{11}+\sqrt{1/3}\ket{22}$.
As we see, the postcondition encodes the probabilities of measuring
$1$
and $2$
(namely, $2/3$
and $1/3$).
Note that the postcondition $\xx\ee\quanteq\psi'$
does not mean that $\xx$
is actually entangled with something. Since $\ee$
is a ghost, it only means that $\xx$
is in a state that can be seen as a hypothetical entanglement with
some $\ee$.
In fact, it is easy to see (\autoref{lemma:distrib}) that the only
mixed memory on $\xx$
satifying $\xx\ee\quanteq\psi'$
is $\rho=\frac23\proj{\ket1}+\frac13\proj{\ket2}$,
as expected. Thus the postcondition faithfully encodes the
probabilities of the measurement outcome, something that would not
have been possible without using ghosts.

Since measurements are merely syntactic sugar in our language, it
turns out that \ruleref{MeasureForget} can be easily derived from the
more basic rules we saw so far:
\begin{proof}[of \rulerefx{MeasureForget}]
  Recall from \autopageref{page:measuref} that $\measuref\XX$ is syntactic sugar for
  ``$\inits{\zz}{\ket0};\apply{\CNOT}{\XX\zz};\inits{\zz}{\ket0}$''
  for some fresh $\zz$ (in particular, $\zz\notin\fv(\PA),\XX$). 
  We then have for some fresh $\ee$:
  \begin{align*}
    \pb\hlfrag{\PA}\
    &\inits\zz{\ket0}\ 
      \pb\hlfrag{\psubst\PA\ee\zz,\ \zz\quanteq\ket0}
      \ =\ \pb\hlfrag{\PA,\ \zz\quanteq\ket0}
    && \text{(\ruleref{InitQ})} \\
    &\apply{\CNOT}{\XX\zz}\
      \pb\hlfrag{\oppred{\oponp{\CNOT}{\XX\zz}}{\paren{\PA,\ \zz\quanteq\ket0}}}
    && \text{(\ruleref{Apply})} \\
    & \!\starrel=\
      \pb\hlfrag{\oppred{\Ucopy\XX\zz}\PA}
      \\
    &\inits\zz{\ket0}\ 
      \pb\hlfrag{\oppred{\Ucopy\XX\ee}\PA,\ \zz\quanteq\ket0}
    && \text{(\ruleref{InitQ})} \\
    & \!\subseteq\ 
      \pb\hlfrag{\oppred{\Ucopy\XX\ee}\PA}.
  \end{align*}
  Here $(*)$ follows since $\oponp{\CNOT}{\XX\zz}(\psi\otimes\ket0)=\Ucopy\XX\zz\psi$.
  
  Then by rules \rulerefx{Seq} and \rulerefx{Conseq}, we get
  $\hl\PA{\measuref\XX}{\oppred{\Ucopy\XX\ee}\PA}$.
\end{proof}

The postcondition of \ruleref{MeasureForget} encodes both the
distribution of outcomes, as well as the state after the
measurement. Sometimes, it may not be necessary to remember the
distribution (only which outcomes are possible). In this case we can
use the following weaker rule:
\[
  \RULE{MeasureForget*}{}{
    \pB\hl{\PA}{\measuref \XX}{\class\XX,\ \textstyle \bigvee\nolimits_i\oppred{\oponp{\proj{\ket i}}\XX}\PA}}
\]
To understand the rule, it is easiest to look at the same example as
above.  Recall that $\psi=\sqrt{2/3}\ket1+\sqrt{1/3}\ket2$
and $\xx$
is of type integer. Then \rulerefx{MeasureForget*} implies
\[
  \pb\hl{\xx\quanteq\psi}{\measuref\xx}{\class\xx,\ \textstyle \bigvee_i
    \oppred{\oponp{\proj{\ket i}}\xx}{\paren{\xx\quanteq\psi}}}
  = \hlfrag{\class\xx,\ \textstyle \bigvee_i
    {\xx\quanteq\proj{\ket i}\psi}}.
\]
Since $\proj{\ket i}\psi=0$
for $i\notin\{1,2\}$
and $\proj{\ket i}\psi=\ket i$ up to scalar factor
for $i=1,2$, we have that $\bigvee_i {\xx\quanteq\proj{\ket i}\psi}$ equals $\xx\quanteq\ket1\vee\xx\quanteq\ket2$. Thus
\[
  \pb\hl{\xx\quanteq\psi}{\measuref\xx}{\class\xx,\
    \xx\quanteq\ket1\vee\xx\quanteq\ket2}.
\]
In other words, after measuring $\xx$,
$\xx$ will be classical and have a state $\ket1$ or $\ket2$.
\Ruleref{MeasureForget*} is derived from \rulerefx{MeasureForget} by \ruleref{Transmute}:
\begin{proof}[of \rulerefx{MeasureForget*}]
  \begin{align*}
    \pb\hlfrag\PA\
    & \measuref\XX\
      \pb\hlfrag{\oppred{\Ucopy\XX\ee}\PA}
    && \text{(\ruleref{MeasureForget})}
    \\
    & \impl\
      \pb\hlfrag{
      \textstyle \bigvee\nolimits_i\pb\paren{
      \oppred{\oponp{\proj{\ket i}}\uu}
      {\paren{\oppred{\Ucopy\XX\uu}\PA}}}}
    \hskip-2in\\
    &&& \hskip-1.5in\text{(\ruleref{Transmute} with $M_i:=\proj{\ket i}$, $\GG:=\ee$, $\GG':=\uu$)}
    \\
    & \starrel=\
      \pb\hlfrag{
      \textstyle \bigvee\nolimits_i\pb\paren{
      \oppred{\oponp{\proj{\ket i}}\XX}
      {\PA},\ \uu\quanteq\ket i}}
      \hskip-2in \\
    & \starstarrel=\
      \pb\hlfrag{
      \textstyle \bigvee\nolimits_i\pb\paren{
      \oppred{\oponp{\proj{\ket i}}\XX}
      {\PA},\ \XX\quanteq\ket i,\ \uu\quanteq\ket i}}
     \\
    & \subseteq\
      \pb\hlfrag{
      \textstyle \bigvee\nolimits_i\pb\paren{
      \oppred{\oponp{\proj{\ket i}}\XX}
      {\PA},\ \XX\CLASSEQ\uu}}
     \\
    & \subseteq\
      \pb\hlfrag{\XX\CLASSEQ\uu,\
      \textstyle \bigvee\nolimits_i{
      \oppred{\oponp{\proj{\ket i}}\XX}
      {\PA}}}
    \\
    & =\
      \pb\hlfrag{\class\XX,\
      \textstyle \bigvee\nolimits_i{
      \oppred{\oponp{\proj{\ket i}}\XX}
      {\PA}}}.
  \end{align*}
  Here $(*)$
  follows since
  ${\oponp{\proj{\ket i}}\uu}{\Ucopy\XX\uu}\phi=
  {\oponp{\proj{\ket i}}\XX}\phi\otimes\ket i_{\uu}$ for all $\phi$.
  And $(**)$ follows since 
  $\oppred{\oponp{\proj{\ket i}}\XX}
      {\PA}
  \subseteq
  \paren{\XX\quanteq\ket i}$.
\end{proof}
Above, we studied measurements that forget their outcome
($\measuref\XX$).
When we consider measurement that remember their outcome
($\measure\YY\XX$),
we get the following analogues to \rulerefx{MeasureForget} and
\rulerefx{MeasureForget*}:
\begin{ruleblock}
  \RULE{Measure}{}{
    \pb\hl{\PA}{\measure \YY\XX}{\oppred{\Ucopy\XX{\ee'}}{\oppred{\Ucopy\XX\YY}{\psubst\PA\ee\YY}}}}
  \RULE{Measure*}{}{
    \pb\hl{\PA}{\measure \YY \XX}{\class\XX,\ \class\YY,\ \textstyle \bigvee\nolimits_i\pb\paren{\oppred{\oponp{\proj{\ket i}}\XX}{\psubst\PA\ee\YY},\ \YY\quanteq\ket i}}}
\end{ruleblock}
As one can see, the only differences to \rulerefx{MeasureForget} and
\rulerefx{MeasureForget*} is that the measurement is additionally
written to $\YY$
(either via $\Ucopy\XX\YY$
or via $\YY\quanteq\ket i$).
And additionally, $\YY$
is replaced by $\ee$
in $\PA$
if it occurs there because it is overwritten (analogous to
\ruleref{Init}, \ruleref{InitQ}, \ruleref{InitC}).  For example, with
$\psi$
and $\xx$
as above, we get
$\hl{\xx\quanteq\psi}{\measure\yy\xx}{\xx\yy\ee\quanteq\psi''}$
with $\psi'' :=\sqrt{2/3}\ket{111}+\sqrt{1/3}\ket{222}$
and
$\hl{\xx\quanteq\psi}{\measure\yy\xx}{\class\xx,\ \class\yy,\
  \xx\yy\quanteq\ket{11} \vee \xx\yy\quanteq\ket{22}}$.  The proofs of
these rules are very similar to those of \rulerefx{MeasureForget} and
\rulerefx{MeasureForget*}:

\begin{proof}[of \rulerefx{Measure}]
  Recall from \autopageref{page:measure} that $\measure\YY\XX$ is syntactic sugar for
  ``$\inits\YY{\ket0};\inits{\zz}{\ket0};\apply\CNOT{\XX\YY};\apply{\CNOT}{\XX\zz};\inits{\zz}{\ket0}$''
  for some fresh $\zz$ (in particular, $\zz\notin\fv(\PA),\XX,\YY$). 
  We then have for some fresh $\ee,\ee'$:
  \begin{align*}
    \pb\hlfrag{\PA}\
    &\inits\YY{\ket0}\
      \pb\hlfrag{\psubst\PA{\ee}\YY,\ \YY\quanteq\ket0}
    && \text{(\ruleref{InitQ})} \\
    &\inits\zz{\ket0}\ 
      \pb\hlfrag{\psubst{\pb\paren{\psubst\PA{\ee}\YY,\ \YY\quanteq\ket0}}{\ee'}\zz,\ \zz\quanteq\ket0}
    && \text{(\ruleref{InitQ})} \\
    &\!=\
      \pb\hlfrag{\psubst\PA{\ee}\YY,\ \YY\quanteq\ket0,\ \zz\quanteq\ket0}
      && (\zz\notin\fv(A),\YY,\ee)
      \\
    &\apply{\CNOT}{\XX\YY}\
      \pb\hlfrag{\oppred{\oponp{\CNOT}{\XX\YY}}
      {\pb\paren{\psubst\PA{\ee}\YY,\ \YY\quanteq\ket0}},\ \zz\quanteq\ket0}
    && \text{(\ruleref{Apply}, $\zz\notin\XX\YY$)} \\
    &\!\starrel=\
      \pb\hlfrag{\oppred{\Ucopy\XX\YY}{\psubst\PA{\ee}\YY},\ \zz\quanteq\ket0}
      \\
    &\apply{\CNOT}{\XX\zz}\
      \pb\hlfrag{\oppred{\oponp{\CNOT}{\XX\zz}}{\pb\paren{\oppred{\Ucopy\XX\YY}{\psubst\PA{\ee}\YY},\ \zz\quanteq\ket0}}}
    && \text{(\ruleref{Apply})} \\
    & \!\starstarrel=\
      \pb\hlfrag{\oppred{\Ucopy\XX\zz}{\oppred{\Ucopy\XX\YY}{\psubst\PA{\ee}\YY}}}
      \\
    &\inits\zz{\ket0}\ 
      \pb\hlfrag{\oppred{\Ucopy\XX{\ee'}}{\oppred{\Ucopy\XX\YY}{\psubst\PA{\ee}\YY}},\ \zz\quanteq\ket0}
    && \text{(\ruleref{InitQ})} \\
    & \!\subseteq\ 
      \pb\hlfrag{\oppred{\Ucopy\XX{\ee'}}{\oppred{\Ucopy\XX\YY}{\psubst\PA{\ee}\YY}}}.
  \end{align*}
  Here $(*)$
  follows since
  $\oponp{\CNOT}{\XX\YY}(\psi\otimes\ket0)=\Ucopy\XX\YY\psi$
  for all $\psi$. And $(**)$ analogously.
  
  Then by rules \rulerefx{Seq} and \rulerefx{Conseq}, we get
  $\hl\PA{\measure\YY\XX}{\oppred{\Ucopy\XX{\ee'}}{\oppred{\Ucopy\XX\YY}{\psubst\PA{\ee}\YY}}}$.
\end{proof}

\begin{proof}[of \rulerefx{Measure*}]
  We first derive an auxiliary fact:
  \begin{equation}
    \pb\paren{\class\XX,\ \XX\CLASSEQ\YY,\ \PB}
    \impl
    \pb\paren{\class\XX,\ \class\YY,\ \PB}
    \qquad
    \text{for any $\PB$}
    \label{eq:intro.class}
  \end{equation}
  By \ruleref{Case} with $\PC:=\class\XX$,
  $M:=\{\ket z\}$
  and \autoref{lemma:disentangling} (or more simply using the
  \ruleref{CaseClassical} we introduce on
  \autopageref{rule:CaseClassical} below), \eqref{eq:intro.class} follows from:
  \begin{align*}
    &\paren{\XX=\ket z,\ \class\XX,\ \XX\CLASSEQ\YY,\ \PB}
     \subseteq
      \paren{\class\XX,\ \YY=\ket z,\ \PB}
    \\&
    \starrel\impl
    \paren{\class\XX,\ \YY=\ket z,\ \ee=\ket z,\ \PB}
    \starstarrel\impl
    \paren{\class\XX,\ \YY=\ket z,\ \uu=\ket z,\ \PB}
    \\&
    \subseteq
    \paren{\class\XX,\ \YY\CLASSEQ\uu,\ \PB}
    =
    \paren{\class\XX,\ \class\YY,\ \PB}.
  \end{align*}
  Here $(*)$
  follows from \rulerefx{ShapeShift} with $\XX:=\EE:=\varnothing$,
  $\EE':=\ee$,
  $\psi:=1$,
  $\psi':=\ket z_{\ee}$.
  And $(**)$
  follows from \rulerefx{Transmute} with $M_i=\proj{\ket i}$,
  $\GG:=\ee$,
  $\GG':=\uu$, and simplification.
  
  We proceed to the actual proof of \rulerefx{Measure*}. We calculate:
  \begin{align*}
    \pb\hlfrag\PA\
    & \measure\YY\XX\
      \pb\hlfrag{\oppred{\Ucopy\XX{\ee'}}{\oppred{\Ucopy\XX\YY}{\psubst\PA\ee\YY}}}
    && \text{(\ruleref{Measure})}
    \\
    & \impl\
      \pb\hlfrag{
      \textstyle \bigvee\nolimits_i
      \oppred{\oponp{\proj{\ket i}}\XX}
      {\oppred{\Ucopy\XX{\uu}}{\oppred{\Ucopy\XX\YY}{\psubst\PA\ee\YY}}}}
    \hskip-2in\\
    &&& \hskip-2.5in\text{(\ruleref{Transmute} with $M_i:=\proj{\ket i}$, $\GG:=\ee'$, $\GG':=\uu$)}
    \\
    & \starrel=\
      \pb\hlfrag{
      \textstyle \bigvee\nolimits_i\pb\paren{
      \oppred{\oponp{\proj{\ket i}}\XX}
      {{\psubst\PA\ee\YY}},\ \YY\quanteq\ket i,\ \uu\quanteq\ket i}}
      \hskip-2in \\
    & \starstarrel=\
      \pb\hlfrag{
      \textstyle \bigvee\nolimits_i\pb\paren{
      \oppred{\oponp{\proj{\ket i}}\XX}
      {\psubst\PA\ee\YY},\ \XX\quanteq\ket i,\ \YY\quanteq\ket i,\ \uu\quanteq\ket i}}
     \\
    & \subseteq\
      \pb\hlfrag{
      \textstyle \bigvee\nolimits_i\pb\paren{
      \oppred{\oponp{\proj{\ket i}}\XX}
      {\psubst\PA\ee\YY},\ \YY\quanteq\ket i,\ \XX\CLASSEQ\uu,\ \XX\CLASSEQ\YY}}
     \\
    & \subseteq\
      \pb\hlfrag{\XX\CLASSEQ\uu,\
      \XX\CLASSEQ\YY,\
      \textstyle \bigvee\nolimits_i{
      \pb\oppredp{\oponp{\proj{\ket i}}\XX}
      {\psubst\PA\ee\YY,\ \YY\quanteq\ket i}}}
    \\
    & =\
      \pb\hlfrag{\class\XX,\
      \XX\CLASSEQ\YY,\
      \textstyle \bigvee\nolimits_i{
\pb\oppredp{\oponp{\proj{\ket i}}\XX}
      {\psubst\PA\ee\YY,\ \YY\quanteq\ket i}}}
    \\ & \eqrefrel{eq:intro.class}\impl\
         \pb\hlfrag{\class\XX,\
         \class\YY,\
      \textstyle \bigvee\nolimits_i\pb\paren{
      \oppred{\oponp{\proj{\ket i}}\XX}
      {\psubst\PA\ee\YY,\ \YY\quanteq\ket i}}}
  \end{align*}
  Here $(*)$
  follows since
  ${\oponp{\proj{\ket i}}\uu}{\Ucopy\XX\uu}
  {\Ucopy\XX\YY}\phi=
  {\oponp{\proj{\ket i}}\XX}\phi\otimes\ket i_{\uu}\otimes\ket i_{\YY}$ for all $\phi$.
  And $(**)$ follows since 
  $\oppred{\oponp{\proj{\ket i}}\XX}
      {\PA}
  \subseteq
  \paren{\XX\quanteq\ket i}$.
  Then \rulerefx{Measure*} follows using \rulerefx{Conseq} and \rulerefx{Seq}.
\end{proof}

\paragraph{Sampling.} Finally, we consider sampling $\sample\XX
D$. Again, we have a \ruleref{Sample} that remembers the distribution
$D$,
and a \ruleref{Sample*} that only remembers which values can occur.
\begin{ruleblock}
  \RULE{Sample}{}{
    \pb\hl{\PA}{\sample{\XX}{D}}{\psubst\PA\ee\XX,\ \distr\XX D}}
  \RULE{Sample*}{}{
    \pb\hl{\PA}{\sample{\XX}{D}}{\psubst\PA\ee\XX,\ \class\XX,\ \textstyle \bigvee_{i\in\suppd D}
      \XX\quanteq\ket i}  }
\end{ruleblock}
Here \symbolindexmark\suppd{$\suppd D$}
is the support of the distribution $D$,
i.e., $\suppd D=\braces{i:D(i)\neq 0}$.
That is, the postcondition from \rulerefx{Sample} says that $X$
is distributed according to $D$
($\distr\XX D$),
while the postcondition from \ruleref{Sample*} merely says that $\XX$
is classical and has value $i$
($\XX\quanteq\ket i$) for some $i\in\suppd D$.
Both rules can be derived easily using the definition of 
$\distr\XX D$ as syntactic sugar and the rules we have derived above:
\begin{proof}[of \rulerefx{Sample}]
  Recall that $\sample\XX D$
  is syntactic sugar for ``$\inits\XX{\psiD D};\measuref\XX$''
  where $\psiD D=\sum_{i\in T}\sqrt{D(i)}\ket{i}$.
  And $\distr\XX D$
  is syntactic sugar for $\XX\ee\quanteq\psiDD D$
  where ${\psiDD D}=\sum_i \sqrt{D(i)} \ket{i,i}$ and $\ee$ is fresh. We have (with fresh $\ee,\ee'$):
  \begin{align*}
    \pb\hlfrag\PA
    & \inits\XX{\psiD D}\
      \pb\hlfrag{\psubst\PA\ee\XX,\ \XX\quanteq\psiD D}
    && \text{(\ruleref{InitQ})}
    \\
    & \measuref\XX\
      \pb\hlfrag{\oppred{\Ucopy\XX{\ee'}}{\paren{\psubst\PA\ee\XX,\ \XX\quanteq\psiD D}}}
    && \text{(\ruleref{MeasureForget})}
    \\
    & =\
      \pb\hlfrag{\psubst\PA\ee\XX,\ \oppred{\Ucopy\XX{\ee'}}{\paren{\XX\quanteq\psiD D}}}
    \\
    & =\
      \pb\hlfrag{\psubst\PA\ee\XX,\ {\XX\ee\quanteq\psiDD D}}
      \ =\
      \pb\hlfrag{\psubst\PA\ee\XX,\ \distr\XX D}.
      \hskip-5mm
  \end{align*}
  \Ruleref{Sample} then follows by \ruleref{Seq}.
\end{proof}

\begin{proof}[of \rulerefx{Sample*}]
  Recall that $\sample\XX D$
  is syntactic sugar for ``$\inits\XX{\psiD D};\measuref\XX$''
  where $\psiD D=\sum_{i\in T}\sqrt{D(i)}\ket{i}$.
  We have (with fresh $\ee$):
  \begin{align*}
    \pb\hlfrag\PA
    & \inits\XX{\psiD D}\
      \pb\hlfrag{\psubst\PA\ee\XX,\ \XX\quanteq\psiD D}
    && \text{(\ruleref{InitQ})}
    \\
    & \measuref\XX\
      \pb\hlfrag{\class\XX,\ \textstyle\bigvee\nolimits_i
      \oppred{\oponp{\proj{\ket i}}\XX}
      {\paren{\psubst\PA\ee\XX,\ \XX\quanteq\psiD D}}}
      \hskip-1in
      \\
    &&& \text{(\ruleref{MeasureForget*})}
    \\
    & =\
      \pb\hlfrag{\class\XX,\ \textstyle\bigvee\nolimits_i
      \pb\paren{{\psubst\PA\ee\XX,\ \XX\quanteq\proj{\ket i}\,\psiD D}}}
    \\
    & \subseteq\
      \pb\hlfrag{\psubst\PA\ee\XX,\ \class\XX,\ \textstyle\bigvee\nolimits_i
      {{\XX\quanteq\proj{\ket i}\,\psiD D}}}
    \\
    & \starrel=\
      \pb\hlfrag{\psubst\PA\ee\XX,\ \class\XX,\ \textstyle\bigvee\nolimits_{i\in\suppd D}
      {{\XX\quanteq\ket i}}}.
  \end{align*}
  Here $(*)$ follows since $\proj{\ket i}\,\psiD D=0$ for $i\notin\suppd D$,
  and $0\neq \proj{\ket i}\,\psiD D \propto \ket i$ for $i\in\suppd D$.
  
  \Ruleref{Sample*} then follows by rules \rulerefx{Seq} and \rulerefx{Conseq}.
\end{proof}

\paragraph{Final note.}  Without the concept of ghosts, we would not
have been able to express \ruleref{Init} and thus not have been able
to derive the above rules. Instead, we would have had to directly
prove rules for measurements and sampling directly from the
semantics. And without ghosts, those rules would not have been as
expressive, for example, \rulerefx{Sample} would not have been expressible
(i.e., we cannot express what distribution $\XX$
has after sampling), and \ruleref{Sample*} would lack the predicate
$\class\XX$,
i.e., we cannot express that $\XX$
is not a superposition between different $\ket i$ with $i\in\suppd D$.

\subsection{Programs with classical variables}
\label{sec:deriv:classical}
In this section, we show how programs using classical variables can be
conveniently treated in our logic, even though the definition of our
programming language does not contain classical variables. The lack of
classical variables in language and logic has, at the first glance,
a number of negative consequences:
\begin{compactenum}[(i)]
\item Any classical values in a program need to be encoded as quantum
  states. In particular, reasoning steps that hold only for classical
  variables cannot be applied. (E.g., a case distinction over the
  value of the classical variable.)
\item Quantum operations cannot be parametrized by classical
  values. For example, we might wish to model a program step such as
  $\apply{U_{\yy}}\xx$,
  i.e., $U_i$
  is a family of isometries, and the classical variable $\yy$
  selects which of them is applied to $\xx$.
  For example, in \cite{qrhl}, every program step can be parametrized
  by all classical variables, and this possibility is essential for
  expressing more complicated programs (e.g., the cryptographic
  schemes analyzed there).
\item Predicates cannot depend on classical values. For example, we
  might wish to say something like $\xx\quantin S_{\yy}$,
  i.e., $S_i$
  is a family of subspaces, and $\xx$
  lies in the subspaces $S_{\yy}$
  selected by the classical variable $\yy$.
  For example, \cite{qrhl} handles this by defining predicates to be
  families of subspaces indexed by the values of the classical
  variables (and not simply subspaces as is the case here). Such
  predicates are necessary for more complex analyses, e.g., think of
  Grover's algorithm \cite{grover} where the loop invariant would have
  to state that the quantum register is in a state that depends on how
  many iterations have been performed so far (the iteration counter
  being a classical variable).
\end{compactenum}
As we see, a special treatment of classical variables is almost
essential for convenient reasoning about hybrid programs (i.e.,
programs that contain both classical and quantum values), yet such a
special treatment comes with a large formal overhead (the semantics
are more complex, all proofs need to distinguish between classical and
quantum variables). In this section, we will see how ghost variables
allow us to recover the benefits of classical variables without the
formal overhead, simply by introducing additional syntactic sugar and
some derived rules.

\paragraph{Syntactic sugar for programs.}
In our language, $\apply U\XX$
requires $U$
to be a constant. Since $\inits\XX\psi$,
$\initc\XX z$,
and $\sample\XX D$
are all syntactic sugar based on $\applykw$,
they inherit this restriction, i.e., $\psi,z,D$
are constants as well. Thus we cannot even write something as simple
as $\initc\xx\yy$,
meaning we assign the content of the classical variable $\yy$
to $\xx$.
We introduce some syntactic sugar for $\applykw$
that solves this problem:

Consider the term $\apply U\XX$\pagelabel{page:applysugar}
where $U$
is an expression containing program variables $\YY$
of type $T$
(disjoint from $\XX$).
For any assignment $z$
to the variables $\YY$,
$U\{z/\YY\}$
defines an isometry $\UX Uz$
($U$
evaluated for $\YY:=z$).
Let \symbolindexmark\UX{$\UX U\YY$}
be the isometry on $\YY\XX$
defined by
$\UX U\YY(\ket{i}_{\YY}\otimes\psi):=\ket{i}_{\YY}\otimes
U_i\psi$.  (That is $\UX U\YY$
is a controlled operation, like CNOT.)  Finally,
\symbolindexmark\apply{$\apply U\XX$}
is syntactic sugar for $\apply{\UX U\YY}{\YY\XX}$.

This notation is best understood by looking at a typical example.
Consider $\apply{e^{-2i\pi\yy H}}\xx$
where $H$
is some fixed Hermitian operator, and $\yy$
has type $\setR$.
Since $U:=e^{-2\pi i\yy H}$
contains the variable $\yy$,
it defines the family $\UX{U}z:=e^{-2\pi iz H}$ of unitaries.
Then $\UX U\yy(\ket z_{\YY}\otimes\psi)=\ket z_{\YY}\otimes e^{-2\pi iz H}\psi$. Then 
$\apply{e^{-2i\pi\yy H}}\xx\ =\ \apply{\UX U\yy}{\yy\xx}$ applies
$\UX Uz:=e^{-2\pi iz H}$ to $\xx$ if $\yy$ is in state $\ket z$, as expected.

Note that this notation does not require that $\YY$
refers to classical variables. It is meaningful to use this notation
when $\YY$
does not contain classical data. However, in the remainder of this
paper, we will only use this notation when we think of $\YY$
as classical variables.

Since $\inits\XX\psi$,
$\initc\XX z$,
and $\sample\XX D$
are all syntactic sugar based on $\applykw$,
this notation automatically carries over to those constructs, too. For
example, $\inits\xx\yy$
is syntactic sugar for $\init\xx;\apply{U_{\ket\yy}}\xx$
with $U_{\ket z}\ket0:=\ket z$
which is syntactic sugar for $\init\xx;\apply{\UX{\paren{U_{\ket\yy}}}\yy}{\yy\xx}$
where ${\UX{\paren{U_{\ket\yy}}}\yy}\ket{z}\ket{0}=\ket{z}\ket{z}$.
Hence $\inits\xx\yy$
will initialize~$\yy$
with~$\ket z$
when~$\xx$
contains~$\ket z$,
as expected.
\fullonly{Similarly, had we chosen to define a more complicated measurement
command in \autoref{sec:qprogs} (instead of $\measure\YY\XX$) that takes the measurement
basis as an additional argument, then that measurement command would
generalize analogously and allow us specify a basis that depends on
classical variables.}

\paragraph{Syntactic sugar for predicates.}
We use similar syntactic sugar for writing predicates that depend on
classical variables. Without such syntactic sugar, predicates such as
$\xx\quanteq\psi$
can only contain a constant $\psi$,
i.e., $\psi$ cannot depend on classical variables.  An expression $\PA$
containing some (supposedly classical) variables $\YY$
of type $T$
defines a family $\PX\PA z$
($z\in T$)
of predicates with $\fv(\PX\PA z)\cap\YY=\varnothing$,
resulting from substituting $\YY$
by $z$
in the expression $\PA$.
We then define the predicate
\symbolindexmarkonly{\PX}$\symbolindexmarkhighlight{\PX\PA\YY}:=
\bigvee_z\pb\paren{\YY\quanteq\ket z,\ \PX\PA z}$.  (That
is, $\ket z_{\YY}\otimes\psi\in \PX\PA\YY$
iff $\psi\in\PX\PA z$.)
We can now use the notation $\PX\PA\YY$
in pre-/postconditions to parametrize predicates by the values of
classical variables. In most cases, we omit the $\PX{}\YY$,
writing simply the predicate $\PA$.
While this notation is potentially ambiguous, in most cases it will be
clear where the $\PX{}\YY$
has to be added since otherwise the pre-/postconditions will not be
welltyped.

We illustrate this by example: Consider the judgment
$\pb\hl{\class\yy,\ \xx\quanteq\ket0} {\apply{e^{-2i\pi\yy
      H}}\xx}{\class\yy,\ \xx\quanteq {e^{-2i\pi\yy H}}\ket 0}$.  We
have seen in the previous paragraph how to read
${\apply{e^{-2i\pi\yy H}}\xx}$.
The precondition does not contain any syntactic sugar related to
classical variables. We now translate the postcondition.  Without
omission of the implicitly understood $\PX{}\yy$,
it reads
$\class\yy,\ \PX{\paren{\xx\quanteq {e^{-2i\pi\yy H}}\ket
      0}}\yy$. \fullonly{(The $\PX{}\yy$
  cannot be placed elsewhere since including the $\class\yy$
  in it would be mean that our postcondition contain non-welltyped
  subterms terms $\class z$ for real $z$. And 
  we cannot have  $\PX{}{\xx\yy}$ instead of  $\PX{}\yy$ since that would lead to non-welltyped subterms
  $z_1\quanteq\dots$ where $z_1$ is not a variable.)}
Then ${\xx\quanteq {e^{-2i\pi\yy H}}\ket
      0}$ defines a family of predicates 
 ${\xx\quanteq {e^{-2i\pi z H}}\ket
   0}$ for real $z$,  and 
$\PX{\paren{\xx\quanteq {e^{-2i\pi\yy H}}\ket
    0}}\yy$ means 
$\bigvee_z\pb\paren{\yy\quanteq\ket z,\ {\xx\quanteq {e^{-2i\pi z H}}\ket
    0}}$.
Thus the postcondition would, without the syntactic sugar, read 
$\class\yy,\ \bigvee_z\pb\paren{\yy\quanteq\ket z,\ {\xx\quanteq {e^{-2i\pi z H}}\ket
    0}}$.
Of course, given appropriate rules such as \rulerefx{ApplyParam} below,
one rarely needs to actually explicitly unfold the syntactic sugar.

\paragraph{Derived rules.}
Since the syntactic sugar introduced in this section expands to
language constructs for which we already have introduced rules, we
could, in principle, reason about programs involving classical
variables with only the rules above. However, in practice this may be
cumbersome. Therefore we will now introduce a few derived rules
specifically for the dealing with such programs. The first is a simple
consequence of \ruleref{Case}:
\[
  \RULE{CaseClassical}{
    \forall z.\
    \hl{\YY\quanteq\ket z,\ \class\YY,\ \PX\PA z}\bc\PB
  }{
    \hl{\class\YY, \PX\PA\YY}\bc\PB
  }
\]
\begin{proof}
  Let $T$ be the type of $\YY$. For any $z\in T$, we have
  $\paren{\YY\quanteq\ket z,\ \PX\PA z}
  =\paren{\YY\quanteq\ket z,\ \bigvee_{z'}\paren{\YY\quanteq\ket{z'},\ \PX\PA {z'}}}
  =\paren{\YY\quanteq\ket z,\ \PX\PA\YY}$. (The first equality follows since
  $\paren{\YY\quanteq\ket z,\ \YY\quanteq\ket{z'},\ \PX\PA {z'}}=0$ for $z\neq z'$.)
  Thus the premise of \rulerefx{CaseClassical} becomes
  $    \forall z.\
  \hl{\YY\quanteq\ket z,\ \class\YY,\ \PX\PA\YY}\bc\PB$.
  By \rulerefx{Case}   (with $\PA:=\paren{\class\YY,\ \PX\PA\YY}$,
  $\PC:=\class\XX$, $M:=\{\ket z\}_{z\in T}$),
  we get 
  $\hl{\class\YY,\ \PX\PA\YY}\bc\PB$.
  (Using \autoref{lemma:disentangling} to show $\class\XX$ is 
  $\{\ket z\}_{z\in T}$-disentangling.)
\end{proof}
From this rule, we can derive a rule for our classically parametrized
$\applykw$-command:
\[
  \RULE{ApplyParam}{
    \text{$\fv(e)\subseteq\YY$}\\
    \YY\cap\XX=\varnothing
  }{
    \pb\hl{\class\YY,\ \PX\PA\YY}{\apply e\XX}
    {\class\YY,\ \PX{\pb\paren{\oppred{\oponp e\XX}\PA}}\YY}
  }
\]
Note that his rule is basically the same as \ruleref{Apply}
(especially if we write it with omitted~$\PX{}\YY$),
except that we allow expression $e$
that specifies the operation to apply to contain variables~$\YY$
that must be guaranteed to be classical in the precondition
($\class\YY$).

\begin{proof}
  Let $\YY':=\fv(e)$. Let $\YY'':=\YY\setminus\YY'$.
  Let $T,T',T''$ be the types of $\YY,\YY',\YY''$, respectively.
  Then $T=T'\times T''$ (up to a canonical bijection).
  Then $\apply e\XX$
  is syntactic sugar for $\apply{\UX e{\YY'}}{\YY'\XX}$
  where $\UX e{z'}:=e\{z'/\YY'\}$ for $z'\in T'$. For any $z=(z',z'')\in T$, we have:
  \begin{align*}
    &
      \pb\hlfrag{\YY\quanteq\ket{z},\ \class\YY,\ \PX\PA{ z}}
      \\&
    {\apply{\UX e{\YY'}}{\YY'\XX}}\
    \pb\hlfrag{\oppred{\oponp{\UX e{\YY'}}{\YY'\XX}}{\paren{\YY\quanteq\ket{z},\ \class\YY,\ \PX\PA{ z}}}}
    \\&\starrel=\
    \pb\hlfrag{\oppred{\oponp{\UX e{z'}}{\XX}}{\paren{\YY\quanteq\ket{z},\ \class\YY,\ \PX\PA{ z}}}}
    \\&=\
    \pb\hlfrag{{\YY\quanteq\ket{z},\ \class\YY,\ \oppred{\oponp{\UX e{z'}}{\XX}}{\PX\PA{ z}}}}
    \\&=\
    \pb\hlfrag{{\YY\quanteq\ket{z},\ \class\YY,\ \PX{\oppredp{\oponp{e}{\XX}}{\PA}}{z}}}
    \\&\starstarrel=\
    \pb\hlfrag{{\YY\quanteq\ket{z},\ \class\YY,\
    \textstyle\bigvee_{\tilde z}\pb\paren{\YY\quanteq\ket{\tilde z},\
    \PX{\oppredp{\oponp{e}{\XX}}{\PA}}{z}}}}
    \\&=\
    \pb\hlfrag{{\YY\quanteq\ket{z},\ \class\YY,\ \PX{\oppredp{\oponp{e}{\XX}}{\PA}}{\YY}}}.
    \\&\subseteq\
    \pb\hlfrag{{\class\YY,\ \PX{\oppredp{\oponp{e}{\XX}}{\PA}}{\YY}}}.
  \end{align*}
  Here $(*)$ follows from the fact that by definition, $\UX U{\YY'}$ operates
  as $\UX U{z}$ on quantum memories in $\YY\quanteq\ket z$. And $(**)$ follows since
  $\paren{\YY\quanteq\ket z,\YY\quanteq\ket{\tilde z},\PX\PA {\tilde z}}=0$ for $z\neq \tilde z$.

  Thus (using \ruleref{Conseq} and the syntactic sugar for $\applykw$):
  \[
    \hl
    {\YY\quanteq\ket{z},\ \class\YY,\ \PX\PA{ z}}
    {\apply e\XX}
    {{\class\YY,\ \PX{\oppredp{\oponp{e}{\XX}}{\PA}}{\YY}}}
  \]
  By \ruleref{CaseClassical} (with
  $\PB:=\pb\paren{\class\YY,\ \PX{\oppredp{\oponp{e}{\XX}}{\PA}}{\YY}}$),
  we get the conclusion of \ruleref{ApplyParam}.
\end{proof}

For a simple example of using this rule see the correctness of the
quantum one-time pad (\autoref{sec:qotp.correct}).

\section{Case Study: Quantum One-time Pad}
\label{sec:qotp}

In this section, we give a more advanced example of using Hoare logic
with ghosts. We analyze the \emph{quantum one-time pad}%
\index{quantum one-time pad}%
\index{one-time pad!quantum} (QOTP%
\index{QOTP|see{quantum one-time pad}}, \cite{boykin03qotp,mosca00qotp}),
a simple encryption scheme for quantum data. First, we analyze its
correctness (i.e., the fact that decryption correctly yields the
original plaintext). This is entirely unproblematic and can be done in
most variants of quantum Hoare logic. We include this case as a warm-up
example for reasoning with mixed quantum and classical data. Then we
turn to the security of the QOTP, i.e., the fact that an encrypted
qubit looks like random data if the key is not known.
\fullshort{
  For reasons
  described below (\autoref{sec:qotp.tricky}), this is hard or impossible with prior variants of
  quantum Hoare logic.
}{
  Proving this seems hard or impossible with prior variants of
  quantum Hoare logic.
  (We elaborate on this in \autoref{sec:qotp.tricky}.)
}
It thus shows nicely the power of ghosts.

The QOTP, presented here in its version for single qubits, is very
simple: The key $x=(x_1,x_2)$
are two uniformly random classical bits. The plaintext $\yy$
is a qubit. To encrypt, we apply the
\emph{Pauli-Z}\index{Pauli-Z}\index{Z!Pauli-} operator
\symbolindexmarkonly\pauliZ$\symbolindexmarkhighlight\pauliZ
:=
\begin{tinymatrix}1&0\\0&-1
\end{tinymatrix}$ iff $x_1=1$. Then we apply 
the \emph{Pauli-X}\index{Pauli-X}\index{X!Pauli-} operator
\symbolindexmarkonly\pauliX$\symbolindexmarkhighlight\pauliX
:=
\begin{tinymatrix}0&1\\1&0
\end{tinymatrix}$ iff $x_2=1$.
Or, written more compactly, to encrypt $\yy$,
we apply $\pauliX^{x_2}\pauliZ^{x_1}$ to it.

Decryption works by inverting the sequence of operations, i.e., by applying
$\pauliZ^{x_1}\pauliX^{x_2}$ to $\yy$.

In our language, the QOTP, consisting of key generation, encryption,
and decryption, is expressed as follows:%
\symbolindexmarkonly\Keygen\symbolindexmarkonly\Enc\symbolindexmarkonly\Dec
\begin{equation*}
  \symbolindexmarkhighlight\Keygen\ :=\ \sample\xx K,
  \qquad
  \symbolindexmarkhighlight\Enc\ :=\ \apply{\pauliX^{\xx_2}\pauliZ^{\xx_1}}{\yy}
  \qquad
  \symbolindexmarkhighlight\Dec\ :=\ \apply{\pauliZ^{\xx_1}\pauliX^{\xx_2}}{\yy}
\end{equation*}
Here $\xx$ has type $K:=\bits2$ (the \emph{key space}%
\index{key space!(of QOTP)}), $\yy$
has type $M:=\bit$ (the \emph{message space}%
\index{message space!(of QOTP)}). In slight abuse of notation, we also
use $K$
and $M$
for the uniform distributions over $K$
and $M$,
respectively. Note that the definitions of $\Enc,\Dec$
make use of the syntactic sugar from \autoref{sec:deriv:classical}.

\subsection{Correctness of the QOTP}
\label{sec:qotp.correct}

The correctness of the QOTP can be expressed by the following Hoare judgment:
\begin{equation}
  \forall \zz,\psi.\quad
  \pb\hl{\yy\zz\quanteq\psi}{\Keygen;\Enc;\Dec}{\yy\zz\quanteq\psi}
  \label{eq:qotp.correct}
\end{equation}
We introduced an extra variable $\zz$
here to model that even if $\yy$
is entangled with some other system $\zz$,
decryption correct restores the state of $\yy$
and its entanglement with $\zz$.\fullonly{\footnote{A
  more elementary statement would be
  $\forall\psi.\
  \hl{\yy\quanteq\psi}{\Keygen;\Enc;\Dec}{\yy\quanteq\psi}$.
  Alternatively, we could also state a stronger statement
  ``$\hl{\PA}{\Keygen;\Enc;\Dec}{\PA}$
  for all $\PA$
  with $\xx\notin\fv(\PA)$.''
  Both can be proven with essentially the same derivation as \eqref{eq:qotp.correct}.}}
Using the rules from this paper, the derivation of \eqref{eq:qotp.correct} is elementary:
\begin{align*}
  \pb\hlfrag{\yy\zz\quanteq\psi}\
  & \sample\xx K\
    \pb\hlfrag{\yy\zz\quanteq\psi,\ \class\xx,\ {\textstyle\bigvee_i\xx\quanteq\ket i}}
  &&
     \text{(\ruleref{Sample*})}
  \\
  & \subseteq\
    \pb\hlfrag{\class\xx,\ \yy\zz\quanteq\psi}
  \\
  & \apply{\pauliX^{\xx_2}\pauliZ^{\xx_1}}{\yy}\
    \pb\hlfrag{\class\xx,\ \oppred{\oponp{\pauliX^{\xx_2}\pauliZ^{\xx_1}}{\yy}}{\paren{\yy\zz\quanteq\psi}}}
  && \text{(\ruleref{ApplyParam})}
  \\
  & \apply{\pauliZ^{\xx_1}\pauliX^{\xx_2}}{\yy}\
    \pb\hlfrag{\class\xx,\ \oppred{\oponp{\pauliZ^{\xx_1}\pauliX^{\xx_2}}\yy}{\oppred{\oponp{\pauliX^{\xx_2}\pauliZ^{\xx_1}}{\yy}}{\paren{\yy\zz\quanteq\psi}}}}
    \hskip-1in
    \\
  &&& \text{(\ruleref{ApplyParam})}
  \\
  & =\
    \pb\hlfrag{\class\xx,\ \oppred{\oponp{\pauliZ^{\xx_1}\pauliX^{\xx_2}\pauliX^{\xx_2}\pauliZ^{\xx_1}}{\yy}}{\paren{\yy\zz\quanteq\psi}}}
  \\
  & \starrel=\
    \pb\hlfrag{\class\xx,\ \oppred{\oponp{\id}{\yy}}{\paren{\yy\zz\quanteq\psi}}}
    \ =\ 
    \pb\hlfrag{\class\xx,\ {\yy\zz\quanteq\psi}}
\end{align*}
Here $(*)$
uses that $\pauliZ\pauliZ=\pauliX\pauliX=\id$.
By rules \rulerefx{Seq} and \rulerefx{Conseq}, and the definitions of
$\Keygen,\Enc,\Dec$, we then get \eqref{eq:qotp.correct}.

\subsection{The Quantum One-time Pad and Other Logics}
\label{sec:qotp.tricky}

In this section, we explain why it is hard or even impossible to
analyze the security of the QOTP in existing quantum Hoare
logics (without ghosts).
This section is not
required for understanding the security proof and can be
skipped. However, it illustrates why we need ghosts. Since this
section is about logics from prior work, and it would be beyond the
scope of this section to introduce those logics in more detail, in
this section we assume some familiarity with the logics referenced
here. 

Security of the QOTP means that, after encrypting, the variable $\yy$
is indistinguishable from a uniformly random bit. (More precisely, after running $\Keygen;\Enc$.)
How can we model/prove this in different Hoare logics?

\paragraph{Quantum Hoare logic with subspace predicates.} Probably the
simplest and most obvious variant of quantum Hoare logic is quantum
Hoare logic with subspaces.  Here pre-/postconditions are modeled as
subspaces (a.k.a.~sharp predicates, Birkhoff-von Neumann quantum
logic).  See the ``Recap Hoare logic'' paragraph on
\autopageref{page:recap.hoare} for additional details. In this logic,
we cannot express that $\yy$
is uniformly distributed. Specifically, any predicate $\PA$
on $\yy$
that holds when $\yy$
is a uniformly distributed bit has to also hold for any other
distribution!  Namely, since $\yy$
can be $\ket0$
or $\ket1$
(or anything else), $\PA$
needs to contain $\ket0$
and $\ket1$.
But the only subspace containing both $\ket0$
and $\ket1$
is the full space $\elltwov\yy=\top$.
Thus, the only postcondition on $\yy$
that would be satisfied by the QOTP encryption is $\top$
which would also be satisfied by any insecure encryption scheme.  So
we cannot formulate (let alone prove) any judgment in this variant of
quantum Hoare logic that would express the security of the QOTP.

\paragraph{Quantum Hoare logic with expectations.}
A more expressive variant of quantum Hoare logic is Hoare logic with
expectations. Here, predicates are quantitative
``expectations''\index{expectation},\footnote{Analogous to the
  classical expectations by Kozen \cite{kozen83probabilistic}.}  that is, for a given
state $\rho$,
satisfaction of a predicate $\PA$
is not binary (true/false), but a predicate $\rho$
is satisfied to a certain degree. (Formally, $\PA$
is a Hermitian operator and the degree of satisfaction is defined as
$\tr\PA\rho$.)
And a Hoare judgment $\hl\PA\bc\PB$
means that $\denot\bc(\rho)$
satisfies $\PB$
at least as much as $\rho$
satisfies $\PA$.
Such a logic can be expressed in two ways, either via Hoare triples
(e.g., \cite{ying12floyd}) or equivalently in terms of weakest preconditions
\cite{dhondt06weakest}. 

We know that the weakest preconditions of a quantum program determines
the denotational semantics of said program \cite{dhondt06weakest}, that is, if
$\bc$ and $\bd$ have the same weakest preconditions, then $\denot\bc=\denot\bd$.
(Where $\denot\cdot$
is defined as in \autoref{sec:qprogs}.) Or equivalently:
$\denot\bc=\denot\bd$
iff for all predicates $\PA,\PB$,
$\hl\PA\bc\PB\Leftrightarrow\hl\PA\bd\PB$.
(This holds only if the predicates $\PA,\PB$
are expectations!)
Thus we could define security of the QOTP by
requiring that $\Keygen;\Enc$
has the same weakest preconditions as the program $\sample\yy M$.
(Where we do not allow predicates to refer to $\xx$
since security only holds when the key is secret. For simplicity, we
will assume that $\yy$ is the only variable left after execution.)
The weakest precondition for $\sample\yy M$
and postcondition $\PB$
is $\frac12\tr\PB\cdot\id$.
Thus we can define security of the QOTP as follows: The QOTP is secure
iff for all $\PB$, the weakest precondition of $\Keygen;\Enc$
for postcondition $\PB$
is $\frac12\tr B\cdot\id$.
Or in terms of Hoare judgments: The QOTP is secure iff for all
$\PA,\PB$
with $\PA\leq\frac12\tr B\cdot\id$,
we have $\hl\PA{\Keygen;\Enc}\PB$
and for all $\PA,\PB$
with $\PA\nleq\frac12\tr B\cdot\id$,
we do \emph{not} have $\hl\PA{\Keygen;\Enc}\PB$.

This is formally correct (except for the fact that we glossed over the
fact that there are variables beyond $\yy$),
and security of the one-time pad can be derived due to the
completeness of the calculus from \cite{dhondt06weakest}. However, the
definition is very awkward. In order to prove the security of the
one-time pad, not only do we need to prove that
$\hl\PA{\Keygen;\Enc}\PB$
holds for certain $\PA,\PB$
but also that is does \emph{not} hold for certain others. This is
problematic since usually we reason only in terms of judgments that
hold, and not in terms of judgments that do not hold. (Or, in terms of
weakest precondition, we reason about inequalities, not equalities.)
Especially in the presence of partial specifications, proving that
certain judgments do not hold might be very difficult.

What happens if we simply omit the requirement that some judgments do
not hold? I.e., we use the following definition: The QOTP is secure
iff for all $\PA,\PB$
with $\PA\leq\frac12\tr B\cdot\id$,
we have $\hl\PA{\Keygen;\Enc}\PB$.
As it turns out, this works for the QOTP (we can show that this
condition is equivalent to the original one). However, this is
accidental, and for slight variations of the QOTP, this might not work
any more.

To illustrate this, consider the following slightly artificial variant
$\frac12$QOTP
of the QOTP: This variant has an encryption algorithm $\Enchalf$
that terminates only with probability $\frac12$,
but that works correctly when it terminates. That is
\symbolindexmarkonly\Enchalf$\symbolindexmarkhighlight\Enchalf:=T;\Enc$
where $T$
is a program that does not touch any variables and terminates with
probability $\frac12$
(i.e., $\denot T(\rho)=\frac12\rho$,
e.g., implemented as a loop).  Analogous to the above, we can say that
the $\frac12$QOTP
is secure iff for all $\PA,\PB$
with $\PA\leq\frac14\tr B\cdot\id$,
we have $\hl\PA{\Keygen;\Enchalf}\PB$.
(Note that $\frac12$
was replaced by $\frac14$,
since we now compare with the program $T;\sample\yy M$
which only satisfies these judgments.) But now consider a program
$\bc$ that with probability $\frac12$ runs $\SKIP$,
and with probability $\frac12$ runs the original QOTP.\footnote{Formally,
  $\bc\ :=\ \sample\zz M;\ifte\zz{\Keygen;\Enc}{\SKIP}$.}
This is clearly not a secure encryption scheme (with probability
$\frac12$ the plaintext is leaked by the program $\SKIP$). Yet, we can check that 
when $\PA\leq\frac14\tr B\cdot\id$,
we have $\hl\PA\bc\PB$. Hence the insecure $\bc$ also satisfies our definition of security!
Thus, our security definition of the $\frac12$QOTP
does not guarantee any reasonable security.

Summarizing, if we want to analyze the security of the QOTP or
variants in quantum Hoare logic with expectations, it can be done in
principle, but we need complicated definitions (we cannot express the
security as a single judgment but through an infinite family).  And we
need to not only prove that judgments hold but also that some
judgments do not hold.  (Except in some cases like the QOTP proper
where positive judgments are sufficient. But seeing this needs
additional extra-logical reasoning and does not generalize, e.g., to
the $\frac12$QOTP.)
And even if we surmount these difficulties, it is not clear how we can
reason with such families of judgments in a larger context (e.g., if
the security of the QOTP is needed to derive some property of a larger
program).

\paragraph{Quantum relational Hoare logic.}
Another variant of Hoare logic that seems particularly suitable for
the analysis of the QOTP is quantum relational Hoare logic (qRHL
\cite{qrhl}).  This logic was specifically designed with cryptographic
proofs in mind, inspired by the success of probabilistic relational
Hoare logic \cite{certicrypt}.  In qRHL, Hoare judgments apply to pairs of
programs.  Very roughly speaking, a judgment such as
\symbolindexmark\rhl{$\rhl\PA\bc\bd\PB$}
means that, if the memories of $\bc$
and $\bd$
jointly satisfy the predicate $\PA$
before the execution of $\bc$
and $\bd$,
they will jointly satisfy $\PB$
afterwards.  For example, if $\XX_1,\XX_2$
are the variables of $\bc$,
$\bd$,
respectively, then
$\rhl{\XX_1\QUANTEQ\XX_2}\bc\bd{\XX_1\QUANTEQ\XX_2}$
means that for identical initial states, $\bc,\bd$
have identical final states.  In other words, $\denot\bc=\denot\bd$.
And $\rhl{\top}\bc\bd{\XX_1\QUANTEQ\XX_2}$
would mean that the final states are identical, no matter what the
initial states are, i.e.,
$\forall\rho,\rho'.\ \denot\bc(\rho)=\denot\bd(\rho')$.

Thus, at the first glance, it seems very easy to model the security of
the one-time pad.  To specify that $\yy$
is uniformly random after execution of the QOTP (no matter what its
initial state was), we simply require that $\yy$ after the QOTP is the same as $\yy$
after $\sample\yy M$.
That is, we say the QOTP is secure iff
$\rhl{\top}{\Keygen;\Enc}{\sample\yy
  M}{\yy_1\QUANTEQ\yy_2}$.\footnote{Or alternatively, we could
  define security as
  $\rhl{\top}{\Keygen;\Enc}{\Keygen;\Enc}{\yy_1\QUANTEQ\yy_2}$
  which means that the ciphertexts $\yy_1,\yy_2$
  have the same distribution, no matter what the plaintexts (initial
  values of $\yy_1,\yy_2$)
  are.
  The difficulties described here apply in the same way
  to that definition.} (Here $\yy_1,\yy_2$
refer to the $\yy$
from the left/right program, respectively.)  Unfortunately, this does
not work.  We can show that
$\rhl{\top}{\Keygen;\Enc}{\sample\yy M}{\yy_1\QUANTEQ\yy_2}$
does not hold in qRHL.
Intuitively, the
reason is that after encrypting, $\yy_1$ is still correlated with the key $\xx_1$.
This means that it could still be decrypted to the original plaintext,
and therefore qRHL does not consider it equivalent to a uniformly random~$\yy_2$
(that is independent of any other variables).

To resolve this, we need to erase the key after encrypting. So the
definition becomes: The QOTP is secure iff
\begin{equation}
  \rhl{\top}{\Keygen;\Enc;\initc\xx{00}}{\sample\yy M}{\yy_1\QUANTEQ\yy_2}.
  \label{eq:qrhl.qotp.sec}
\end{equation}
This judgment is indeed a good definition
for the security of the QOTP. It is not hard to prove that it holds by
explicitly computing the superoperators
$\denot{\Keygen;\Enc;\initc\xx{00}}$
and $\denot{\sample\yy M}$,
and then showing \eqref{eq:qrhl.qotp.sec} directly from the semantic definition of
qRHL. In that sense, qRHL is superior to the two Hoare logic variants
above: at least we can state the security of the QOTP concisely. (And
for variants of it such as $\frac12$QOTP, similar definitions work.)

Unfortunately, it seems hard (or impossible) to derive
\eqref{eq:qrhl.qotp.sec} \emph{within} the logic.  (That is, by an
application of a sequence of reasoning rules.)  While we do not have a
proof that \eqref{eq:qrhl.qotp.sec} cannot be derived from the rules
from \cite{qrhl}, a natural proof would seem to go along the following
lines: First, we show $\rhl{\top}{\Keygen;\Enc}{\sample\yy M}\PB$
for some $\PB$,
then we show $\rhl\PB{\initc\xx{00}}\SKIP{\yy_1\QUANTEQ\yy_2}$,
and then we use the qRHL-analogue to \ruleref{Seq} to conclude
\eqref{eq:qrhl.qotp.sec}. Unfortunately, we can show that there exists
no predicate $\PB$
such that both $\rhl{\top}{\Keygen;\Enc}{\sample\yy M}\PB$
and $\rhl\PB{\initc\xx{00}}\SKIP{\yy_1\QUANTEQ\yy_2}$
are true. Hence this proof approach is doomed.

To summarize, in qRHL, while it is easy to formulate the security of
the QOTP, there are reasons to believe that the security proof is
difficult or even impossible.

\subsection{Security of the Quantum One-time Pad}
\label{sec:sec.qotp}

We will now demonstrate how to prove the security of the QOTP using
quantum Hoare logic with ghosts.  Security of the QOTP means that,
after encrypting, the variable $\yy$
is indistinguishable from a uniformly random bit, as long as the key
is not known. We formalize this by requiring that after key
generation, encryption, and subsequent deletion of the key, $\yy$
is uniformly random. As a Hoare judgment, we write this as:
\begin{equation}
  \label{eq:sec.qotp}
  \pb\hl\top{\Keygen;\Enc;\initc\xx{00}}{\uniform\yy}.
\end{equation}
The $\initc\xx{00}$
overwrites the key $\xx$
and is added to model the fact that we do not know the
key.\footnote{One might think that it should be sufficient to simply not
  mention the key in the postcondition. I.e., to define security
  as $\pb\hl\top{\Keygen;\Enc}{\uniform\yy}$.
  However, from \autoref{lemma:distrib} we know that in a state
  satisfying $\uniform\yy$, $\yy$ is uniform \emph{and independent} of
  all other variables.
  This is clearly not the case after encryption ($\yy$~is not independent of the key $\xx$).
  Thus $\pb\hl\top{\Keygen;\Enc}{\uniform\yy}$ does not hold.
} The precondition is $\top$ since we do not want to make any assumption
about the initial state of~$\yy$, i.e., about the plaintext.
(In particular, the plaintext can be entangled with other variables.)

\paragraph{Warm up.}
\fullshort{
  Before we show \eqref{eq:sec.qotp},
  we prove a weaker claim as a warm up:
}{
  Instead of showing \eqref{eq:sec.qotp},
  we prove a weaker claim for simplicity
  (the general case is shown in \autoref{app:qotp.general}):
}
\begin{equation}
  \label{eq:sec.qotp.simp}
  \forall\psi\neq0.\quad
  \pb\hl{\yy\quanteq\psi}{\Keygen;\Enc;\initc\xx{00}}{\uniform\yy}.
\end{equation}
This equation say that the QOTP is secure as long as the plaintext
$\yy$
is some (arbitraty) state $\psi$.
That is, it only guarantees security for unentangled plaintexts.
\fullshort{
  We will do the general case~\eqref{eq:sec.qotp} below,
  but the simplified case is simpler and contains already many of the needed ideas.
}{
  This simplified case is simpler and its proof contains already many
  of the ideas needed for the general case~\eqref{eq:sec.qotp}.
}

Recall that $\Keygen\ =\ \sample\xx K$
and $\Enc\ =\ \apply{\pauliX^{\xx_2}\pauliZ^{\xx_1}}{\yy}$. We first
derive a postcondition for $\Keygen;\Enc;\initc\xx{00}$ by simply
applying the reasoning rules step by step:
\begin{align*}
  & \pb\hlfrag{\yy\quanteq\psi}\
   \sample\xx K\
    \pb\hlfrag{\yy\quanteq\psi,\ \uniform\xx}
  && \text{(\ruleref{Sample})}
  \\
  &\quad \apply{\pauliX^{\xx_2}\pauliZ^{\xx_1}}{\yy} \
    \pb\hlfrag{\oppred{\pb\oponp{\UX{\paren{\pauliX^{\xx_2}\pauliZ^{\xx_1}}}\xx}{\xx\yy}}
    {\pb\paren{\yy\quanteq\psi,\ \uniform\xx}}}
  && \text{(\ruleref{Apply})}
  \\
  & \quad\initc\xx{00} \
    \pb\hlfrag{\oppred{\pb\oponp{\UX{\paren{\pauliX^{\ee_2}\pauliZ^{\ee_1}}}\ee}{\ee\yy}}
    {\pb\paren{\yy\quanteq\psi,\ \uniform\ee}},\ \xx\quanteq\ket{00},\ \class\xx}
    \hskip-1in
  \\
  &&& \text{(\ruleref{InitC})}
  \\
  & \quad\mathord\subseteq\
    \pb\hlfrag{\oppred{\pb\oponp{\UX{\paren{\pauliX^{\ee_2}\pauliZ^{\ee_1}}}\ee}{\ee\yy}}
    {\pb\paren{\yy\quanteq\psi,\ \uniform\ee}}}
\end{align*}
For the application of \ruleref{Sample}, recall that
$\uniform\xx$ is syntactic sugar for $\distr\xx K$ where $K$ is the
uniform distribution on the type of $\xx$.
For the application of \ruleref{Apply}, recall that 
$\apply{\pauliX^{\xx_2}\pauliZ^{\xx_1}}{\yy}$ is syntactic sugar for
$\apply{ \UX{\paren{\pauliX^{\xx_2}\pauliZ^{\xx_1}}}\xx }{\xx\yy}$
(\autopageref{page:applysugar}).
By rules \rulerefx{Seq} and \rulerefx{Conseq}, we immediately get
\begin{equation}
  \label{eq:qotp.pc1}
  \pb\hl{\yy\quanteq\psi}
  {\Keygen;\Enc;\initc\xx{00}}
  {\oppred{\pb\oponp{\UX{\paren{\pauliX^{\ee_2}\pauliZ^{\ee_1}}}\ee}{\ee\yy}}
    {\pb\paren{\yy\quanteq\psi,\ \uniform\ee}}}
  =: \{\PB\}.
\end{equation}
While this is not yet the final result \eqref{eq:sec.qotp.simp} we
wanted, we see that the application of the \rulerefx{InitC} rule
already achieved one important thing: Since $\xx$
was turned into a ghost, the postcondition $\PB$
refers only to the ciphertext $\yy$
and not other variables, i.e., we got rid of the dependence between
$\yy$
and $\xx$.
What is left to do is to prove that the postcondition $\PB$
implies $\uniform\yy$.

Analyzing $\PB$
involves some calculations. (This is to be expected because the QOTP
relies on the properties of the involved matrices, so we have to
calculate somewhere.)  We first unfold the syntactic sugar.
$\uniform\ee$ means $\ee\ee'\quanteq\psiDD K$ for some fresh $\ee'$. Thus
$\paren{\yy\quanteq\psi,\ \uniform\ee}=
\paren{\yy\ee\ee'\quanteq \psi\otimes\psiDD K}$. Furthermore 
$\UX{\paren{\pauliX^{\ee_2}\pauliZ^{\ee_1}}}\ee
=
\sum_{k\in K}\proj{\ket k}\otimes
{\pauliX^{k_2}\pauliZ^{k_1}}$ (see \autopageref{page:applysugar}).
Thus
\begin{align*}
  \PB &= 
        {\oppred{\pBb\oponp{
        \pB\paren{\sum_{k\in K}\proj{\ket k}\otimes
        {\pauliX^{k_2}\pauliZ^{k_1}}}}{\ee\yy}}
        {\pb\paren{\yy\ee\ee'\quanteq \psi\otimes\psiDD K}}}
  \\
      &\starrel=
        \pBb\paren{
        \yy\ee\ee' \quanteq
        \underbrace{\pB\paren{\sum\nolimits_{k}
        {\pauliX^{k_2}\pauliZ^{k_1}}\otimes\proj{\ket k}\otimes\id}
        \paren{\psi\otimes\psiDD K}}_{{}=:\phi}}.
\end{align*}
(Note that in $(*)$,
the tensor product factors in the sum are written in a different order because
the $\oponp\dots{\ee\yy}$-term
and the $(\yy\ee\ee'\quanteq\dots)$-term
list the variables in a different order.)  Since $\PB$
is now of the form $\yy\ee\ee'=\phi$,
it is amenable to rewriting using \ruleref{ShapeShift}. Furthermore,
our intended postcondition $\uniform\yy$
is syntactic sugar for $\yy\ee''\quanteq\psiDD M$,
which is also compatible with \rulerefx{ShapeShift}. Specifically, if
we can show $\partr{\ee\ee'}\proj\phi=\partr{\ee''}\proj{\psiDD M}$,
then \ruleref{ShapeShift} implies
\begin{equation}
  \label{eq:qotp.ss}
  \PB = \paren{\yy\ee\ee'\quanteq\phi}
  \impl \paren{\yy\ee\quanteq\psiDD M}
  = \uniform\yy.
\end{equation}
(Recall that $M$ is the uniform distribution on the type of $\yy$,
i.e., on $\bit$.)
We now show  $\partr{\ee\ee'}\proj\phi=\partr{\ee''}\proj{\psiDD M}$
by computation.
Since $\psiDD K=\sum_{l\in K}\frac12\ket l\otimes\ket l$ by
definition, we have:
\[
  \phi =
  {\sum_{kl}\tfrac12
    {\pauliX^{k_2}\pauliZ^{k_1}}
    \psi\otimes\proj{\ket k}\,\ket l_{\ee}\otimes\ket l_{\ee'}}
  =
  {\sum_{k}\tfrac12
    {\pauliX^{k_2}\pauliZ^{k_1}}
    \psi\otimes\ket k_{\ee}\otimes\ket k_{\ee'}}.
\]
Thus
\begin{align*}
  \partr{\ee\ee'}\proj\phi
  &=
    \partr{\ee\ee'}\sum_{kk'}\tfrac14
    {\pauliX^{k_2}\pauliZ^{k_1}}\proj\psi
    \adjp{\pauliX^{k'_2}\pauliZ^{k'_1}}
    \otimes
    \ket k\bra{k'}_{\ee}
    \otimes
    \ket k\bra{k'}_{\ee'} \\
  &\starrel=
    \sum_{k}\tfrac14\,
    {\pauliX^{k_2}\pauliZ^{k_1}}\proj\psi
    \adjp{\pauliX^{k_2}\pauliZ^{k_1}}
    =
    \sum_{k}\tfrac14\,
    \proj{\pauliX^{k_2}\pauliZ^{k_1}\psi}.
\end{align*}
Here $(*)$
follows from the facts that
$\partr{\ee\ee'}\sigma\otimes\tau = \sigma\,\tr\tau$
and that $\tr\ket k\bra{k'}=1$
if $k=k'$
and $=0$
otherwise.  Without loss of generality, we can assume that
$\norm{\psi}=1$
(because the predicate $\yy\quanteq\psi$ does not change if we
multiply $\psi$ with a nonzero scalar.)
Thus $\psi=\begin{tinymatrix}\alpha\\\beta
\end{tinymatrix}
$ for some $\alpha,\beta\in\setC$
with $\alpha\adj\alpha+\beta\adj\beta=1$.
Then $\sum_{k}\tfrac14\, \proj{\pauliX^{k_2}\pauliZ^{k_1}\psi}$
can be explicitly computed (a sum of four $2\times 2$-matrices),
and simplifies to $\frac12\id$.
Furthermore, we easily compute that
$\partr{\ee''}\psiDD M=\frac12\id$.
Thus $\partr{\ee\ee'}\proj\phi=\partr{\ee''}\proj{\psiDD M}$.
Hence \eqref{eq:qotp.ss} follows by \ruleref{ShapeShift}.  From
\eqref{eq:qotp.pc1}, \eqref{eq:qotp.ss}, with \ruleref{Seq}, we get
\eqref{eq:sec.qotp.simp}.  This shows the security of the QOTP in the
special case that the plaintext is unentangled.

\paragraph{General case.}
We have shown the security of the QOTP in
the special case \eqref{eq:sec.qotp.simp} that the plaintext is not entangled with anything
else but is in a fixed but arbitrary state $\psi$.
We now show the general case \eqref{eq:sec.qotp}.
To do so, we first show something similar to the special case
\eqref{eq:sec.qotp.simp}, namely that the QOTP is secure when the
plaintext $\yy$ and one further variable $\zz$ are in a fixed state
$\psi$.
(And, for technical reasons we also include the variable $\xx$,
but that variable is less interesting since it is overwritten by $\Keygen$.)
Formally,
\begin{equation}
  \label{eq:sec.qotp2}
  \forall\psi\neq0.\quad
  \pb\hl{\yy\zz\xx\quanteq\psi}{\Keygen;\Enc;\initc\xx{00}}{\uniform\yy}.
\end{equation}
Here $\zz$
is a program variable of infinite cardinality (e.g., of type integer).
Intuitively, this already means that the QOTP is secure when the
plaintext $\yy$ is entangled.
And indeed, the general case \eqref{eq:sec.qotp} then is an immediate consequence
of \eqref{eq:sec.qotp2} and \ruleref{Universe} (with $\XX:=\xx\yy$,
$\xx:=\zz$, $\EE:=\UU:=\varnothing$, $\PA:=\top$, $\PB:=\uniform\yy$).

We are left to show \eqref{eq:sec.qotp2}. This is done similarly to
\eqref{eq:sec.qotp.simp}, except that the computations are a bit more
complex. First, we have
\begin{align*}
  & \pb\hlfrag{\yy\zz\xx\quanteq\psi}\
   \sample\xx K\
    \pb\hlfrag{\yy\zz\ee''\quanteq\psi,\ \uniform\xx}
  && \text{(\ruleref{Sample})}
  \\
  &\quad \apply{\pauliX^{\xx_2}\pauliZ^{\xx_1}}{\yy} \
    \pb\hlfrag{\oppred{\pb\oponp{\UX{\paren{\pauliX^{\xx_2}\pauliZ^{\xx_1}}}\xx}{\xx\yy}}
    {\pb\paren{\yy\zz\ee''\quanteq\psi,\ \uniform\xx}}}
  && \text{(\ruleref{Apply})}
  \\
  & \quad\initc\xx{00} \
    \pb\hlfrag{\oppred{\pb\oponp{\UX{\paren{\pauliX^{\ee_2}\pauliZ^{\ee_1}}}\ee}{\ee\yy}}
    {\pb\paren{\yy\zz\ee''\quanteq\psi,\ \uniform\ee}},\ \xx\quanteq\ket{00},\ \class\xx}
    \hskip-1in
  \\
  &&& \text{(\ruleref{InitC})}
  \\
  & \quad\mathord\subseteq\
    \pb\hlfrag{\oppred{\pb\oponp{\UX{\paren{\pauliX^{\ee_2}\pauliZ^{\ee_1}}}\ee}{\ee\yy}}
    {\pb\paren{\yy\zz\ee''\quanteq\psi,\ \uniform\ee}}}.
\end{align*}
Thus with \ruleref{Seq} and \ruleref{Conseq}:
\begin{equation}
  \label{eq:qotp.pc1'}
  \pb\hl{\yy\zz\xx\quanteq\psi}
  {\Keygen;\Enc;\initc\xx{00}}
  {\oppred{\pb\oponp{\UX{\paren{\pauliX^{\ee_2}\pauliZ^{\ee_1}}}\ee}{\ee\yy}}
    {\pb\paren{\yy\zz\ee''\quanteq\psi,\ \uniform\ee}}}
  =: \{\PB'\}.
\end{equation}
As in the special case, we unfold syntactic
sugar and we get:
\begin{equation*}
  \PB' =
  \pBb\paren{
    \yy\zz\ee''\ee\ee' \quanteq
    \underbrace{\pB\paren{\sum_{k\in K}
        {\pauliX^{k_2}\pauliZ^{k_1}}\otimes\id_{\zz\ee''}\otimes\proj{\ket k_{\ee}}\otimes\id_{\ee'}}
      \paren{\psi\otimes\psiDD K}}_{{}=:\phi'}}.
\end{equation*}
Quite analogous to the special case, we compute
\begin{equation}
  \phi'
  = \sum_k \tfrac12\paren{
    X^{k_2}Z^{k_1}\otimes\id_{\zz\ee''}}\psi\otimes\ket k_{\ee}\otimes\ket k_{\ee'}
  \quad\text{and}\quad
  \partr{\ee\ee'}\proj{\phi'}
  =
  \sum_k\tfrac14\pb\proj{\paren{
      X^{k_2}Z^{k_1}\otimes\id_{\zz\ee''}}\psi}.
  \label{eq:gen.ee''}
\end{equation}
(We will additionally need to trace out $\ee''$,
but the computation is easier if we do not do that yet.)  We can write
$\psi$
as $\ket0_{\yy}\otimes\psi_0+\ket1_{\yy}\otimes\psi_1$
for some $\psi_0,\psi_1$ over $\zz\ee''$. By substituting this in the
rhs of the second equation in \eqref{eq:gen.ee''}, and multiplying out
and canceling terms, we get
\begin{equation*}
\partr{\ee\ee'}\proj{\phi'}
=
\tfrac12\pb\paren{\proj{\ket0}_{\yy}+\proj{\ket1}_{\yy}}\otimes\pb\paren{\underbrace{\proj{\psi_0}+\proj{\psi_1}}_{{}=:\rho_{\zz\ee''}}}
= \tfrac12\id_{\yy}\otimes\rho_{\zz\ee''}.
\end{equation*}
Let $\ee_{\yy},\ee_{\zz}$
be additional entangled ghosts. Then
$\partr{\ee_{\yy}}\proj{\psiDD M}=\tfrac12\id_{\yy}$
(if we interpret $\psiDD M$
as a quantum memory over $\yy\ee_{\yy}$).
And there exists a $\gamma$
over $\zz\ee'\ee_{\zz}$
such that $\partr{\ee_{\zz}}\proj\gamma=\rho_{\zz\ee''}$.
Thus
$\partr{\ee\ee'}\proj{\phi'} = \partr{\ee_{\yy}\ee_{\zz}}\proj{\psiDD
  M \otimes \gamma}$. Hence
$\partr{\ee''\ee\ee'}\proj{\phi'} = \partr{\ee''\ee_{\yy}\ee_{\zz}}\proj{\psiDD
  M \otimes \gamma}$.
Using \ruleref{ShapeShift} for $(*)$, we thus have
\begin{multline*}
  \PB'
  = \paren{\yy\zz\ee''\ee\ee' \quanteq \phi'}
  \starrel\impl
  \paren{\yy\ee_{\yy}\zz\ee''\ee_{\zz} \quanteq \psiDD M\otimes\gamma} \\
  =
  \paren{\yy\ee_{\yy}\quanteq\psiDD M,\ \zz\ee''\ee_{\zz} \quanteq
    \gamma}
  \subseteq 
  \paren{\yy\ee_{\yy}\quanteq\psiDD M}
  =
  \uniform\yy
  .
\end{multline*}
Then \eqref{eq:sec.qotp2} follows with \eqref{eq:qotp.pc1'},
\ruleref{Seq} and \ruleref{Conseq}.  And, as mentioned above, the
general case \eqref{eq:sec.qotp} is an immediate consequence of
\eqref{eq:sec.qotp2} and \ruleref{Universe}. This shows the security
of the QOTP.

\appendix

\renewcommand\symbolindexentry[4]{
  \noindent\hbox{\hbox to 2in{$#2$\hfill}\parbox[t]{3.5in}{#3}\hbox to 1cm{\hfill #4}}\\[2pt]}

\clearpage
\printsymbolindex

\printindex  

\printbibliography

\end{document}
